\newtheorem*{rep@theorem}{\rep@title}
\newcommand{\newreptheorem}[2]{%
\newenvironment{rep#1}[1]{%
 \def\rep@title{#2 \ref{##1}}%
 \begin{rep@theorem}}%
 {\end{rep@theorem}}}
\newtheorem*{theorem*}{Theorem}
\renewcommand{\P}{\mathcal{P}}
\newcommand{\R}{\mathbb{R}} 
\newcommand{\sR}{\mathsf{R}} 
\newcommand{\F}{\mathcal{F}}
\newcommand{\N}{\mathcal{N}}
\newcommand{\E}{\mathbb{E}}
\renewcommand{\part}[2]{\frac{\partial #1}{\partial #2}}
\newcommand{\HS}{\mathrm{HS}}
\newcommand{\Tr}{\mathsf{Tr}}
\newcommand{\Var}{\mathrm{Var}}
\newcommand{\KL}{\mathsf{KL}}
\newcommand{\FI}{\mathsf{FI}}
\newcommand{\grad}{\mathsf{grad}}
\newcommand{\Hess}{\mathsf{Hess}}
\newcommand{\sym}{\mathrm{sym}}
\newcommand{\ess}{\mathrm{ess}}
\newcommand{\red}[1]{{\color{red}{#1}}}
\newcommand{\blue}[1]{{\color{blue}{#1}}}
\newcommand{\purple}[1]{{\color{purple}{#1}}}
\title[Mixing Time of the Proximal Sampler in Relative Fisher Information via SDPI]
{Mixing Time of the Proximal Sampler in Relative Fisher Information \\
via Strong Data Processing Inequality}
\begin{document}

\maketitle

\begin{abstract}%
    We study the mixing time guarantee for sampling in relative Fisher information via the Proximal Sampler algorithm, which is an approximate proximal discretization of the Langevin dynamics. We show that when the target probability distribution is strongly log-concave, the relative Fisher information converges exponentially fast along the Proximal Sampler; this matches the exponential convergence rate of the relative Fisher information along the continuous-time Langevin dynamics for strongly log-concave target.
    When combined with a standard implementation of the Proximal Sampler via rejection sampling, this exponential convergence rate provides a high-accuracy iteration complexity guarantee for the Proximal Sampler in relative Fisher information when the target distribution is strongly log-concave and log-smooth.
    Our proof proceeds by establishing a strong data processing inequality for relative Fisher information along the Gaussian channel under strong log-concavity, and a data processing inequality along the reverse Gaussian channel for a special distribution.
    The forward and reverse Gaussian channels compose to form the Proximal Sampler, and these data processing inequalities imply the exponential convergence rate of the relative Fisher information along the Proximal Sampler.\footnote{Extended abstract accepted for presentation at Conference on Learning Theory (COLT) 2025}
\end{abstract}

\begin{keywords}%
  Mixing time, Fisher information, Langevin dynamics, data processing inequality
\end{keywords}

\setcounter{tocdepth}{2}
\tableofcontents

\section{Introduction}
\label{Sec:Introduction}

In this paper we establish the mixing time guarantee of the Proximal Sampler algorithm for sampling from a strongly log-concave target distribution with error guarantee in relative Fisher information.
We describe the background from sampling and the analogy with optimization in Section~\ref{Sec:IntroSampling}.
Our technical contribution is to establish a strong data processing inequality in relative Fisher information under strong log-concavity, which we describe in Section~\ref{Sec:IntroSDPI}.

\subsection{Mixing Time of the Proximal Sampler in Relative Fisher Information}
\label{Sec:IntroSampling}

Sampling from a probability distribution is a fundamental algorithmic task in many applications, including in computer science, machine learning, and Bayesian statistics.
Sampling algorithms typically take the form of random walks or Markov chains which have the target probability distribution as their stationary distribution.
The \textit{mixing time} of the Markov chain, i.e., how fast the Markov chain converges to its stationary distribution, determines the iteration complexity of the sampling algorithm.
For many applications, we need mixing time guarantees where the error is measured in a certain metric or statistical divergence, such as in the total variation (TV) distance, Wasserstein distance, Kullback-Leibler (KL) divergence, chi-square ($\chi^2$) divergence, R\'enyi divergence, or even in the general $\Phi$-divergence~\citep{chafai2004entropies}, which broadly generalizes the TV distance, KL divergence, and $\chi^2$-divergence.
Thus, there have been great interests in developing Markov chains with fast mixing time guarantees in various divergences which can be implemented efficiently in practice.

For target distributions supported on a continuous domain such as the Euclidean space, many sampling algorithms can be derived as discretizations of continuous-time stochastic processes such as the  Langevin dynamics; these include the Unadjusted Langevin Algorithm (ULA)~\citep{roberts1996exponential} and the Proximal Sampler algorithm~\citep{LST21}, which we also study in this paper.
The Langevin dynamics for sampling has a natural optimization interpretation as the gradient flow for minimizing KL divergence in the space of probability distributions~\citep{JKO98}.
This perspective of sampling as optimization is useful to translate the tools and techniques from optimization to sampling, to derive and analyze sampling algorithms in discrete time, and to suggest sampling guarantees one may hope to obtain by drawing parallels from optimization; see~\citep{W18}, and Section~\ref{Sec:SamplingAsOpt} for a review.

When the target distribution is well-behaved, e.g., when it is strongly log-concave 
or satisfies isoperimetric conditions such as the log-Sobolev inequality (LSI), existing results in the literature have established mixing time guarantees in various divergences.
In particular, for strongly log-concave target distributions, the Langevin dynamics in continuous time is known to converge exponentially fast in Wasserstein distance and KL divergence~\citep{bakry1985diffusions}, in R\'enyi divergence~\citep{cao2019exponential}, and in all $\Phi$-divergence~\citep{chafai2004entropies}.
In discrete time, the convergence guarantees are more subtle due to a potential bias issue, since some algorithms may have stationary distributions which are different from the target distribution.

A simple discretization of the Langevin dynamics is the \textit{Unadjusted Langevin Algorithm (ULA)}, which is an explicit algorithm that is easy to implement in practice.
However, ULA is \textit{biased}, which means for each fixed step size, it converges to a biased limiting distribution~\citep{roberts1996exponential}; this bias can be seen as arising from using a mismatched discretization method for discretizing the Langevin dynamics when viewed as a gradient flow for optimizing a composite objective function~\citep{W18}.
When the target distribution is strongly log-concave and log-smooth, one can show the biased convergence guarantees of ULA to the target distribution in Wasserstein distance~\citep{dalalyan17further,dalalyan2017theoretical,durmus2017nonasymptotic}, in KL divergence~\citep{cheng2018convergence,durmus2019analysis}, or in R\'enyi divergence~\citep{chewi2024analysis};
one can also show the exponential convergence guarantees of ULA to its biased stationary distribution in KL and R\'enyi divergence~\citep{VW19, AltschulerTalwar23}, and in $\Phi$-divergence~\citep{mitra2025fast}.
However, the presence of the bias only yields a ``low-accuracy'' iteration complexity guarantee for ULA, where the number of iterations to reach an error $\varepsilon$ in the specified divergence scales polynomially in $\varepsilon^{-1}$, rather than logarithmically as suggested by the exponential convergence in continuous time.\footnote{
One way to remove the bias of ULA is to add a Metropolis-Hastings filter or the accept-reject step, resulting in the Metropolis-Adjusted Langevin Algorithm (MALA)~\citep{roberts1996exponential}.
Since MALA is unbiased, its iteration complexity has the desired logarithmic dependency on the error parameter; however, the convergence guarantees of MALA are only known in weaker metrics such as in the total variation distance~\citep{dwivedi2019log,chen2020fast,chewi2021optimal,wu2022minimax,chen2023simple}.
}

Another algorithm that implements the Langevin dynamics is the \textit{Proximal Sampler}; it was derived as a Gibbs sampling algorithm for sampling from a joint distribution defined on an extended phase space~\citep{titsias2018auxiliary,LST21}, and it can be viewed as an approximate proximal discretization of the Langevin dynamics~\citep{LST21,CCSW22}.
The Proximal Sampler is unbiased, so it converges to the correct target distribution.
When the target distribution is strongly log-concave, one can show an exponential convergence guarantee of the Proximal Sampler in Wasserstein distance~\citep{LST21}, in KL and R\'enyi divergence~\citep{CCSW22}, and in all $\Phi$-divergence~\citep{mitra2025fast}.
The Proximal Sampler is an implicit algorithm, since in each iteration it requires sampling from a regularized distribution, known as the \textit{Restricted Gaussian Oracle (RGO)}~\citep{LST21}; when the target distribution is log-smooth, the RGO can be implemented efficiently via rejection sampling~\citep{CCSW22}. 
Thus, for strongly log-concave and log-smooth target distributions, the Proximal Sampler has a ``high-accuracy'' iteration complexity guarantee that scales logarithmically with the inverse error parameter, as expected from the exponential convergence of the continuous-time Langevin dynamics.

\paragraph{Our work.}
In this paper, we study mixing time guarantees for sampling in \textit{relative Fisher information} via the Proximal Sampler.
The relative Fisher information between two distributions measures the expected square distance between their score functions (gradient of the log-density);
in contrast, KL divergence or $\Phi$-divergence only compares their log-density or the ratio of their density functions.
Relative Fisher information is not a $\Phi$-divergence, so previous results for $\Phi$-divergence do not apply.
Relative Fisher information is often larger than other divergences:
when the target distribution satisfies LSI or Poincar\'e inequality, relative Fisher information is larger than KL divergence or the squared TV distance, respectively.
Even when the target distribution is strongly log-concave, the gap between KL divergence and relative Fisher information can be arbitrarily large; see Lemma~\ref{Lem:LargeFI} in Section~\ref{Sec:LargeFIEx} for an example.
Thus, mixing time guarantees in relative Fisher information are valuable because they are generally stronger than guarantees in KL divergence or TV distance.

\paragraph{Analogy with optimization.}
Another motivation for our work is to strengthen the links between sampling and optimization,
in particular via the perspective of sampling as optimization.
We can view sampling from a target distribution $\nu$ over $\R^d$ 
as the problem of optimizing the KL divergence $\KL(\cdot \,\|\, \nu)$ on the space of probability distributions $\P(\R^d)$,
and the gradient flow under the Wasserstein metric for optimizing KL divergence is the Langevin dynamics~\citep{JKO98}.
Relative Fisher information has the interpretation as the squared Wasserstein gradient norm of the KL divergence.
In the Wasserstein geometry, conditions on the distribution $\nu$ translate to properties of the KL divergence that enable fast optimization; e.g., the strong log-concavity of $\nu$ corresponds to the geodesic strong convexity of the KL divergence, while the LSI on $\nu$ corresponds to the Polyak-\L{}ojaciewicz or gradient domination condition of KL divergence~\citep{OV00}.
See Section~\ref{Sec:SamplingAsOpt} for further review.

In strongly convex optimization, we can obtain exponential convergence guarantees in both the function value and in the squared gradient norm, in continuous time via the gradient flow dynamics and in discrete time via the proximal gradient algorithm (see Section~\ref{Sec:ReviewOpt} for a review).
Then for sampling a strongly log-concave target distribution, 
it is natural to expect exponential convergence guarantees in both the KL divergence (the objective function), as well as in the relative Fisher information (the squared gradient norm).
The exponential convergence guarantees in KL divergence are known both for the Langevin dynamics in continuous time (see Section~\ref{Sec:LangevinKLRate}) and for the Proximal Sampler in discrete time (see Section~\ref{Sec:ProximalSamplerReviewKL}).
When the target distribution is strongly log-concave, the Langevin dynamics is also known to converge exponentially fast in relative Fisher information (see Section~\ref{Sec:LangevinReview}).
However, in discrete time, the corresponding exponential convergence guarantee in relative Fisher information seems to be unknown; this is what we fill out in this work.
Indeed, we show that 
the Proximal Sampler has an exponential convergence guarantee in relative Fisher information when the target distribution is strongly log-concave (Theorem~\ref{Thm:FIProximalRate}).
This implies a high-accuracy iteration complexity guarantee of the Proximal Sampler in relative Fisher information for log-smooth and strongly log-concave target distribution (Corollary~\ref{Cor:IterationComplexity}).

\paragraph{Related work.}
A few works have studied mixing time in relative Fisher information as a sampling analogue of the gradient norm convergence in optimization.
\cite{balasubramanian22a} study mixing time in relative Fisher information for sampling from non-log-concave and log-smooth target distributions, analogous to finding approximate stationary points in non-convex optimization;
they study the biased convergence of ULA in relative Fisher information, and provide a low-accuracy iteration complexity guarantee on the average iterate of ULA.
\cite{chewi2023fisher} study the iteration complexity lower bounds for relative Fisher information in non-log-concave sampling, analogous to lower bounds for finding stationary points in non-convex optimization.
We note that~\cite[Appendix~A.2]{chewi2023fisher} provide an argument for obtaining a high-accuracy mixing time in relative Fisher information for log-concave target distributions, involving a reduction from the high-accuracy mixing time of the Proximal Sampler in $\chi^2$-square divergence from~\citep{CCSW22} and a post-processing via the heat flow.
In this work, we give a direct proof of the exponential convergence of the Proximal Sampler in relative Fisher information for strongly log-concave target distributions.
We provide further discussion on related works in Section~\ref{Sec:RelatedWorks}.

\subsection{Strong Data Processing Inequality in Relative Fisher Information}
\label{Sec:IntroSDPI}

\textit{Data processing inequality} (DPI) is
a fundamental tenet in information theory which states that information is never increasing along any noisy channel~\citep{cover2006elements}.
Here ``information'' is measured via statistical divergences such as the KL divergence or any $\Phi$-divergence, and their monotone transformations including R\'enyi divergence.
%
%
The {\em strong data processing inequality} (SDPI) is a strengthening of DPI that quantifies the rate at which information is contracting along the noisy channel, where the contraction rate 
may depend on properties of the channel and the input distributions.
For a review of SDPIs and their applications, see e.g.~\citep{raginsky2016strong,polyanskiy2017strong,caputo2024entropy,PW_book}.

SDPIs are useful because many sampling algorithms can be viewed as composition of noisy channels.
In continuous time, the Langevin dynamics can be viewed as a channel driven by a diffusion process.
In discrete time, ULA can be viewed as the composition of a deterministic gradient map and a Gaussian channel, while the Proximal Sampler can be viewed as the composition of the forward and reverse Gaussian channels.
If we have an SDPI for each channel in the algorithm, then we immediately obtain a mixing time guarantee for the sampling algorithm by iterating the contraction rate across multiple steps of the algorithm.
This strategy has been successful in establishing the convergence guarantees of ULA and the Proximal Sampler in KL and R\'enyi divergence, e.g., by~\citep{VW19,CCSW22,YLFWC23,kook2024inandout}, in $\Phi$-divergence by~\citep{mitra2025fast}, and in mutual information by~\citep{LMW24}.

For our goal of proving mixing time guarantees in relative Fisher information, ideally we want to establish SDPIs in relative Fisher information.
Relative Fisher information is not a $\Phi$-divergence, nor a monotone tranformation of it.
It seems unknown whether DPI holds for relative Fisher information.\footnote{
We work with the non-parametric relative Fisher information, which has a natural geometric interpretation as the squared gradient norm of KL divergence in Wasserstein metric; for the classical parametric Fisher information, some form of DPI is known, see e.g.~\citep{zamir1998proof} and also Section~\ref{Sec:RelatedWorksFI} for further discussion. 
We also note that one way to prove DPI for $\Phi$-divergence is to use the joint convexity of the $\Phi$-divergence and Jensen's inequality;
this fails since relative Fisher information is convex in its first argument, but not convex in its second argument~\citep{W11}.}
In fact, we show that DPI in relative Fisher information does not always hold in general, by constructing a counterexample for the Gaussian channel (see Example~\ref{Ex:CounterDecrHeat}).

To proceed, we focus on the \textit{Fokker-Planck (FP) channels}, a family of noisy channels generated by continuous-time diffusion processes (see Section~\ref{Sec:FPChannel}); this includes the Langevin dynamics, the Gaussian channel, and the reverse Gaussian channel, which compose the Proximal Sampler.
One way to prove SDPIs for $\Phi$-divergence along the FP channels is via time-differentiation of the $\Phi$-divergence along the simultaneous evolution along the FP channel
(see Lemma~\ref{Lem:TimeDerivativeKL} for a review in the case of KL divergence).
We extend this technique to analyze the time derivative of the relative Fisher information along a simultaneous evolution along the FP channel.
Under log-concavity assumptions on one of the input distributions, we can extract DPI and SDPI results for relative Fisher information.

\subsection{Contributions}

Our main algorithmic result is the exponential convergence of the Proximal Sampler in relative Fisher information when the target distribution is strongly log-concave (Theorem~\ref{Thm:FIProximalRate}).
We combine this with a standard rejection sampling implementation of the Proximal Sampler to conclude a high-accuracy iteration complexity guarantee of the Proximal Sampler in relative Fisher information for log-smooth and strongly log-concave target distribution (Corollary~\ref{Cor:IterationComplexity}).

Our main technical contribution is to establish a formula for the time derivative of relative Fisher information along a simultaneous application of the Fokker-Planck channel (Lemma~\ref{Lem:TimeDerivativeFI}).
We use this to show that along the Gaussian channel, (S)DPI in relative Fisher information holds when the second distribution is (strongly) log-concave.
We also show that along the Gaussian channel,
SDPI in relative Fisher information eventually holds when the second distribution is a log-Lipschitz perturbation of a strongly log-concave distribution (Theorem~\ref{Thm:FisherInfoHeatFlow}).
However, in general, DPI in relative Fisher information may not hold. 
We show an example when DPI initially
does not hold along the Gaussian channel even when the first distribution is a Gaussian (Example~\ref{Ex:CounterDecrHeat}).

We also show that along the Ornstein-Uhlenbeck (OU) channel, SDPI in relative Fisher information eventually holds when the second distribution is strongly log-concave (Theorem~\ref{Thm:FisherInfoOUFlow}).
However, we also show an example when DPI in relative Fisher information initially does not hold along the OU channel, even when both input distributions are Gaussians (Example~\ref{Ex:SDPIOU}).

\section{Preliminaries and Reviews}
\label{Sec:Preliminary}

\subsection{Notations}
\label{Sec:Notations}

In this paper, we are working on the Euclidean space $\R^d$ for some $d \ge 1$.
We denote the $\ell_2$-inner product between vectors $u = (u_1,\dots,u_d),v = (v_1,\dots,v_d) \in \R^d$ as $u^\top v = \langle u,v \rangle = \sum_{i=1}^d u_i v_i$.
For a symmetric matrix $A \in \R^{d \times d}$, the notation $A \succeq 0$ means $A$ is positive semidefinite, i.e., $u^\top A u \ge 0$ for all $u \in \R^d$.
For symmetric matrices $A, B \in \R^{d \times d}$, the notation $A \succeq B$ means $A-B \succeq 0$ is positive semidefinite.
Throughout, let $I \in \R^{d \times d}$ denote the identity matrix.

For a matrix $A \in \R^{d \times d}$, let $\Tr(A) = \sum_{i=1}^d A_{ii}$ be the trace of $A$.
Let $\|A\|_{\HS}^2 = \Tr(A^\top A)$ denote the squared Hilbert-Schmidt (Frobenius) norm of $A$.
For $u \in \R^d$ and $A \in \R^{d \times d}$, we denote 
$$\|u\|^2_A := u^\top A u$$ 
to be the ``squared norm'' of a vector $u \in \R^d$ with respect to the matrix $A$.
If $A \succeq 0$, then $\|u\|^2_A \ge 0$ for all $u \in \R^d$.
However, if $A$ is not positive semidefinite, then $\|u\|^2_A$ may be negative for some $u$.
Note $\|u\|^2_A$ only depends on the symmetric part $A_{\textsf{sym}} = \frac{1}{2}(A+A^\top)$ of $A$, i.e., $\|u\|^2_A = \|u\|^2_{A_\sym}$.

For a differentiable function $f \colon \R^d \to \R$, let $\nabla f(x) \in \R^d$ denote the gradient vector at $x \in \R^d$ of the partial derivatives: $(\nabla f(x))_i = \part{f(x)}{x_i}$.
Let $\nabla^2 f(x) \in \R^{d \times d}$ be the Hessian matrix of second partial derivatives: $(\nabla^2 f(x))_{i,j} = \part{^2 f(x)}{x_i \, x_j}$.
Let $\Delta f(x) = \Tr(\nabla^2 f(x))$ be the Laplacian.

For a vector field $v \colon \R^d \to \R^d$ with $v(x) = (v_1(x),\dots,v_d(x)) \in \R^d$,
let $\nabla v \colon \R^d \to \R^{d \times d}$ be the Jacobian matrix of mixed partial derivatives: $(\nabla v(x))_{i,j} = \part{v_i(x)}{x_j}$.
Let $\nabla \cdot v \colon \R^d \to \R$ be the divergence of $v$, defined by
$(\nabla \cdot v)(x) = \sum_{i=1}^d \part{v_i(x)}{x_i} = \Tr(\nabla v(x))$.

\subsection{Relative Entropy and Relative Fisher information}

Let $\P(\R^d)$ denote the space of probability distributions $\rho$ over $\R^d$ which are absolutely continuous with respect to the Lebesgue measure and have a finite second moment $\E_{\rho}[\|X\|^2] < \infty$.
We identify a probability distribution $\rho \in \P(\R^d)$ with its probability density function with respect to the Lebesgue measure, which we also denote by $\rho \colon \R^d \to \R$, so $\rho(x) > 0$ and $\int_{\R^d} \rho(x) dx = 1$.
We say $\rho$ is absolutely continuous with respect to another distribution $\nu$, denoted by $\rho \ll \nu$, if $\nu(A) = 0$ implies $\rho(A) = 0$ for any $A \subseteq \R^d$; if $\rho$ and $\nu$ both have density functions, then $\rho \ll \nu$ means $\nu(x) = 0$ implies $\rho(x) = 0$ for all $x \in \R^d$.
Throughout, we assume $\rho$ and $\nu$ have full support $\R^d$ and their density functions have the required regularity properties for the various operations below to be well-defined.

Let $H \colon \P(\R^d) \to \R$ be the (differential) \textit{entropy} functional:
$$H(\rho) = -\E_\rho[\log \rho] = -\int_{\R^d} \rho(x) \, \log \rho(x) \, dx.$$

Let $J \colon \P(\R^d) \to \R$ be the \textit{Fisher information}:
$$J(\rho) = \E_\rho\left[\left\|\nabla \log \rho \right\|^2\right] = -\E_\rho\left[\Delta \log \rho \right],$$
and we define $J(\rho) = +\infty$ if $\rho$ does not have a differentiable density.
The second equality in the definition of $J(\rho)$ above follows by integration by parts assuming the boundary terms are zero.

For probability distributions $\rho \ll \nu$ on $\R^d$, the \textit{Kullback-Leibler (KL) divergence} or the \textit{relative entropy} of $\rho$ with respect to $\nu$ is defined by:
$$\KL(\rho\,\|\,\nu) = \E_\rho\left[\log \frac{\rho}{\nu}\right] = \int_{\R^d} \rho(x) \log \frac{\rho(x)}{\nu(x)} \, dx.$$
If $\rho$ and $\nu$ have differentiable density functions, then the {\em relative Fisher information} of $\rho$ with respect to $\nu$ is defined by:
$$\FI(\rho\,\|\,\nu) = \E_\rho\left[\left\|\nabla \log \frac{\rho}{\nu}\right\|^2\right] = \int_{\R^d} \rho(x) \, \left\|\nabla \log \frac{\rho(x)}{\nu(x)}\right\|^2 \, dx.$$
We note that the KL divergence $\KL$ is the Bregman divergence of the negative entropy $-H$, and similarly, the relative Fisher information $\FI$ is the Bregman divergence of the Fisher information $J$.

\subsection{Strong Log-Concavity, Log-Sobolev Inequality, and Poincar\'e Inequality}

We say a probability distribution $\nu \in \P(\R^d)$ is {\em $\alpha$-strongly log-concave} ($\alpha$-SLC) for some $\alpha > 0$ if $-\log \nu \colon \R^d \to \R$ is an $\alpha$-strongly convex function; if $\log \nu$ is twice differentiable, then this is equivalent to $-\nabla^2 \log \nu(x) \succeq \alpha I$ for all $x \in \R^d$.
We say $\nu$ is {\em (weakly) log-concave} if $-\log \nu$ is a convex function, or $-\nabla^2 \log \nu(x) \succeq 0$ for all $x \in \R^d$.

We say $\nu$ is \textit{$L$-log-smooth} for some $0 < L < \infty$ if $-LI \preceq -\nabla^2 \log \nu(x) \preceq LI$ for all $x \in \R^d$.

We say $\nu$ satisfies the {\em $\alpha$-log-Sobolev inequality} ($\alpha$-LSI) for some $\alpha > 0$ if for any probability distribution $\rho$, the following inequality holds:
$$\FI(\rho\,\|\,\nu) \ge 2\alpha \, \KL(\rho\,\|\,\nu).$$
We recall by the Bakry-\'Emery criterion that if $\nu$ is $\alpha$-SLC, then $\nu$ satisfies $\alpha$-LSI~\citep{bakry1985diffusions,villani2009optimal}.
We note that even under SLC, the gap between KL divergence and relative Fisher information can be arbitrarily large; see Lemma~\ref{Lem:LargeFI} in Section~\ref{Sec:LargeFIEx} for an example.

We say $\nu$ satisfies the {\em $\alpha$-Poincar\'e inequality} ($\alpha$-PI) with $\alpha > 0$ if the following inequality holds for any smooth function $\phi \colon \R^d \to \R$:
$$\Var_\nu(\phi) \le \frac{1}{\alpha} \E_\nu[\|\nabla \phi\|^2].$$
Equivalently, for any smooth vector field $\psi \colon \R^d \to \R^d$, we have
$\Var_\nu(\psi) \le \frac{1}{\alpha} \E_\nu[\|\nabla \psi\|_{\HS}^2].$
We recall that if $\nu$ satisfies $\alpha$-LSI, then $\nu$ also satisfies $\alpha$-PI~\citep{villani2009optimal}.

We say a probability distribution $\nu$ on $\R^d$ is {\em symmetric} if $\nu(x) = \nu(-x)$ for all $x \in \R^d$.

\subsection{Fokker-Planck Channels}
\label{Sec:FPChannel}

We consider the continuous-time stochastic process $(X_t)_{t \ge 0}$ on $\R^d$ which evolves following the stochastic differential equation (SDE):
\begin{align}\label{Eq:SDE}
    dX_t = b_t(X_t) \, dt + \sqrt{c} \, dW_t
\end{align}
where $b_t \colon \R^d \to \R^d$ is a Lipschitz and twice continuously-differentiable vector field, $c \ge 0$ is a constant, and $(W_t)_{t \ge 0}$ is the standard Brownian motion in $\R^d$ that starts from $W_0 = 0$.
Recall that if $X_t \sim \rho_t$ evolves following the SDE~\eqref{Eq:SDE}, then its probability density function $\rho_t \colon \R^d \to \R$ evolves via the the {\em Fokker-Planck equation}, which is the following partial differential equation (PDE):
\begin{align}\label{Eq:FP}
    \partial_t \rho_t = -\nabla \cdot (\rho_t b_t) + \frac{c}{2} \Delta \rho_t.
\end{align}
We refer to either the SDE~\eqref{Eq:SDE} or the PDE~\eqref{Eq:FP} as the \textit{Fokker-Planck (FP) channel}, which maps any initial random variable $X_0 \sim \rho_0$ to the random variable $X_t \sim \rho_t$ which is the output of the SDE~\eqref{Eq:SDE} at time $t > 0$;
or equivalently, it maps any initial distribution $\rho_0 \in \P(\R^d)$ to the probability distribution $\rho_t \in \P(\R^d)$ which is the solution to the PDE~\eqref{Eq:FP} at time $t > 0$.
Fokker-Planck channels have been studied, e.g., in~\citep{wibisono2017information, WJ18a, WJ18b, zou2025convexity}.

Some examples of the particular FP channels we consider in this paper are the following.

\begin{enumerate}
\item \textbf{Gaussian channel:}
Consider $b_t \equiv 0$, so the SDE is simply a Brownian motion: $dX_t = dW_t$.
The exact solution is:
\begin{align}\label{Eq:GaussianChannel}
    X_t = X_0 + W_t \stackrel{d}{=} X_0 + \sqrt{t} Z
\end{align}
where $Z \sim \N(0,I)$ is an independent Gaussian random variable in $\R^d$.
This is the classical \textit{Gaussian channel} in information theory.
The Fokker-Planck equation is the {\em heat equation}:
\begin{align}\label{Eq:HeatEq}
    \partial_t \rho_t = \frac{1}{2} \Delta \rho_t.
\end{align}

\item \textbf{Reverse Gaussian channel:}
Let $\pi_t = \pi_0 \, \ast \, \N(0, tI)$ where $\N(0, tI)$ is the Gaussian distribution with mean $0$ and covariance $t I$, so $\pi_t$ satisfies the heat equation: $\partial_t \pi_t = \frac{1}{2} \Delta \pi_t$. 
Let $T > 0$, and define $\nu_t := \pi_{T-t}$, for $0 \le t \le T$.
Then $\nu_t$ satisfies the {\em backward heat equation}:
\begin{align}\label{Eq:BackwardHeatEq}
    \partial_t \nu_t = -\frac{1}{2} \Delta \nu_t = -\nabla \cdot (\nu_t \nabla \log \nu_t) + \frac{1}{2} \Delta \nu_t = -\nabla \cdot (\nu_t \nabla \log \pi_{T-t}) + \frac{1}{2} \Delta \nu_t.
\end{align}
This is the Fokker-Planck equation for the SDE:
\begin{align}\label{Eq:BackwardBM}
    dX_t = \nabla \log \pi_{T-t}(X_t) \, dt + dW_t.
\end{align}
This is a Fokker-Planck channel with time-dependent vector field $b_t = \nabla \log \nu_t = \nabla \log \pi_{T-t}$,
which we call the \textit{reverse Gaussian channel}.
It has the property that if we run the channel~\eqref{Eq:BackwardBM} from $X_0 \sim \rho_0 = \pi_T$, then $X_T \sim \rho_T = \pi_0$, i.e., we ``reverse'' the heat equation.
(However, note that if $\rho_0 \neq \pi_T$, then $\rho_T \neq \pi_0$, so the channel is only suitably constructed for $\pi_0$.)

\item \textbf{Langevin dynamics:}
Let $b_t(x) = -\nabla g(x)$ for some differentiable function $g \colon \R^d \to \R$ with $\int_{\R^d} e^{-g(x)} \, dx < \infty$, and set $c = 2$.
Then the Fokker-Planck channel becomes:
\begin{align}\label{Eq:Langevin}
    dX_t = -\nabla g(X_t) \, dt + \sqrt{2} \, dW_t.
\end{align}
This is the \textit{Langevin dynamics} for sampling from the probability distribution $\nu(x) \propto e^{-g(x)}$.
Indeed, the Fokker-Planck equation becomes:
\begin{align}\label{Eq:FPLangevin}
    \partial_t \rho_t = \nabla \cdot (\rho_t \nabla g) + \Delta \rho_t = \nabla \cdot \left(\rho_t \nabla \log \frac{\rho_t}{\nu} \right)
\end{align}
and we see that $\nu$ is a stationary distribution.

\item \textbf{Ornstein-Uhlenbeck channel:}
Let $b_t(x) = -\gamma x$ for some $\gamma > 0$, and set $c = 2$.
Then the Fokker-Planck channel becomes the \textit{Ornstein-Uhlenbeck (OU) process}:
\begin{align}\label{Eq:OU}
    dX_t = -\gamma X_t \, dt + \sqrt{2} \, dW_t.
\end{align}
This is the Langevin dynamics for the Gaussian target distribution $\nu = \N(0, \frac{1}{\alpha} I)$;
we also call~\eqref{Eq:OU} as the \textit{OU channel}.
The SDE~\eqref{Eq:OU} has an explicit solution at each $t \ge 0$:
\begin{align}\label{Eq:OUSol}
    X_t \stackrel{d}{=} e^{-\gamma t} X_0 + \sqrt{\frac{1-e^{-2\gamma t}}{\gamma}} \, Z
\end{align}
where $Z \sim \N(0,I)$ is an independent Gaussian random variable in $\R^d$.
Note as $\gamma \to 0$, the OU channel~\eqref{Eq:OU} recovers the Gaussian channel~\eqref{Eq:GaussianChannel} at twice the time speed ($2t$ in place of $t$), and the solution~\eqref{Eq:OUSol} recovers the Gaussian channel solution $X_t \stackrel{d}{=} X_0 + \sqrt{2t} \, Z$.
\end{enumerate}

\subsection{Review of the Convergence Rates along the Langevin Dynamics}
\label{Sec:LangevinReview}

We review the convergence rates of the KL divergence and relative Fisher information along the Langevin dynamics.
This provides a useful context for the formulas and the convergence rates that we find along the more general Fokker-Planck channel.

\subsubsection{Convergence rate of the KL divergence}
\label{Sec:LangevinKLRate}

We recall if $(\rho_t)_{t \ge 0}$ evolves following the Fokker-Planck equation~\eqref{Eq:FPLangevin} for the Langevin dynamics~\eqref{Eq:Langevin} targeting $\nu \propto e^{-g}$, then we have the following {\em de Bruijn's identity} on the time derivative of the KL divergence in terms of the relative Fisher information:
\begin{align}\label{Eq:deBruijn}
    \frac{d}{dt} \KL(\rho_t \,\|\, \nu) = - \FI(\rho_t \,\|\, \nu).
\end{align}
This follows by direct time differentiation and integration by parts.
If $\nu$ satisfies $\alpha$-LSI, then from de Bruijn's identity above, along the Langevin dynamics we have $\frac{d}{dt} \KL(\rho_t \,\|\, \nu) \le -2\alpha \KL(\rho_t \,\|\, \nu)$.
Therefore, we conclude an exponential convergence rate of the KL divergence along the Langevin dynamics when the target distribution $\nu$ satisfies LSI:
\begin{align}\label{Eq:LangevinKLRate}
    \KL(\rho_t \,\|\, \nu) \le e^{-2\alpha t} \, \KL(\rho_0 \,\|\, \nu).
\end{align}

\subsubsection{Convergence rate of the relative Fisher information}

We recall if $(\rho_t)_{t \ge 0}$ evolves following the Fokker-Planck equation~\eqref{Eq:FPLangevin} for the Langevin dynamics~\eqref{Eq:Langevin} targeting $\nu \propto e^{-g}$, then we have the following identity on the time derivative of the relative Fisher information, in terms of the second-order relative Fisher information:
\begin{align}\label{Eq:LangevinFI}
    \frac{d}{dt} \FI(\rho_t\,\|\,\nu) =
    -2\E_{\rho_t}\left[ \left\| \nabla^2 \log \frac{\rho_t}{\nu} \right\|^2_{\HS} \right] -2 \E_{\rho_t}\left[ \left\| \nabla \log \frac{\rho_t}{\nu} \right\|^2_{\nabla^2 g} \right].
\end{align}
This identity follows by direct computation via integration by parts, or via the Otto calculus formula for the Hessian of KL divergence; see~\cite[Ch.~15]{villani2009optimal}.
If $\nu \propto e^{-g}$ is $\alpha$-SLC, so $\nabla^2 g(x) \succeq \alpha I$ for all $x \in \R^d$, then from the identity~\eqref{Eq:LangevinFI}, along the Langevin dynamics we have
$\frac{d}{dt} \FI(\rho_t\,\|\,\nu) \le -2\alpha \FI(\rho_t\,\|\,\nu)$.
Therefore, we conclude an exponential convergence rate of the relative Fisher information along the Langevin dynamics when the target distribution $\nu$ is $\alpha$-strongly log-concave:
\begin{align}\label{Eq:LangevinFIRate}
    \FI(\rho_t\,\|\,\nu) \le e^{-2\alpha t} \, \FI(\rho_0\,\|\,\nu).
\end{align}
We can verify that this convergence rate is tight in the Gaussian case.

\begin{example}[Convergence rate of the relative Fisher information along the OU process.] \label{Ex:LangevinRate}
    Consider the OU process~\eqref{Eq:OU}, which is the Langevin dynamics~\eqref{Eq:Langevin} for the target Gaussian distribution $\nu = \N(0, \gamma^{-1} I)$, which is $\gamma$-strongly log-concave.
    Suppose we start the OU process from $\rho_0 = \N(m, s^{-1} I)$ for some $m \in \R^d$ and $s > 0$.
    For all $t \ge 0$, the solution~\eqref{Eq:OUSol} to the OU channel is $\rho_t = \N(e^{-\gamma t} m, s_t^{-1} I)$ where 
    $s_t^{-1} = e^{-2\gamma t} s^{-1} + (1-e^{-2\gamma t}) \gamma^{-1}.$
    Note that $\min\{s,\gamma\} \le s_t \le \max\{s,\gamma\}$, so $s_t = \Theta(1)$ for all $t \ge 0$.
    Furthermore, note that
    $(s_t-\gamma)^2 = s_t^2 \gamma^2 (s_t^{-1}-\gamma^{-1})^2 = e^{-4\gamma t} s_t^2 \gamma^2 (s^{-1}-\gamma^{-1})^2.$
    Then we can compute:
    \begin{align*}
        \FI(\rho_t \,\|\, \nu) 
        &= e^{-2\gamma t} \, \gamma^2 \|m\|^2 + \frac{d}{s_t} \, (s_t-\gamma)^2 \\
        &= e^{-2\gamma t} \, \gamma^2 \|m\|^2 + e^{-4\gamma t} \, d \gamma^2 s_t \, (s^{-1}-\gamma^{-1})^2.
    \end{align*}
    We observe that if $m \neq 0$, then $\FI(\rho_t \,\|\, \nu) = O(e^{-2\gamma t})$, as predicted in the general convergence rate~\eqref{Eq:LangevinFIRate}.
    (Note that if $m  = 0$, then the rate improves to $\FI(\rho_t \,\|\, \nu) = O(e^{-4\gamma t})$;
    this is because in this case $\rho_t$ and $\nu$ are symmetric, and $\rho_t$ satisfies a Poincar\'e inequality, so we can use the second-order relative Fisher information in~\eqref{Eq:LangevinFI} to improve the convergence rate, 
    see Theorem~\ref{Thm:FisherInfoOUFlow}(ii).)
\end{example}

\subsection{Review of the SDPI in KL Divergence along the Fokker-Planck Channel}
\label{Sec:ReviewSDPI}

We recall the following formula for the time derivative of the KL divergence along simultaneous applications of the Fokker-Planck channel; this implies the SDPI in KL divergence along the Fokker-Planck channel when the second distribution satisfies LSI.
We note the similarity with de Bruijn's identity~\eqref{Eq:deBruijn}, which is a special case of the identity~\eqref{Eq:ddtKLGeneral} below for the Langevin dynamics when the second distribution is stationary.
For completeness, we provide the proof of Lemma~\ref{Lem:TimeDerivativeKL} in Section~\ref{Sec:TimeDerivativeKLProof}.

\begin{lemma}\label{Lem:TimeDerivativeKL}
    Suppose $(\rho_t)_{t \ge 0}$ and $(\nu_t)_{t \ge 0}$ evolve following the Fokker-Planck channel~\eqref{Eq:FP}:
    \begin{align*}
        \partial_t \rho_t &= -\nabla \cdot (\rho_t b_t) + \frac{c}{2} \Delta \rho_t \enspace,  \\
        \partial_t \nu_t &= -\nabla \cdot (\nu_t b_t) + \frac{c}{2} \Delta \nu_t.
    \end{align*}
    Then for any $t \ge 0$, we have:
    \begin{align}\label{Eq:ddtKLGeneral}
        \frac{d}{dt} \KL(\rho_t \,\|\, \nu_t) 
        &= -\frac{c}{2} \FI(\rho_t \,\|\, \nu_t).
    \end{align}
    In particular, if for all $t \ge 0$ we know that $\nu_t$ satisfies $\alpha_t$-LSI, then:
    \begin{align}\label{Eq:SDPI_KL}
        \KL(\rho_t \,\|\, \nu_t) \le \exp\left(-c \int_0^t \alpha_s \, ds \right) \KL(\rho_0 \,\|\, \nu_0).
    \end{align}
\end{lemma}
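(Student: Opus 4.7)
The plan is to differentiate $\KL(\rho_t\,\|\,\nu_t)=\int \rho_t \log(\rho_t/\nu_t)\,dx$ in time, substitute the Fokker-Planck equations for both $\rho_t$ and $\nu_t$, and reorganize via integration by parts so that (i) the two drift contributions cancel thanks to the shared vector field $b_t$, and (ii) the two diffusion contributions combine into a single quadratic form yielding $\FI(\rho_t\,\|\,\nu_t)$.

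Concretely, I would first expand
\begin{align*}
\frac{d}{dt}\KL(\rho_t\,\|\,\nu_t)
= \int (\partial_t\rho_t)\,\log\frac{\rho_t}{\nu_t}\,dx
+ \int \rho_t\, \frac{\partial_t\rho_t}{\rho_t}\,dx
- \int \rho_t\, \frac{\partial_t\nu_t}{\nu_t}\,dx,
\end{align*}
and drop the middle term using mass conservation $\int \partial_t\rho_t\,dx=0$. Then I substitute the FP equations from the statement into the two remaining integrals. For the drift pieces, integration by parts turns $\int(-\nabla\cdot(\rho_t b_t))\log(\rho_t/\nu_t)\,dx$ into $\int \rho_t\, b_t\cdot\nabla\log(\rho_t/\nu_t)\,dx$, and a parallel integration by parts on $-\int(\rho_t/\nu_t)(-\nabla\cdot(\nu_t b_t))\,dx$, together with the identity $\nabla(\rho_t/\nu_t)=(\rho_t/\nu_t)\nabla\log(\rho_t/\nu_t)$, produces exactly $-\int \rho_t\, b_t\cdot\nabla\log(\rho_t/\nu_t)\,dx$; the two cancel.

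For the diffusion pieces, the $\rho_t$ side contributes $-\tfrac{c}{2}\int \rho_t\,\nabla\log\rho_t\cdot\nabla\log(\rho_t/\nu_t)\,dx$ after one integration by parts, while the $\nu_t$ side contributes $+\tfrac{c}{2}\int \rho_t\,\nabla\log\nu_t\cdot\nabla\log(\rho_t/\nu_t)\,dx$ after integrating by parts and using the same ratio identity. Subtracting the second from the first collapses the difference $\nabla\log\rho_t-\nabla\log\nu_t$ into $\nabla\log(\rho_t/\nu_t)$, giving $-\tfrac{c}{2}\int \rho_t\,\|\nabla\log(\rho_t/\nu_t)\|^2\,dx=-\tfrac{c}{2}\FI(\rho_t\,\|\,\nu_t)$. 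That establishes~\eqref{Eq:ddtKLGeneral}. For the SDPI bound~\eqref{Eq:SDPI_KL}, I would plug the $\alpha_t$-LSI assumption $\FI(\rho_t\,\|\,\nu_t)\ge 2\alpha_t\KL(\rho_t\,\|\,\nu_t)$ into the identity to obtain $\tfrac{d}{dt}\KL(\rho_t\,\|\,\nu_t)\le -c\,\alpha_t\,\KL(\rho_t\,\|\,\nu_t)$ and finish with Grönwall's inequality.

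The only real obstacle is bookkeeping: making sure the boundary terms in every integration by parts vanish (which follows from sufficient decay of $\rho_t$, $\nu_t$ and their logarithmic derivatives, justified by the smoothness/Lipschitz hypotheses on $b_t$), and keeping the signs straight so that the drift terms indeed cancel rather than reinforce. Once those bookkeeping details are handled, the identity drops out cleanly and the SDPI is immediate.
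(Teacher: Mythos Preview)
Your proposal is correct and follows essentially the same approach as the paper's proof: the same three-term expansion of the time derivative, the same use of mass conservation to kill the middle term, the same integration-by-parts cancellation of the drift contributions via the identity $\nu_t\nabla(\rho_t/\nu_t)=\rho_t\nabla\log(\rho_t/\nu_t)$, and the same combination of the two diffusion pieces into $-\tfrac{c}{2}\FI$, followed by LSI and Gr\"onwall. There is nothing to add.
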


A similar formula to~\eqref{Eq:ddtKLGeneral} holds for all $\Phi$-divergence and R\'enyi divergence~\citep{CCSW22,kook2024inandout,mitra2025fast};
we only present it for KL divergence to draw the comparison with the relative Fisher information that we study in this paper.
In the case of the heat flow~\eqref{Eq:HeatEq}, the formula~\eqref{Eq:ddtKLGeneral} can be used to show an SDPI along the Gaussian channel under LSI~\citep{W11}.
Generalizations of the formula above have been used to prove the mixing time guarantees of ULA and the Proximal Sampler in~\citep{VW19,CCSW22,YLFWC23,LMW24,kook2024inandout,mitra2025fast}.
See~\cite{klartag25heat} for a rigorous derivation of the identity~\eqref{Eq:ddtKLGeneral} above in the case of the heat flow.

\section{Behavior of the Relative Fisher Information along the Fokker-Planck Channel}
\label{Sec:ContractionFI}

We present results on the contraction of relative Fisher information along Fokker-Planck channels.

\subsection{Time Derivative of the Relative Fisher Information along the Fokker-Planck Channel}

Our key tool is the following formula for the time derivative of the relative Fisher information along simultaneous Fokker-Planck channel~\eqref{Eq:FP}.
This formula is the Fisher information analogue of the identity~\eqref{Eq:ddtKLGeneral} for KL divergence.
We provide the proof of Lemma~\ref{Lem:TimeDerivativeFI} in Section~\ref{Sec:TimeDerivativeFIProof}, 
assuming the distributions satisfy regularity properties for the operations in our proof to hold.
We note that~\cite[Conjecture~4.2]{klartag25heat} provides a time derivative formula for the relative $\Phi$-Fisher information along heat flow, which agrees with the formula~\eqref{Eq:TimeDerivativeFI} when $b_t = 0$ and $\Phi(x) = x \log x$.

\begin{lemma}\label{Lem:TimeDerivativeFI}
    Suppose $(\rho_t)_{t \ge 0}$ and $(\nu_t)_{t \ge 0}$ evolve following the Fokker-Planck equation~\eqref{Eq:FP}:
    \begin{align*}
        \partial_t \rho_t &= -\nabla \cdot (\rho_t b_t) + \frac{c}{2} \Delta \rho_t \enspace,  \\
        \partial_t \nu_t &= -\nabla \cdot (\nu_t b_t) + \frac{c}{2} \Delta \nu_t \enspace.
    \end{align*}
    Then for any $t \ge 0$:
    \begin{align}\label{Eq:TimeDerivativeFI}
        \frac{d}{dt} \FI(\rho_t \,\|\, \nu_t) 
        = -c \, \E_{\rho_t} \left[\left\| \nabla^2 \log \frac{\rho_t}{\nu_t} \right\|^2_{\HS} \right] 
        - 2\E_{\rho_t}\left[\left\|\nabla \log \frac{\rho_t}{\nu_t} \right\|^2_{(-c \, \nabla^2 \log \nu_t + (\nabla b_t)_{\sym})}\right].
    \end{align}
\end{lemma}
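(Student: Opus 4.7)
}
The plan is to differentiate $\FI(\rho_t\,\|\,\nu_t) = \int \rho_t\,\|\nabla h_t\|^2\,dx$ directly, where $h_t := \log(\rho_t/\nu_t)$, and then simplify via integration by parts combined with Bochner's identity. Throughout, I assume enough regularity and decay so that all boundary terms at infinity vanish; these are the standard smoothness hypotheses already in force for the Fokker--Planck channel.

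First I would compute $\partial_t h_t$. Using the Fokker--Planck equations for $\rho_t$ and $\nu_t$, and the identity $\Delta\rho/\rho = \Delta\log\rho + \|\nabla\log\rho\|^2$, a direct subtraction gives
\begin{equation*}
    \partial_t h_t \;=\; -\,b_t\cdot\nabla h_t \;+\; \frac{c}{2}\Delta h_t \;+\; c\,\nabla\log\nu_t\cdot\nabla h_t \;+\; \frac{c}{2}\,\|\nabla h_t\|^2,
\end{equation*}
after using $\|\nabla\log\rho_t\|^2 - \|\nabla\log\nu_t\|^2 = \nabla h_t\cdot\nabla(\log\rho_t + \log\nu_t) = \nabla h_t\cdot(2\nabla\log\nu_t + \nabla h_t)$. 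Next, I would expand
\begin{equation*}
    \frac{d}{dt}\FI(\rho_t\,\|\,\nu_t) = \int (\partial_t\rho_t)\,\|\nabla h_t\|^2\,dx + 2\int \rho_t\,\langle \nabla h_t,\, \nabla\partial_t h_t\rangle\,dx.
\end{equation*}
For the first term, substituting the Fokker--Planck equation for $\rho_t$ and integrating by parts produces $\int \rho_t\, b_t\cdot\nabla\|\nabla h_t\|^2\,dx + \tfrac{c}{2}\int \rho_t\,\Delta\|\nabla h_t\|^2\,dx$. For the second term, I would take the gradient of $\partial_t h_t$ componentwise, using $\nabla(b_t\cdot\nabla h_t) = (\nabla b_t)^\top\nabla h_t + (\nabla^2 h_t)b_t$, $\nabla(\nabla\log\nu_t\cdot\nabla h_t) = (\nabla^2\log\nu_t)\nabla h_t + (\nabla^2 h_t)\nabla\log\nu_t$, and $\nabla\|\nabla h_t\|^2 = 2(\nabla^2 h_t)\nabla h_t$. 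Pairing each resulting term with $\nabla h_t$ under $\rho_t$ and using the symmetry of $\nabla^2 h_t$ produces the quadratic forms $\|\nabla h_t\|^2_{\nabla b_t} = \|\nabla h_t\|^2_{(\nabla b_t)_\sym}$ and $\|\nabla h_t\|^2_{\nabla^2\log\nu_t}$ appearing in the claim.

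The key simplifications then unfold in two clean cancellations. The convective contributions $\pm 2\int \rho_t\,\langle b_t,(\nabla^2 h_t)\nabla h_t\rangle\,dx$ coming from the two main terms cancel exactly, leaving the $\|\nabla h_t\|^2_{\nabla b_t}$ piece with the correct sign. What remains of the diffusive ($c$-dependent) contribution is
\begin{equation*}
    c\int\rho_t\bigl(\tfrac{1}{2}\Delta\|\nabla h_t\|^2 + 2\langle \nabla h_t, \nabla\Delta h_t\rangle\bigr)\,dx + 2c\int\rho_t\,\|\nabla h_t\|^2_{\nabla^2\log\nu_t}\,dx + 2c\int\rho_t\,\langle\nabla h_t,(\nabla^2 h_t)(\nabla\log\nu_t+\nabla h_t)\rangle\,dx.
\end{equation*}
Here I would apply Bochner's identity $\tfrac{1}{2}\Delta\|\nabla h\|^2 = \|\nabla^2 h\|_{\HS}^2 + \langle\nabla h,\nabla\Delta h\rangle$ to rewrite the first piece. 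Using $\nabla\log\nu_t + \nabla h_t = \nabla\log\rho_t$ so that $\rho_t(\nabla^2 h_t)\nabla\log\rho_t = (\nabla^2 h_t)\nabla\rho_t$, another integration by parts converts the last integral into $-\tfrac{1}{2}\int\rho_t\,\Delta\|\nabla h_t\|^2\,dx$, which combines with the Bochner rewriting to kill the $\langle\nabla h_t,\nabla\Delta h_t\rangle$ contribution and flip the sign on the Hessian square, producing the $-c\,\E_{\rho_t}[\|\nabla^2 h_t\|_{\HS}^2]$ term in the stated identity.

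The main obstacle will be the bookkeeping: the calculation generates many cross terms, and the identity depends on three separate cancellations (the $b_t$ cancellation between the two main terms, the Bochner rearrangement, and the integration-by-parts against $\nabla\rho_t$). I would carry the computation out keeping $h_t$, $\log\nu_t$, and $\log\rho_t$ visibly separated until the final step, so that the role of $\rho_t\nabla\log\rho_t = \nabla\rho_t$ in enabling the last cancellation is transparent. Assembling the surviving terms yields exactly the right-hand side of \eqref{Eq:TimeDerivativeFI}.
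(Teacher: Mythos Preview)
Your strategy is correct and in fact more efficient than the paper's. The paper does \emph{not} work with $h_t=\log(\rho_t/\nu_t)$ as a single object: it writes $\FI(\rho_t\,\|\,\nu_t)=\E_{\rho_t}[\|\nabla f_t\|^2]+\E_{\rho_t}[\|\nabla g_t\|^2]-2\E_{\rho_t}[\Delta g_t]$ with $f_t=-\log\rho_t$, $g_t=-\log\nu_t$, differentiates each of the three terms separately (each one producing five or six sub-terms requiring individual integration by parts), and only at the very end collects everything into the quadratic forms in $\nabla(f_t-g_t)$ and $\nabla^2(f_t-g_t)$. Your route---computing $\partial_t h_t$ once and then differentiating $\int\rho_t\|\nabla h_t\|^2$---keeps the difference $h_t$ intact throughout, so the three cancellations you identify (the convective $b_t$ cancellation, the Bochner rearrangement, and the IBP against $\nabla\rho_t$ via $\nabla\log\nu_t+\nabla h_t=\nabla\log\rho_t$) are the \emph{only} nontrivial steps, rather than emerging from a large combination of fragments. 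Both proofs use Bochner's formula and the relation $\rho_t\nabla\log\rho_t=\nabla\rho_t$ in essentially the same places; yours just organizes the bookkeeping around $h_t$ from the start.

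One coefficient slip to fix when you write it out: the diffusive block should read $c\int\rho_t\bigl(\tfrac12\Delta\|\nabla h_t\|^2+\langle\nabla h_t,\nabla\Delta h_t\rangle\bigr)\,dx$, not $+2\langle\cdots\rangle$; the factor of $2$ in front of the second main term cancels the $\tfrac{c}{2}$ from $\partial_t h_t$. With the correct coefficient, Bochner turns this into $c\int\rho_t(\|\nabla^2 h_t\|_{\HS}^2+2\langle\nabla h_t,\nabla\Delta h_t\rangle)\,dx$, the IBP of the last integral gives $-c\int\rho_t\Delta\|\nabla h_t\|^2\,dx=-2c\int\rho_t(\|\nabla^2 h_t\|_{\HS}^2+\langle\nabla h_t,\nabla\Delta h_t\rangle)\,dx$, and the $\langle\nabla h_t,\nabla\Delta h_t\rangle$ terms cancel cleanly as you claim.
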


We consider a few special cases of interest:
\begin{enumerate}
    \item \textbf{Langevin dynamics:} Consider when $b_t = -\nabla g$ and $c = 2$, so the channel is the Langevin dynamics~\eqref{Eq:Langevin} for target distribution $\nu \propto e^{-g}$.
    Suppose that $\nu_0 = \nu$, so $\nu_t = \nu$ for all $t \ge 0$.
    In this case, $-c\nabla^2 \log \nu_t + (\nabla b_t)_{\sym} = 2\nabla^2 g - \nabla^2 g = \nabla^2 g$, so the identity~\eqref{Eq:TimeDerivativeFI} recovers the time derivative formula~\eqref{Eq:LangevinFI} along the Langevin dynamics from Section~\ref{Sec:LangevinReview}.
    \item \textbf{Gaussian channel:} When $b_t = 0$ and $c=1$, the identity~\eqref{Eq:TimeDerivativeFI} implies (S)DPIs in relative Fisher information for the Gaussian channel under (strong) log-concavity assumptions on $\nu_0$; see Section~\ref{Sec:SDPIGaussian}.
    \item \textbf{Reverse Gaussian channel:} Consider when $b_t = \nabla \log \nu_t$ and $c=1$, so the Fokker-Planck channel is the reverse Gaussian channel~\eqref{Eq:BackwardBM}.
    In this case, $-c\nabla^2 \log \nu_t + (\nabla b_t)_{\sym} = 0$. 
    Thus, DPI always holds along the reverse Gaussian channel.
    We use this property to prove the mixing time guarantee of the Proximal Sampler in relative Fisher information; see Section~\ref{Sec:ProximalSampler}.
    \item \textbf{OU channel:} When $b_t(x) = -\gamma x$ and $c=2$, the identity~\eqref{Eq:TimeDerivativeFI} implies SDPI in relative Fisher information eventually holds for the OU channel~\eqref{Eq:OU} under strong log-concavity assumption on $\nu_0$; see Section~\ref{Sec:SDPI-OU}.
\end{enumerate}

\subsection{SDPI in Relative Fisher Information along the Gaussian Channel}
\label{Sec:SDPIGaussian}

In Theorem~\ref{Thm:FisherInfoHeatFlow} below we show for relative Fisher information along the Gaussian channel: (i) DPI holds when the second distribution is log-concave; 
(ii) SDPI holds when the second distribution is SLC;
(iii) SDPI with an improved rate holds when the first distribution additionally satisfies Poincar\'e inequality and symmetry; and
(iv) SDPI eventually holds when the second distribution is a log-Lipschitz perturbation of an SLC distribution.
The proofs use the time derivative formula from Lemma~\ref{Lem:TimeDerivativeFI}; for part (iv), we also invoke the result of~\citep{brigati2024heat} on the evolution of log-concavity along the heat flow.
We provide the proof of Theorem~\ref{Thm:FisherInfoHeatFlow} in Section~\ref{Sec:FisherInfoHeatFlowProof}.

\begin{theorem}\label{Thm:FisherInfoHeatFlow}
    Let $\rho_0, \nu_0$ be probability distributions on $\R^d$ with $\FI(\rho_0 \,\|\, \nu_0) < \infty$.
    Let $\rho_t, \nu_t$ be the output of the Gaussian channel~\eqref{Eq:GaussianChannel} from $\rho_0, \nu_0$, respectively, at time $t > 0$.
    Then we have:
    \begin{enumerate}
        \item[(i)] Assume $\nu_0$ is log-concave. Then for all $t \ge 0$:~
        $$\FI(\rho_t \,\|\, \nu_t) \le \FI(\rho_0 \,\|\, \nu_0).$$
        \item[(ii)] Assume $\nu_0$ is $\alpha$-SLC for some $\alpha > 0$. Then for all $t \ge 0$:~
        $$\FI(\rho_t \,\|\, \nu_t) \le \dfrac{\FI(\rho_0 \,\|\, \nu_0)}{(1+\alpha t)^2}.$$
        \item[(iii)] Assume $\nu_0$ is $\alpha$-SLC and $\rho_0$ satisfies $\beta$-PI for some $\alpha,\beta > 0$.
        Assume $\rho_0, \nu_0$ are symmetric.
        Then for all $t \ge 0$:~
        $$\FI(\rho_t \,\|\, \nu_t) \le \dfrac{\FI(\rho_0 \,\|\, \nu_0)}{(1+\beta t)(1+\alpha t)^2}.$$
        \item[(iv)] Assume $\nu_0 \propto e^{-g-\psi}$ where $g \colon \R^d \to \R$ is $\alpha$-strongly convex and $\psi \colon \R^d \to \R$ is $L$-Lipschitz for some $\alpha > 0$, $0 \le L < \infty$. Then for all $t \ge 0$:
        \begin{align}\label{Eq:FIHeatFlowEventualSDPI}
            \FI(\rho_t \,\|\, \nu_t) \le \frac{\FI(\rho_0 \,\|\, \nu_0)}{(1+\alpha t)^2} \exp\left(\frac{2tL^2}{\alpha t+1} + \frac{8L \sqrt{t}}{\sqrt{\alpha t+1}}\right).
        \end{align}  
    \end{enumerate}        
\end{theorem}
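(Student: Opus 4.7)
The main tool is Lemma~\ref{Lem:TimeDerivativeFI} specialized to the Gaussian channel ($b_t \equiv 0$, $c=1$), which reads
\begin{align*}
    \frac{d}{dt}\FI(\rho_t\,\|\,\nu_t) = -\E_{\rho_t}\!\left[\Bigl\|\nabla^2 \log \tfrac{\rho_t}{\nu_t}\Bigr\|^2_{\HS}\right] - 2\,\E_{\rho_t}\!\left[\Bigl\|\nabla\log\tfrac{\rho_t}{\nu_t}\Bigr\|^2_{-\nabla^2\log\nu_t}\right].
\end{align*}
The main classical input I will need is the preservation of strong log-concavity along the heat flow: if $\nu_0$ is $\alpha$-SLC then $\nu_t = \nu_0 \ast \N(0,tI)$ is $(\alpha/(1+\alpha t))$-SLC, which follows from Brascamp--Lieb applied via a Schur-complement computation on $U(x,y) = -\log\nu_0(y) + \|x-y\|^2/(2t)$; taking $\alpha \downarrow 0$ recovers preservation of log-concavity. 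In every part, the plan is to turn the identity above into a differential inequality $\frac{d}{dt}\FI(\rho_t\,\|\,\nu_t) \le -\lambda_t\,\FI(\rho_t\,\|\,\nu_t)$ and integrate.

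For part (i), log-concavity of $\nu_t$ gives $-\nabla^2\log\nu_t \succeq 0$, so both terms on the right-hand side are nonpositive and $\FI(\rho_t\,\|\,\nu_t)$ is nonincreasing. For part (ii), SLC gives $-\nabla^2\log\nu_t \succeq \frac{\alpha}{1+\alpha t}I$, so dropping the nonpositive Hessian term yields $\frac{d}{dt}\FI(\rho_t\,\|\,\nu_t) \le -\frac{2\alpha}{1+\alpha t}\FI(\rho_t\,\|\,\nu_t)$, and Gr\"onwall together with $\int_0^t \alpha/(1+\alpha s)\,ds = \log(1+\alpha t)$ yields the $(1+\alpha t)^{-2}$ contraction.

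For part (iii), I would additionally exploit the first (Hilbert--Schmidt) term via two structural facts. First, convolution with $\N(0,tI)$ improves the Poincar\'e constant so that $\rho_t$ satisfies $(\beta/(1+\beta t))$-PI; this follows from a standard conditional variance decomposition of $\Var_{\rho_t}(\phi(X+\sqrt{t}Z))$ combined with the $(1/t)$-PI of $\N(0,tI)$, giving $\Var_{\rho_t}(\phi) \le (\tfrac{1}{\beta}+t)\,\E_{\rho_t}[\|\nabla\phi\|^2]$. Second, the heat flow preserves symmetry, so $\log(\rho_t/\nu_t)$ is an even function and $\nabla\log(\rho_t/\nu_t)$ is odd; hence $\E_{\rho_t}[\nabla\log(\rho_t/\nu_t)] = 0$, which permits applying the vector-valued $(\beta/(1+\beta t))$-PI for $\rho_t$ to the zero-mean field $\nabla\log(\rho_t/\nu_t)$ to obtain $\E_{\rho_t}[\|\nabla^2\log(\rho_t/\nu_t)\|^2_{\HS}] \ge \frac{\beta}{1+\beta t}\,\FI(\rho_t\,\|\,\nu_t)$. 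Combined with the SLC bound, this gives $\frac{d}{dt}\FI \le -\bigl(\frac{\beta}{1+\beta t} + \frac{2\alpha}{1+\alpha t}\bigr)\FI$, and integrating produces the $(1+\beta t)(1+\alpha t)^2$ denominator.

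For part (iv), the direct Brascamp--Lieb SLC estimate is no longer available because $\nu_0$ has a nonsmooth Lipschitz perturbation, so I would invoke the result of~\citep{brigati2024heat} on the evolution of log-concavity along the heat flow for log-Lipschitz perturbations of SLC measures, which provides a pointwise lower bound of the form $-\nabla^2 \log \nu_t \succeq \bigl(\tfrac{\alpha}{1+\alpha t} - \varepsilon_t(L,\alpha)\bigr) I$ with explicit $\varepsilon_t$. Substituting into the derivative identity and integrating produces the leading $(1+\alpha t)^{-2}$ factor together with a multiplicative $\exp(2\int_0^t \varepsilon_s\,ds)$ correction that, after matching the Brigati--Pedrotti estimate, yields the two summands appearing in~\eqref{Eq:FIHeatFlowEventualSDPI}. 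I expect the main difficulty to lie in part (iv): one has to carefully align the precise numerical form of the Brigati--Pedrotti correction with the exponent $\tfrac{2tL^2}{\alpha t+1} + \tfrac{8L\sqrt{t}}{\sqrt{\alpha t+1}}$, and to ensure sufficient regularity of $\nu_t$ for the integration-by-parts identities underlying Lemma~\ref{Lem:TimeDerivativeFI} to remain valid when $\nu_0$ is only a log-Lipschitz perturbation of a smooth SLC density.
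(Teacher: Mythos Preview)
Your proposal is correct and follows essentially the same route as the paper: specialize Lemma~\ref{Lem:TimeDerivativeFI} to the heat flow, use preservation of (strong) log-concavity and of the Poincar\'e constant under Gaussian convolution to turn the identity into a Gr\"onwall inequality, and in part~(iii) exploit symmetry to make $\nabla\log(\rho_t/\nu_t)$ mean-zero under $\rho_t$ so that the Poincar\'e inequality controls the Hessian term. For part~(iv) the paper does exactly what you anticipate, quoting the Brigati--Pedrotti bound in the form $\alpha_t = \tfrac{d}{dt}\bigl(\log(1+\alpha t) - \tfrac{L^2}{\alpha+1/t} - \tfrac{4L}{\sqrt{\alpha+1/t}}\bigr)$ so that integrating $-2\alpha_t$ produces the stated exponent directly.
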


We can verify the rate in part (ii) is tight by a calculation in the Gaussian case, and that we get an improved rate under Poincar\'e and symmetry as in part (iii), see Example~\ref{Ex:GaussianHeat} in Section~\ref{Sec:SDPIGaussianExample}.

In Theorem~\ref{Thm:FisherInfoHeatFlow} part (iv) above, we note that the upper bound~\eqref{Eq:FIHeatFlowEventualSDPI} is initially increasing for small $t > 0$, and it is decreasing for large $t$, so it shows we eventually have SDPI.
Furthermore, for all $t \ge 0$, we can further upper bound~\eqref{Eq:FIHeatFlowEventualSDPI} by:
$\FI(\rho_t \,\|\, \nu_t) \le C \cdot \frac{\FI(\rho_0 \,\|\, \nu_0)}{(1+\alpha t)^2}$ where $C =  \exp\left(\frac{2L^2}{\alpha} + \frac{8L}{\sqrt{\alpha}}\right)$.
We can construct an explicit example where the relative Fisher information along the heat flow indeed exhibits this behavior of initial increase and eventual decrease, see Example~\ref{Ex:CounterDecrHeat} below.

\subsubsection{Example of SDPI in Fisher information along Gaussian channel}
\label{Sec:SDPIGaussianExample}

\begin{example}\label{Ex:GaussianHeat}
Let $\rho_0 = \N(m, s I)$ and $\nu_0 = \N(0,I)$ for some $m \in \R^d$ and $s > 0$, $s \neq 1$.
Note that $\nu_0$ is $1$-SLC, $\nu_0$ is symmetric, and $\rho_0$ satisfies $(1/s)$-PI.
Along the Gaussian channel~\eqref{Eq:GaussianChannel}, we know
$\rho_t = \N(m, (s+t) I)$ and $\nu_t = \N(0, (1+t) I)$.
Then we can compute:
$$\FI(\rho_t \,\|\, \nu_t) = \frac{(s-1)^2 d}{(1+t)^2 (s+t)^2} + \frac{\|m\|^2}{(1+t)^2}.$$
We note: (1) If $m \neq 0$, then $\FI(\rho_t \,\|\, \nu_t) = O(t^{-2})$ as predicted in Theorem~\ref{Thm:FisherInfoHeatFlow}(ii). 
(2) If $m = 0$, then $\rho_0$ is symmetric, so Theorem~\ref{Thm:FisherInfoHeatFlow}(iii) predicts an improved rate of $O(t^{-3})$; in this example, we actually have a faster rate of $\FI(\rho_t \,\|\, \nu_t) = O(t^{-4})$.
\end{example}

We remark that the faster rate $\FI(\rho_t \,\|\, \nu_t) = O(t^{-4})$ in the case $m=0$ above is likely because $\rho_0$ satisfies an improved Poincar\'e inequality for symmetric test functions, which follows from the improved LSI result by~\citep{fathi2016quantitative}.

\subsubsection{Counterexample of DPI in Fisher information along Gaussian channel}

We show DPI for relative Fisher information does \textit{not} always hold even for the Gaussian channel.
Note by Theorem~\ref{Thm:FisherInfoHeatFlow}(i), the counterexample needs the second distribution $\nu_0$ to be non-log-concave.

\begin{example}\label{Ex:CounterDecrHeat}
    Let $\rho_0 = \N(0,1)$ on $\R$ ($d=1$).
    Let $\nu_0 \propto e^{-g}$ where $g \colon \R \to \R$ is defined by:
    \begin{align*} 
        g(x) = 
        \begin{cases}
            -\frac{M}{2} x^2 ~~~ & \text{ if } |x| \le L, \\
           \frac{1}{2} (x-L)^2 - M L(x-L) - \frac{M L^2}{2}  & \text{ if } x > L, \\
            \frac{1}{2} (x+L)^2 + M L(x+L) - \frac{M L^2}{2}  & \text{ if } x < -L,
        \end{cases}
    \end{align*}
    where $M, L \ge 2$ are arbitrary.
    Then along the Gaussian channel~\eqref{Eq:GaussianChannel}, for small $t > 0$, the relative Fisher information is increasing:
    $\FI(\rho_t\,\|\,\nu_t) > \FI(\rho_0\,\|\,\nu_0).$
    We prove this in Proposition~\ref{Prop:CounterexampleDecayGaussianChannel} in Section~\ref{Sec:CounterexampleDecayGaussianChannelProof}.
    See Figure~\ref{Fig:CounterexampleDecreaseHeat} for an illustration. 
\end{example}

\begin{figure}[h!t!]
    \centering
    \subfigure[Density of $\nu_0$]{\includegraphics[width=0.32\textwidth]{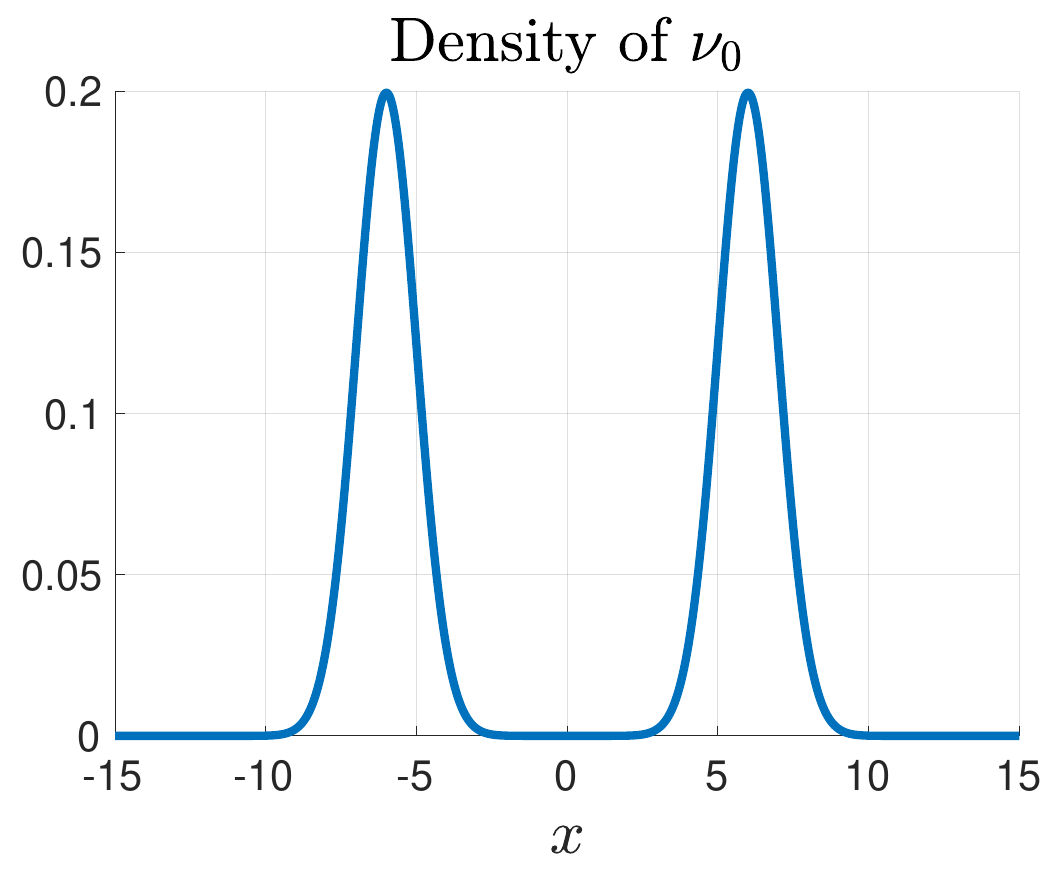}}
    \subfigure[Plot of $\FI(\rho_t\,\|\,\nu_t)$]{\includegraphics[width=0.32\textwidth]{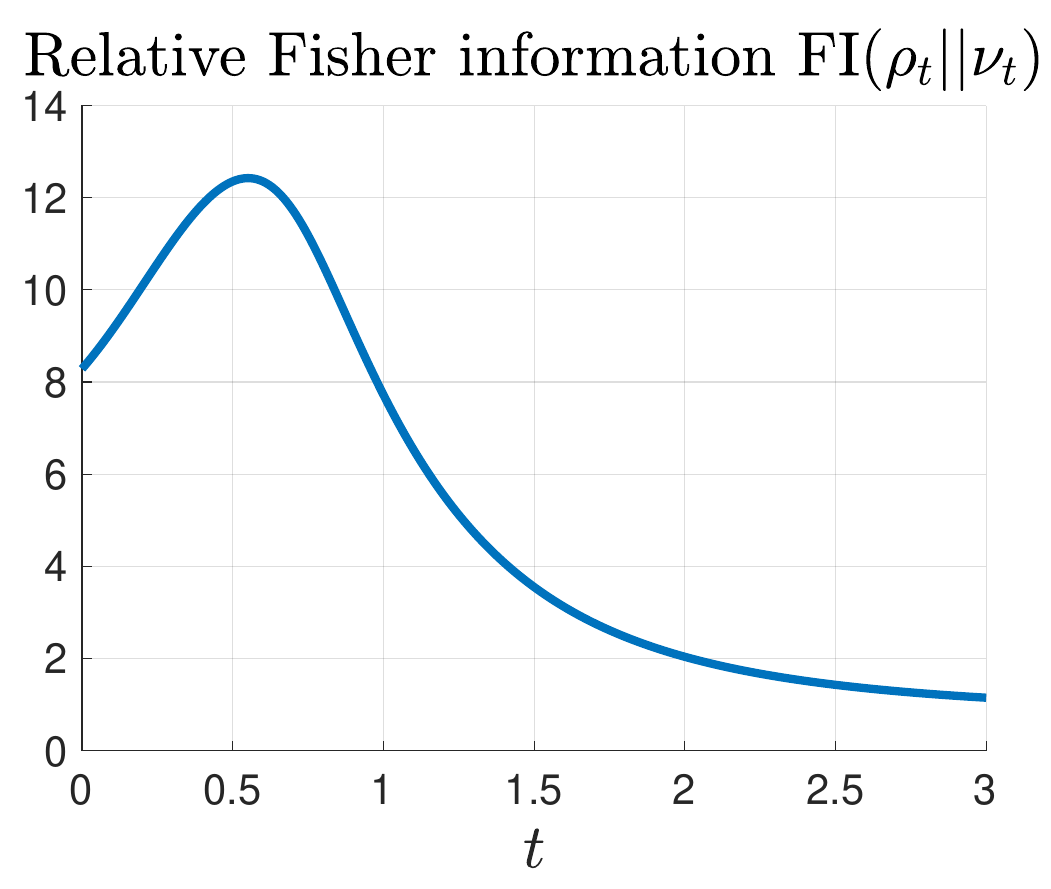}}
    \subfigure[Plot of $\KL(\rho_t\,\|\,\nu_t)$]{\includegraphics[width=0.32\textwidth]{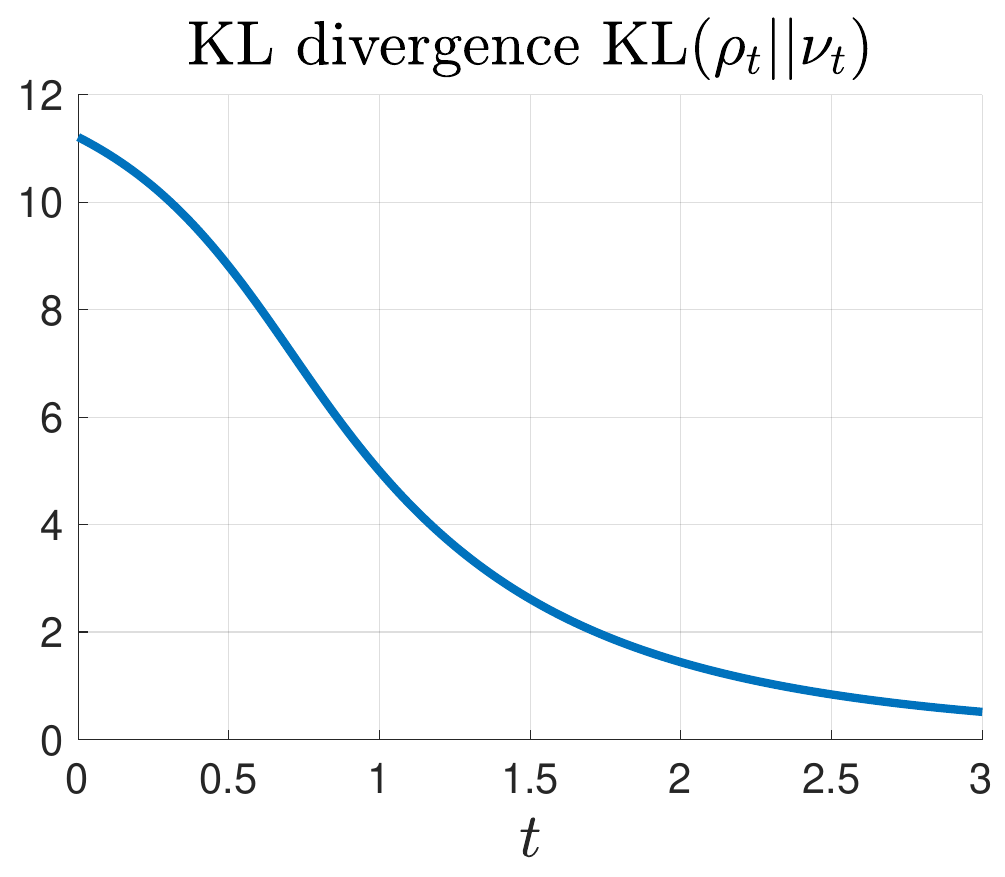}}
    \caption{An example where DPI in relative Fisher information does not hold along the Gaussian channel; 
    see Example~\ref{Ex:CounterDecrHeat} and Proposition~\ref{Prop:CounterexampleDecayGaussianChannel} for details.
    }
    \label{Fig:CounterexampleDecreaseHeat}
\end{figure}

Figure~\ref{Fig:CounterexampleDecreaseHeat} shows an illustration of Example~\ref{Ex:CounterDecrHeat} above.
Figure~\ref{Fig:CounterexampleDecreaseHeat}(a) shows the density of $\nu_0$ as defined in Proposition~\ref{Prop:CounterexampleDecayGaussianChannel} when $M = L = 2$, with $\rho_0 = \N(0,1)$. Figure~\ref{Fig:CounterexampleDecreaseHeat}(b) shows the plot of $t \mapsto \FI(\rho_t \,\|\, \nu_t)$, which is initially increasing (hence DPI does not hold), then eventually decreasing.
Recall $\frac{d}{dt} \KL(\rho_t \,\|\, \nu_t) = -\frac{1}{2} \FI(\rho_t \,\|\, \nu_t)$ along the Gaussian channel.
This means the KL divergence (which is always decreasing along the Gaussian channel by classical DPI) is initially decreasing in a concave way, then eventually in a convex way, as shown in Figure~\ref{Fig:CounterexampleDecreaseHeat}(c).

\subsection{Eventual SDPI in Relative Fisher Information along the OU Channel}
\label{Sec:SDPI-OU}

We derive the following results on the eventual SDPI for the relative Fisher information along OU channel~\eqref{Eq:OU} under strong log-concavity.
In Theorem~\ref{Thm:FisherInfoOUFlow} below we show along the OU channel:
(i) SDPI for relative Fisher information eventually holds with rate $O(e^{-2\gamma t})$ when the second distribution is strongly log-concave; and
(ii) SDPI eventually holds with an improved rate $O(e^{-4\gamma t})$ when the first distribution additionally satisfies Poincar\'e inequality and both distributions are symmetric.
We provide the proof of Theorem~\ref{Thm:FisherInfoOUFlow} in Section~\ref{Sec:SDPI-OU-Proof}.
However, even when both distributions are Gaussian, the relative Fisher information can be initially increasing (see Figure~\ref{Fig:SDPIOUEx} and Example~\ref{Ex:SDPIOU}).

\begin{theorem}\label{Thm:FisherInfoOUFlow}
    Let $\rho_0, \nu_0$ be probability distributions on $\R^d$ with $\FI(\rho_0 \,\|\, \nu_0) < \infty$.
    Along the OU channel~\eqref{Eq:OU} for target distribution $\N(0, \gamma^{-1} I)$, let $\rho_t, \nu_t$ be the output distributions at time $t > 0$ from input distributions $\rho_0, \nu_0$, respectively.
    Then we have:
    \begin{enumerate}
        \item[(i)] Assume $\nu_0$ is $\alpha$-SLC for some $\alpha > 0$.
        Then for all $t \ge 0$:
        $$\FI(\rho_t \,\|\, \nu_t) \le \frac{\gamma^2 \, e^{-2\gamma t}}{(\alpha + e^{-2\gamma t}(\gamma-\alpha))^2} \, \FI(\rho_0 \,\|\, \nu_0).$$
        \item[(ii)] Assume $\nu_0$ is $\alpha$-SLC and $\rho_0$ satisfies $\beta$-PI for some $\alpha, \beta > 0$.
        Assume $\rho_0$, $\nu_0$ are symmetric.
        Then for all $t \ge 0$:
        $$\FI(\rho_t \,\|\, \nu_t) \le \frac{\gamma^3 \, e^{-4\gamma t}}{(\beta + e^{-2\gamma t} (\gamma-\beta))(\alpha+e^{-2 \gamma t}(\gamma-\alpha))^2} \, \FI(\rho_0 \,\|\, \nu_0).$$
    \end{enumerate}        
\end{theorem}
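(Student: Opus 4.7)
My strategy is a time-rescaling reduction that turns the OU channel into the Gaussian channel and lets me invoke Theorem~\ref{Thm:FisherInfoHeatFlow} as a black box. Starting from the explicit OU solution~\eqref{Eq:OUSol}, define $Y_t := e^{\gamma t} X_t$, which has the same law as $X_0 + \sqrt{\tau(t)}\, Z$ with $\tau(t) := (e^{2\gamma t}-1)/\gamma$ and $Z \sim \N(0,I)$. Thus $Y_t$ is the output of the Gaussian channel~\eqref{Eq:GaussianChannel} at time $\tau(t)$ started from $X_0$. Writing $\hat\rho_\tau, \hat\nu_\tau$ for the Gaussian-channel outputs from $\rho_0, \nu_0$, the OU-channel distributions $\rho_t, \nu_t$ are the pushforwards of $\hat\rho_{\tau(t)}, \hat\nu_{\tau(t)}$ through the linear map $y \mapsto e^{-\gamma t} y$.

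The key step is the scaling identity
\begin{equation*}
    \FI(\rho_t \,\|\, \nu_t) \;=\; e^{2\gamma t}\,\FI(\hat\rho_{\tau(t)} \,\|\, \hat\nu_{\tau(t)}).
\end{equation*}
This follows from a short computation: by the chain rule, $\nabla \log(\rho_t/\nu_t)(x) = e^{\gamma t}\,\nabla\log(\hat\rho_{\tau(t)}/\hat\nu_{\tau(t)})(e^{\gamma t} x)$, so the squared score picks up a factor $e^{2\gamma t}$, while $\rho_t(x)\,dx = \hat\rho_{\tau(t)}(y)\,dy$ under $y = e^{\gamma t} x$ absorbs the Jacobian factors $e^{\pm\gamma t d}$. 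For part (i), I apply Theorem~\ref{Thm:FisherInfoHeatFlow}(ii) at time $\tau(t)$ using that $\hat\nu_0 = \nu_0$ is $\alpha$-SLC, yielding $\FI(\hat\rho_{\tau(t)} \,\|\, \hat\nu_{\tau(t)}) \le \FI(\rho_0 \,\|\, \nu_0)/(1+\alpha\tau(t))^2$. Combining with the scaling identity and the algebraic simplification $1 + \alpha\tau(t) = e^{2\gamma t}(\alpha + e^{-2\gamma t}(\gamma-\alpha))/\gamma$ yields $e^{2\gamma t}/(1+\alpha\tau(t))^2 = \gamma^2 e^{-2\gamma t}/(\alpha + e^{-2\gamma t}(\gamma-\alpha))^2$, matching the rate in (i). For part (ii), the symmetry of $\rho_0, \nu_0$ and the $\beta$-PI of $\rho_0$ are exactly the extra hypotheses needed to invoke Theorem~\ref{Thm:FisherInfoHeatFlow}(iii), which contributes an additional factor $(1+\beta\tau(t))^{-1}$; an analogous algebraic simplification produces the rate in (ii).

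\textbf{Main obstacle.} The argument is essentially bookkeeping once the reduction is set up. The step requiring real care is the scaling identity: one must track both the $e^{\gamma t}$ rescaling of the score and the $e^{\gamma t d}$ density Jacobian, and verify that they combine to produce a clean $e^{2\gamma t}$ factor independent of $d$. A more computational alternative would use Lemma~\ref{Lem:TimeDerivativeFI} directly: first derive the sharp strong log-concavity constant $\alpha_t = \alpha\gamma/(\alpha + e^{-2\gamma t}(\gamma-\alpha))$ of $\nu_t$ via the convolution rule (a scaled copy of $\nu_0$ convolved with a Gaussian), then integrate $\tfrac{d}{dt}\FI(\rho_t \,\|\, \nu_t) \le -2(2\alpha_t - \gamma)\,\FI(\rho_t \,\|\, \nu_t)$ explicitly. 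This would recover part (i), while part (ii) would additionally require a Poincar\'e-based lower bound on the Hilbert-Schmidt term in~\eqref{Eq:TimeDerivativeFI} under symmetry. The time-rescaling route is strictly cleaner since it reuses Theorem~\ref{Thm:FisherInfoHeatFlow} without any fresh differential calculation.
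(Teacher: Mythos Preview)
Your proposal is correct and takes a genuinely different route from the paper. The paper follows precisely the ``computational alternative'' you describe: it specializes Lemma~\ref{Lem:TimeDerivativeFI} to the OU channel to obtain
\[
\frac{d}{dt}\FI(\rho_t\,\|\,\nu_t) = -2\,\E_{\rho_t}\!\left[\left\|\nabla^2\log\tfrac{\rho_t}{\nu_t}\right\|_{\HS}^2\right] - 2\,\E_{\rho_t}\!\left[\left\|\nabla\log\tfrac{\rho_t}{\nu_t}\right\|^2_{(-2\nabla^2\log\nu_t - \gamma I)}\right],
\]
then computes the sharp SLC constant $\alpha_t = \gamma\alpha/(\alpha + e^{-2\gamma t}(\gamma-\alpha))$ of $\nu_t$ via the convolution rule, integrates the resulting differential inequality $\tfrac{d}{dt}\FI \le -2(2\alpha_t-\gamma)\FI$ for part~(i), and for part~(ii) additionally tracks the Poincar\'e constant $\beta_t$ of $\rho_t$ and uses symmetry to lower bound the Hilbert--Schmidt term by $\beta_t\,\FI$.

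Your time-rescaling reduction is cleaner: the explicit OU solution~\eqref{Eq:OUSol} makes the change of variables $Y_t = e^{\gamma t} X_t$ transparent, the scaling identity $\FI(\rho_t\,\|\,\nu_t) = e^{2\gamma t}\FI(\hat\rho_{\tau(t)}\,\|\,\hat\nu_{\tau(t)})$ follows from a one-line Jacobian check, and then Theorem~\ref{Thm:FisherInfoHeatFlow}(ii)--(iii) at time $\tau(t) = (e^{2\gamma t}-1)/\gamma$ do all the work. What this buys is modularity --- no fresh differential inequality is integrated, and the algebraic identity $1+\alpha\tau(t) = e^{2\gamma t}(\alpha + e^{-2\gamma t}(\gamma-\alpha))/\gamma$ falls out directly. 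What the paper's approach buys is self-containedness (it does not depend on Theorem~\ref{Thm:FisherInfoHeatFlow} already being proved) and it makes visible exactly which structural feature of the OU drift --- namely $(\nabla b_t)_{\sym} = -\gamma I$ --- is responsible for the $-\gamma$ shift in the exponent relative to the Langevin rate~\eqref{Eq:LangevinFIRate}.
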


We make a few remarks:
Recall that as $\gamma \to 0$, the OU channel recovers the Gaussian channel~\eqref{Eq:GaussianChannel} at twice the time speed.
We note that as $\gamma \to 0$, the upper bound in Theorem~\ref{Thm:FisherInfoOUFlow}(i) recovers the upper bound for the Gaussian channel in Theorem~\ref{Thm:FisherInfoHeatFlow}(ii) at time $2t$, and the upper bound in Theorem~\ref{Thm:FisherInfoOUFlow}(ii) recovers the upper bound for the Gaussian channel in Theorem~\ref{Thm:FisherInfoHeatFlow}(iii) at time $2t$.

We also recall from Theorem~\ref{Thm:FisherInfoHeatFlow}(i) that for the Gaussian channel, DPI for relative Fisher information holds when the second distribution is weakly log-concave.
In contrast, for the OU channel, we are unable to show DPI when the second distribution is only weakly log-concave.
By taking $\alpha \to 0$ in Theorem~\ref{Thm:FisherInfoOUFlow}(i), the upper bound becomes $e^{2\gamma t}$ which is increasing in $t$ since $\gamma > 0$, so we cannot conclude DPI
(but as we further take $\gamma \to 0$, this upper bound becomes $1$, which agrees with the DPI in Theorem~\ref{Thm:FisherInfoHeatFlow}(i) as the OU channel recovers the Gaussian channel).

\subsubsection{Example of eventual SDPI along the OU channel}

We can verify that the contraction rates in Theorem~\ref{Thm:FisherInfoOUFlow} are tight for large $t$ by explicit calculation in the Gaussian case; see Example~\ref{Ex:SDPIOU}.
However, we also note that even when both distributions are strongly log-concave, SDPI for relative Fisher information may not hold for small $t$; see Figure~\ref{Fig:SDPIOUEx}.

\begin{example}[Eventual SDPI along the OU channel.] \label{Ex:SDPIOU}
Let $\rho_0 = \N(m, \beta^{-1} I)$ and $\nu_0 = \N(0, \alpha^{-1} I)$ for some $m \in \R^d$ and $\alpha,\beta > 0$.
Note that $\nu_0$ is $\alpha$-SLC and symmetric, and $\rho_0$ is $\beta$-SLC so it satisfies $\beta$-PI.
Along the OU channel~\eqref{Eq:OU} to target distribution $\N(0, \gamma^{-1} I)$, the solutions are:
\begin{align*}
    \rho_t &= \N(e^{-\gamma t} m, \; \beta_t^{-1} I), 
    \qquad\qquad
    \nu_t = \N(0, \; \alpha_t^{-1} I)
\end{align*}
where $\beta_t^{-1} := e^{-2\gamma t} \beta^{-1} + (1-e^{-2\gamma_t}) \gamma^{-1}$, and $\alpha_t^{-1} := e^{-2\gamma t} \alpha^{-1} + (1-e^{-2\gamma_t}) \gamma^{-1}$.
Note $\min\{\beta,\gamma\} \le \beta_t \le \max\{\beta,\gamma\}$ and $\min\{\alpha,\gamma\} \le \alpha_t \le \max\{\alpha,\gamma\}$, so $\beta_t = \Theta(1)$ and $\alpha_t = \Theta(1)$ for all $t \ge 0$.
Furthermore,
$\frac{1}{\beta_t} \left(\alpha_t - \beta_t\right)^2 
= \alpha_t^2 \beta_t \left(\beta_t^{-1} - \alpha_t^{-1} \right)^2 
= e^{-4\gamma t} \, \alpha_t^2 \beta_t \left(\beta^{-1} - \alpha^{-1} \right)^2.$
Then we can compute:
\begin{align}
    \FI(\rho_t\,\|\,\nu_t) 
    &= e^{-2\gamma t} \, \beta_t^2  \|m\|^2 + \frac{d}{\beta_t} \left(\alpha_t - \beta_t\right)^2 \notag \\
    &= e^{-2\gamma t} \, \beta_t^2  \|m\|^2 + e^{-4\gamma t} \, d \alpha_t^2 \beta_t \left(\beta^{-1} - \alpha^{-1} \right)^2. \label{Eq:FIOUGaussian} 
\end{align}
We observe for large $t$: (1) If $m \neq 0$, then $\FI(\rho_t \,\|\, \nu_t) = O(e^{-2\gamma t})$ as predicted in Theorem~\ref{Thm:FisherInfoOUFlow}(i); and 
(2)~If $m = 0$, then $\rho_0$ is symmetric, and $\FI(\rho_t \,\|\, \nu_t) = O(e^{-4\gamma t})$ as predicted in Theorem~\ref{Thm:FisherInfoOUFlow}(ii).

However, note that for small $t > 0$, the value~\eqref{Eq:FIOUGaussian} can be decreasing.
Consider in $d=1$ dimension when $m=0$, $\gamma = 1$, $\beta = 100$, and $\alpha = 0.1$.
See the plot in Figure~\ref{Fig:SDPIOUEx}, which shows the relative Fisher information is initially increasing.
\end{example}

\begin{figure}[h!t!]
    \centering
    \includegraphics[width=0.4\textwidth]{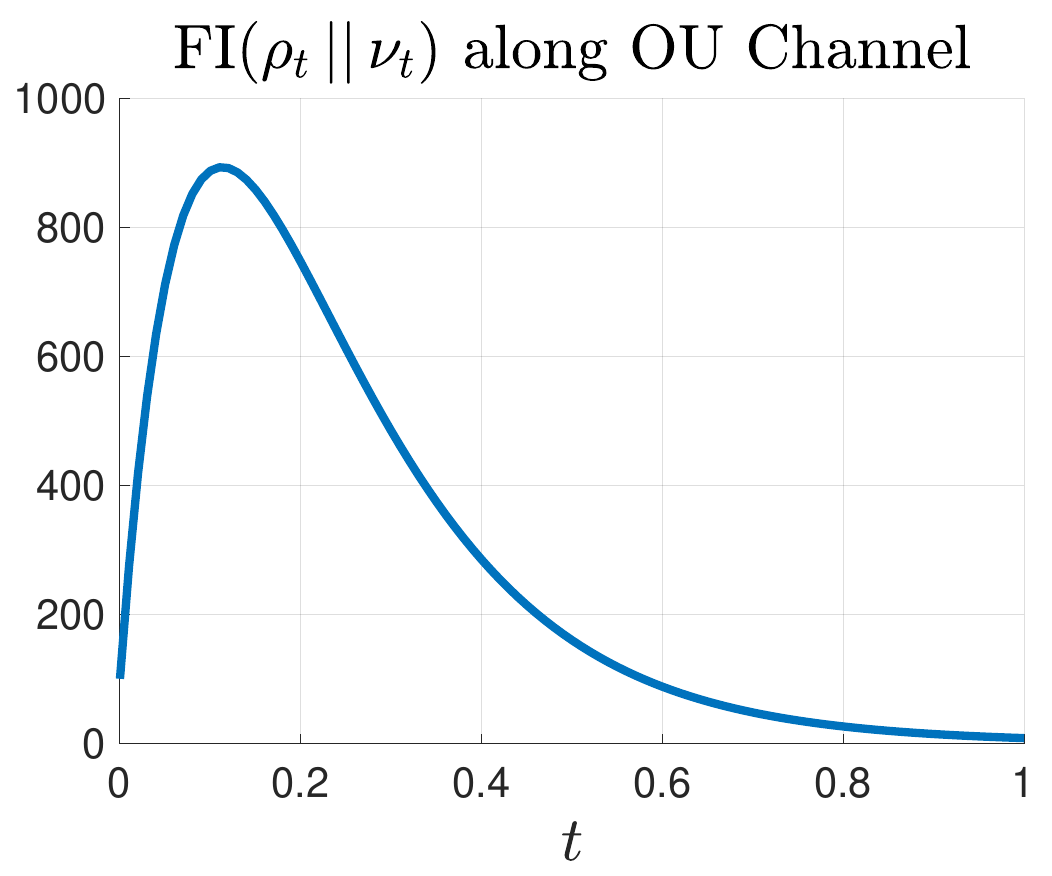}
    \caption{An example where DPI in relative Fisher information does not hold along the OU channel.
    Here $\rho_0 = \N(0,0.01)$, $\nu_0 = \N(0,10)$, and the OU channel is targeting $\N(0,1)$.
    See Example~\ref{Ex:SDPIOU}.
    }
    \label{Fig:SDPIOUEx}
\end{figure}

We also note that since the initial behaviors of the OU channel and the Gaussian channel are similar, the same example from Example~\ref{Ex:CounterDecrHeat} (for the counterexample for DPI along the Gaussian channel) also shows that the relative Fisher information is initially increasing along the OU channel; this can be verified by the same calculation as in the proof of Proposition~\ref{Prop:CounterexampleDecayGaussianChannel}, which we omit.

\section{Analysis of the Proximal Sampler}
\label{Sec:ProximalSamplerAnalysis}

We review the Proximal Sampler in Section~\ref{Sec:ProximalSamplerReview} and its convergence guarantee in KL divergence in Section~\ref{Sec:ProximalSamplerReviewKL}.
We present our result on the convergence guarantee of the Proximal Sampler in relative Fisher information in Section~\ref{Sec:ProximalSampler}.

\subsection{Review of the Proximal Sampler}
\label{Sec:ProximalSamplerReview}

Suppose we want to sample from a target probability distribution $\nu^X(x) \propto e^{-g(x)}$ on $\R^d$. 
We define a target joint distribution $\nu^{XY}$ on the extended state space $\R^{2d} = \R^d \times \R^d$ with density function:
\begin{align}\label{Eq:nuXY}
    \nu^{XY}(x,y) \propto \exp\left(-g(x) - \frac{1}{2\eta}\|y-x\|^2\right)
\end{align}
where $\eta > 0$ is the \textit{step size}.
Observe that the $X$-marginal of $\nu^{XY}$ is equal to $\nu^X$.
Therefore, if we can sample $(X,Y) \sim \nu^{XY}$, then by ignoring the $Y$ component, we obtain a sample $X \sim \nu^X$.

The {\em Proximal Sampler}~\citep{LST21} for sampling from $\nu^X$ performs Gibbs sampling to sample from the joint distribution $\nu^{XY}$.
In each iteration $k \ge 0$, from the current iterate $x_k \sim \rho_k^X$, the Proximal Sampler performs the following two steps:
\begin{enumerate}
    \item (\textit{Forward step}:) Sample $y_k \mid x_k \sim \nu^{Y \mid X}(\cdot \mid x_k) = \N(x_k, \eta I)$.
    This results in a new iterate $y_k \sim \rho_k^Y$, where $\rho_k^Y = \rho_k^X \ast \N(0, \eta I)$.
    \item (\textit{Backward step}:) Sample $x_{k+1} \mid y_k \sim \nu^{X \mid Y}(\cdot \mid y_k)$.
    This gives the next iterate $x_{k+1} \sim \rho_{k+1}^X$.
\end{enumerate}
Since Gibbs sampling is a reversible Markov chain, $\nu^{XY}$ is a stationary distribution of the forward and backward steps above.
Thus, the Proximal Sampler is {\em unbiased}: If $\rho_k^X = \nu^X$, then $\rho_{k+1}^X = \nu^X$.

We note the forward step is easy to implement by adding an independent Gaussian noise.
The backward step requires sampling from the conditional distribution $\nu^{X \mid Y}$ with density:
\begin{align}\label{Eq:RGO}
    \nu^{X \mid Y}(x \mid y_k) \propto_x \exp\left(-g(x) - \frac{1}{2\eta}\|x-y_k\|^2\right).
\end{align}
We assume we can sample from $\nu^{X \mid Y}$~\eqref{Eq:RGO}; this is called the {\em Restricted Gaussian Oracle (RGO)} in~\citep{LST21}.
When $\nu^X$ is log-smooth (i.e., when $g$ has bounded second derivatives), for sufficiently small step size $\eta$, we can implement the RGO via rejection sampling with an $O(1)$ expected number of queries to $g$; see Section~\ref{Sec:RejectionReview} for a review.

\subsection{Review of the Mixing Time of Proximal Sampler in KL divergence}
\label{Sec:ProximalSamplerReviewKL}

We can interpret each step of the Proximal Sampler as the evolution along a Fokker-Planck channel:
\begin{enumerate}
    \item The forward step is the result of passing $\rho_k^X$ via the Gaussian channel~\eqref{Eq:GaussianChannel} for time $\eta$ to obtain $\rho_k^Y = \rho_k^X \ast \N(0,\eta I)$.
    We also apply the forward step to $\nu^X$ to obtain $\nu^Y = \nu^X \ast \N(0,\eta I)$.

    \item The backward step is the result of passing $\rho_k^Y$ via the reverse Gaussian channel~\eqref{Eq:BackwardBM} for time $\eta$ to obtain $\rho_{k+1}^X$.
    When we apply the backward step to $\nu^Y$, we obtain $\nu^X$ back.

\end{enumerate}

By using the interpretations above and invoking the SDPI for the forward and reverse Gaussian channels, \citep{CCSW22} show the following mixing time for Proximal Sampler in KL divergence under LSI, where each forward and backward step induces a contraction in KL divergence.

\begin{lemma}[{\cite[Theorem~3]{CCSW22}}]
    Assume $\nu^X$ satisfies $\alpha$-LSI for some $\alpha > 0$.
    Along each iteration of the Proximal Sampler with any $\eta > 0$, we have:
    \begin{subequations}\label{Eq:KLProximalRate}
        \begin{align}
            \KL(\rho_k^Y \,\|\, \nu^Y) \,&\le\, \frac{\KL(\rho_k^X \,\|\, \nu^X)}{(1+\alpha \eta)}
            \label{Eq:KLProximalRate-1} \\
            \KL(\rho_{k+1}^X \,\|\, \nu^X) \,&\le\, \frac{\KL(\rho_k^Y \,\|\, \nu^Y)}{(1+\alpha \eta)}. \label{Eq:KLProximalRate-2}
        \end{align}        
    \end{subequations}
    Therefore, for any $k \ge 0$, the KL divergence converges exponentially fast:
    \begin{align}\label{Eq:KLProximalRateTotal}
        \KL(\rho_k^X \,\|\, \nu^X) \le \frac{\KL(\rho_0^X \,\|\, \nu^X)}{(1+\alpha \eta)^{2k}}.
    \end{align}
\end{lemma}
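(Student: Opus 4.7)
The plan is to view one Proximal Sampler iteration as two back-to-back Fokker--Planck channels---the forward Gaussian channel~\eqref{Eq:GaussianChannel} and the reverse Gaussian channel~\eqref{Eq:BackwardBM} from Section~\ref{Sec:ProximalSamplerReview}---apply Lemma~\ref{Lem:TimeDerivativeKL} to the joint evolution of $(\rho_t,\nu_t)$ along each channel, integrate the resulting differential inequality using the LSI hypothesis, and chain the two contractions. Since each half-step will yield a factor of $(1+\alpha\eta)^{-1}$, one full iteration gives $(1+\alpha\eta)^{-2}$, and induction on $k$ produces the global bound~\eqref{Eq:KLProximalRateTotal}.

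For the forward step, I would send both $\rho_k^X$ and $\nu^X$ through the heat equation~\eqref{Eq:HeatEq} for time $\eta$, arriving at $\rho_k^Y$ and $\nu^Y = \nu^X \ast \N(0,\eta I)$. Lemma~\ref{Lem:TimeDerivativeKL} with $b_t\equiv 0$ and $c=1$ gives $\frac{d}{dt}\KL(\rho_t\|\nu_t) = -\frac{1}{2}\FI(\rho_t\|\nu_t)$. To apply the integrated bound~\eqref{Eq:SDPI_KL} I need an LSI constant for the reference $\nu_t = \nu^X \ast \N(0,tI)$ at every $t\in[0,\eta]$, and the classical preservation of LSI under Gaussian convolution says that $\nu_t$ satisfies $\frac{\alpha}{1+\alpha t}$-LSI. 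Evaluating $\int_0^\eta \frac{\alpha}{1+\alpha t}\,dt = \ln(1+\alpha\eta)$ in~\eqref{Eq:SDPI_KL} yields~\eqref{Eq:KLProximalRate-1}.

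For the backward step, the reverse Gaussian channel is the Fokker--Planck channel with $b_t = \nabla\log\nu_t$ and $c=1$, with reference chain $\nu_t = \nu^X \ast \N(0,(\eta-t)I)$ running from $\nu^Y$ at $t=0$ to $\nu^X$ at $t=\eta$. Lemma~\ref{Lem:TimeDerivativeKL} again gives $\frac{d}{dt}\KL(\rho_t\|\nu_t) = -\frac{1}{2}\FI(\rho_t\|\nu_t)$, and the same LSI-stability statement now says $\nu_t$ satisfies $\frac{\alpha}{1+\alpha(\eta-t)}$-LSI. The substitution $s=\eta-t$ gives $\int_0^\eta \frac{\alpha}{1+\alpha(\eta-t)}\,dt = \ln(1+\alpha\eta)$, so~\eqref{Eq:SDPI_KL} yields~\eqref{Eq:KLProximalRate-2}. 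Composing the two halves produces the per-iteration factor $(1+\alpha\eta)^{-2}$, and induction gives~\eqref{Eq:KLProximalRateTotal}. I expect the only nontrivial ingredient to be the LSI-stability fact invoked in both halves: if $\mu$ satisfies $\alpha$-LSI, then $\mu \ast \N(0,tI)$ satisfies $\frac{\alpha}{1+\alpha t}$-LSI (classical, e.g., via tensorization of LSI on $\mu \otimes \N(0,tI)$ combined with the contraction properties of the addition map, or equivalently by tracking the evolution of the LSI constant along the heat semigroup). Everything else is a direct deployment of Lemma~\ref{Lem:TimeDerivativeKL} together with the integrated bound~\eqref{Eq:SDPI_KL}.
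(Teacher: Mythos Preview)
Your proposal is correct and follows essentially the same route the paper sketches: the paper does not give its own proof of this lemma but cites \cite{CCSW22}, describing exactly the strategy you outline---interpret the forward and backward steps as the Gaussian and reverse Gaussian channels, invoke the SDPI in KL (Lemma~\ref{Lem:TimeDerivativeKL}) along each, and use the LSI constant $\frac{\alpha}{1+\alpha t}$ for $\nu^X \ast \N(0,tI)$ to integrate. The one point worth making explicit in your write-up is why the reverse channel initialized at $\rho_k^Y$ lands on $\rho_{k+1}^X$; the paper justifies this in its proof of Theorem~\ref{Thm:FIProximalRate} by appealing to \cite[Section~A.1.2]{CCSW22}, and you should cite or verify the same fact.
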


Note the exponential convergence rate in~\eqref{Eq:KLProximalRateTotal} matches the exponential convergence rate of the KL divergence along the Langevin dynamics~\eqref{Eq:LangevinKLRate}, since for small $\eta > 0$, $(1+\alpha \eta)^{-2k} \approx e^{-2\alpha \eta k}$.

\subsection{Convergence Rate of the Proximal Sampler in Relative Fisher Information}
\label{Sec:ProximalSampler}

We show that when the target distribution is strongly log-concave, relative Fisher information converges exponentially fast along the Proximal Sampler.
We use the same notation and interpretation of the forward and backward steps in Section~\ref{Sec:ProximalSamplerReviewKL}.
We provide the proof of Theorem~\ref{Thm:FIProximalRate} in Section~\ref{Sec:FIProximalRateProof}.

\begin{theorem}\label{Thm:FIProximalRate}
    Assume $\nu^X$ is $\alpha$-strongly log-concave for some $\alpha > 0$.
    Along each iteration of the Proximal Sampler with any $\eta > 0$, we have:
    \begin{subequations}\label{Eq:FIProximalRate}
        \begin{align}
            \FI(\rho_k^Y \,\|\, \nu^Y) \,&\le\, \frac{\FI(\rho_k^X \,\|\, \nu^X)}{(1+\alpha \eta)^2} \label{Eq:FIProximalRate-1} \\
            \FI(\rho_{k+1}^X \,\|\, \nu^X) \,&\le\, \FI(\rho_k^Y \,\|\, \nu^Y).  \label{Eq:FIProximalRate-2}       
        \end{align}        
    \end{subequations}
    Therefore, for any $k \ge 0$, the relative Fisher information converges exponentially fast:
    \begin{align}\label{Eq:FIProximalRateTotal}
        \FI(\rho_k^X \,\|\, \nu^X) \le \frac{\FI(\rho_0^X \,\|\, \nu^X)}{(1+\alpha \eta)^{2k}}.
    \end{align}
\end{theorem}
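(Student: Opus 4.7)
The plan is to identify each step of the Proximal Sampler with a Fokker-Planck channel \eqref{Eq:FP} applied simultaneously to $\rho$ and $\nu$, and then invoke the two tools already developed in the paper: Theorem~\ref{Thm:FisherInfoHeatFlow}(ii) handles the forward step, while Lemma~\ref{Lem:TimeDerivativeFI} handles the backward step. Chaining the two per-iteration bounds and iterating over $k$ will then yield~\eqref{Eq:FIProximalRateTotal}.

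For the forward step, the Proximal Sampler convolves $\rho_k^X$ with $\N(0,\eta I)$ to produce $\rho_k^Y$, and correspondingly sends $\nu^X \mapsto \nu^Y = \nu^X \ast \N(0,\eta I)$. This is precisely the Gaussian channel~\eqref{Eq:GaussianChannel} run for time $\eta$ on both distributions. Since $\nu^X$ is $\alpha$-strongly log-concave, Theorem~\ref{Thm:FisherInfoHeatFlow}(ii) applied to the pair $(\rho_k^X, \nu^X)$ at time $t = \eta$ immediately yields~\eqref{Eq:FIProximalRate-1} with the factor $(1+\alpha\eta)^{-2}$.

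For the backward step, the key observation (used in \citep{CCSW22}) is that sampling $x_{k+1} \mid y_k \sim \nu^{X|Y}(\cdot \mid y_k)$ is equivalent to running the reverse Gaussian SDE~\eqref{Eq:BackwardBM} with drift $b_t = \nabla \log \pi_{\eta-t}$, where $\pi_t := \nu^X \ast \N(0, tI)$, on the time interval $[0,\eta]$; the same SDE sends $\nu^Y = \pi_\eta$ to $\nu^X = \pi_0$. Thus both $\rho_t$ (started from $\rho_k^Y$) and $\nu_t = \pi_{\eta-t}$ (started from $\nu^Y$) evolve under the Fokker-Planck channel with $c=1$ and $b_t = \nabla \log \nu_t$. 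Substituting into Lemma~\ref{Lem:TimeDerivativeFI}, the matrix inside the second expectation collapses:
\[
-c\,\nabla^2 \log \nu_t + (\nabla b_t)_{\sym} \;=\; -\nabla^2 \log \nu_t + \nabla^2 \log \nu_t \;=\; 0,
\]
so \eqref{Eq:TimeDerivativeFI} reduces to $\frac{d}{dt}\FI(\rho_t \,\|\, \nu_t) = -\E_{\rho_t}\bigl[\|\nabla^2 \log (\rho_t/\nu_t)\|_{\HS}^2\bigr] \le 0$. Integrating over $t \in [0,\eta]$ gives the pure DPI statement~\eqref{Eq:FIProximalRate-2}.

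There is no single ``hard step'' beyond these two invocations: the forward step carries the entire geometric contraction, and the backward step contributes only monotonicity via the exact cancellation $(\nabla b_t)_\sym = c\,\nabla^2 \log \nu_t$. This is the structural difference from the KL bound~\eqref{Eq:KLProximalRate}, where both forward and backward steps contract by $(1+\alpha\eta)^{-1}$; here the forward step already delivers the full $(1+\alpha\eta)^{-2}$ rate per iteration, and the backward step merely preserves it. Composing~\eqref{Eq:FIProximalRate-1} and~\eqref{Eq:FIProximalRate-2} gives $\FI(\rho_{k+1}^X \,\|\, \nu^X) \le (1+\alpha\eta)^{-2}\,\FI(\rho_k^X \,\|\, \nu^X)$, and iterating over $k$ yields~\eqref{Eq:FIProximalRateTotal}.
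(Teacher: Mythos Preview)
Your proposal is correct and follows essentially the same approach as the paper's proof: the forward step is handled by Theorem~\ref{Thm:FisherInfoHeatFlow}(ii) applied to the Gaussian channel for time $\eta$, and the backward step is handled by interpreting the RGO as the reverse Gaussian channel~\eqref{Eq:BackwardBM} with $b_t = \nabla \log \nu_t$, invoking Lemma~\ref{Lem:TimeDerivativeFI}, and using the cancellation $-c\,\nabla^2\log\nu_t + (\nabla b_t)_{\sym} = 0$ to obtain a pure DPI. Your remark that the forward step carries the full $(1+\alpha\eta)^{-2}$ contraction while the backward step contributes only monotonicity is exactly the structural observation the paper makes as well.
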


Note the exponential convergence rate in~\eqref{Eq:FIProximalRateTotal} matches the exponential convergence rate of the relative Fisher information along the Langevin dynamics~\eqref{Eq:LangevinFIRate}.
The convergence rate in~\eqref{Eq:FIProximalRateTotal} also matches the exponential convergence rate of the squared gradient norm along the proximal gradient method in optimization for a strongly convex objective function (see Lemma~\ref{Lem:OptReview}).
We can verify the convergence rate in~\eqref{Eq:FIProximalRateTotal} is tight in the Gaussian case; see Example~\ref{Ex:ConvergenceFIProx} in Section~\ref{Sec:ConvergenceFIProxExample}.

We note the convergence rate~\eqref{Eq:FIProximalRateTotal} in relative Fisher information is similar to the convergence rate~\eqref{Eq:KLProximalRateTotal} in KL divergence.
However, we also note the difference: For KL divergence, there is a contraction in both the forward and backward steps~\eqref{Eq:KLProximalRate}, since we have SDPI for both the forward and reverse Gaussian channels.
For relative Fisher information~\eqref{Eq:FIProximalRate}, there is a contraction only in the forward step since we have SDPI along the Gaussian channel (Theorem~\ref{Thm:FisherInfoHeatFlow}), while the backward step does not provide a strict contraction since we only have DPI along the reverse Gaussian channel.

\subsubsection{Convergence of Relative Fisher Information along Proximal Sampler}
\label{Sec:ConvergenceFIProxExample}

\begin{example}\label{Ex:ConvergenceFIProx}
    Consider the Proximal Sampler with step size $\eta > 0$ for the target distribution $\nu^X = \N(0, \alpha^{-1} I)$ on $\R^d$, which is $\alpha$-SLC, from the initial distribution $\rho_0^X = \N(m_0, \sigma_0^2 I)$ for some $m_0 \in \R^d$, $m_0 \neq 0$, and $\sigma_0^2 > 0$.
    In this case, the RGO conditional distribution is, for each $y \in \R^d$:
    $$\nu^{X \mid Y}(\cdot \mid y) = \N\left(\frac{1}{1+\alpha \eta} y, \frac{\eta}{1+\alpha \eta} I \right).$$
    Therefore, the iterates remain Gaussian $\rho_k^X = \N(m_k, \sigma^2_k I)$ for all $k \ge 0$, where:
    \begin{align*}
        m_{k} &= \frac{m_0}{(1+\alpha \eta)^k},
        \qquad\qquad
        \sigma_{k}^2 = \frac{1}{(1+\alpha \eta)^{2k}}\left(\sigma_0^2-\frac{1}{\alpha}\right) + \frac{1}{\alpha}.
    \end{align*}
    Then we can compute:
    \begin{align*}
        \FI(\rho_k^X \,\|\, \nu^X) 
        &\,=\, \alpha^2 \|m_k\|^2 + \left(\frac{1}{\sigma_k^2} - \alpha\right)^2 \sigma_k^2 d 
        \,=\, \frac{\alpha^2 \|m_0\|^2}{(1+\alpha \eta)^{2k}} + \frac{d}{\sigma_k^2} \frac{(1- \alpha \sigma_0^2)^2}{(1+\alpha \eta)^{4k}}.
    \end{align*}
    Since $m_0 \neq 0$, we have $\FI(\rho_k^X \,\|\, \nu^X) = O((1+\alpha \eta)^{-2k})$, as claimed in Theorem~\ref{Thm:FIProximalRate}.
    (Note if $m_0 = 0$, then we have an improved rate $\FI(\rho_k^X \,\|\, \nu^X) = O((1+\alpha \eta)^{-4k})$; this is because in this case $\rho_k^X$ is symmetric and satisfies a Poincar\'e inequality, so the improved SDPI from Theorem~\ref{Thm:FisherInfoHeatFlow}(iii) applies.)
\end{example}

\subsection{Iteration Complexity of the Proximal Sampler in Relative Fisher Information}

With the rejection sampling implementation of RGO (see Section~\ref{Sec:RejectionReview}), the convergence rate in Theorem~\ref{Thm:FIProximalRate} implies the following high-accuracy iteration complexity guarantee for the Proximal Sampler in relative Fisher information.
We provide the proof of Corollary~\ref{Cor:IterationComplexity} in Section~\ref{Sec:IterationComplexityProof}.

\begin{corollary}\label{Cor:IterationComplexity}
    Assume $\nu^X \propto e^{-g}$ is $\alpha$-strongly log-concave and $L$-log-smooth for some $0 < \alpha \le L < \infty$.
    Let $x^* = \arg\min_{x \in \R^d} g(x)$ be the minimizer of $g$.
    Consider the Proximal Sampler with step size $\eta = \frac{1}{Ld}$ with the rejection sampling implementation of the RGO.
    Suppose we start the algorithm from $x_0 \sim \rho_0^X = \N(x^*, \frac{1}{L} I)$.
    Then for any $\varepsilon > 0$, along the iteration of the Proximal Sampler $x_k \sim \rho_k^X$, we have $\FI(\rho_k^X \,\|\, \nu^X) \le \varepsilon$ whenever the number of iterations $k$ satisfies
    $$k \ge \frac{dL}{\alpha} \log \frac{dL}{\varepsilon}$$
    and the expected number of queries to $g$ in each iteration for the rejection sampling is $O(1)$.
\end{corollary}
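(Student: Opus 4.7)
The plan is to combine three ingredients: (i) the exponential contraction rate from Theorem~\ref{Thm:FIProximalRate}; (ii) an explicit upper bound on the initial relative Fisher information $\FI(\rho_0^X \,\|\, \nu^X)$ under the Gaussian warm start $\rho_0^X = \N(x^*, L^{-1} I)$; and (iii) the standard $O(1)$ expected-query guarantee for the rejection sampling implementation of the RGO at step size $\eta = 1/(Ld)$ from Section~\ref{Sec:RejectionReview}.

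First I would bound the initial relative Fisher information. Since $\rho_0^X = \N(x^*, L^{-1} I)$ has score $\nabla \log \rho_0^X(x) = -L(x-x^*)$ and $\nu^X \propto e^{-g}$ has score $-\nabla g(x)$, the integrand in $\FI(\rho_0^X \,\|\, \nu^X)$ is $\|\nabla g(x) - L(x-x^*)\|^2$. The minimizer condition gives $\nabla g(x^*) = 0$, and the log-smoothness bound $\alpha I \preceq \nabla^2 g \preceq LI$ implies the Jacobian $LI - \nabla^2 g$ has operator norm at most $L - \alpha$. Hence $x \mapsto L(x-x^*) - \nabla g(x)$ is $(L-\alpha)$-Lipschitz and vanishes at $x^*$, yielding
$$\FI(\rho_0^X \,\|\, \nu^X) \le (L-\alpha)^2 \, \E_{\rho_0^X}\bigl[\|x-x^*\|^2\bigr] = (L-\alpha)^2 \cdot \frac{d}{L} \le Ld.$$

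Next I would substitute this into Theorem~\ref{Thm:FIProximalRate} to get $\FI(\rho_k^X \,\|\, \nu^X) \le Ld/(1+\alpha\eta)^{2k}$. With $\eta = 1/(Ld)$ we have $\alpha\eta \le 1$, so the elementary inequality $\log(1+x) \ge x/2$ on $[0,1]$ gives $2k\log(1+\alpha\eta) \ge k\alpha/(Ld)$; demanding this be at least $\log(Ld/\varepsilon)$ yields the iteration bound $k \ge (dL/\alpha)\log(dL/\varepsilon)$. Finally, at the chosen step size $\eta = 1/(Ld)$, the rejection sampling implementation of the RGO for an $L$-log-smooth target (Section~\ref{Sec:RejectionReview}) requires $O(1)$ expected queries to $g$ per iteration, completing the argument. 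Nothing in this chain is subtle once Theorem~\ref{Thm:FIProximalRate} is in hand; the only judgment call is choosing the warm start so that the initial relative Fisher information scales only polynomially in $d$ and $L$, which is what controls the $\log(dL/\varepsilon)$ factor in the final complexity.
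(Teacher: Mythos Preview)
Your proposal is correct and follows essentially the same route as the paper's own proof: bound $\FI(\rho_0^X\,\|\,\nu^X)\le dL$ via the $(L-\alpha)$-Lipschitzness of $x\mapsto L(x-x^*)-\nabla g(x)$, apply Theorem~\ref{Thm:FIProximalRate}, and convert the contraction factor using $\log(1+c)\ge c/2$ on $[0,1]$ (the paper writes this equivalently as $1+c\ge e^{c/2}$). The reference to Section~\ref{Sec:RejectionReview} for the $O(1)$ rejection-sampling cost at $\eta=1/(Ld)$ matches the paper's argument exactly.
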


In the above, we assume we can find the minimizer $x^*$ of $g$, for example by gradient descent, which converges exponentially fast since we assume $g$ is smooth and strongly convex.
We note the logarithmic dependence of the iteration complexity above on the error parameter $\varepsilon^{-1}$, which comes from the exponential convergence rate of the Proximal Sampler in relative Fisher information.
This is in contrast to the polynomial dependence on $\varepsilon^{-1}$ in the result of~\citep{balasubramanian22a}, which comes from the bias of ULA.
However, the result of~\citep{balasubramanian22a} only assumes the target is log-smooth, while our result assumes the target is SLC and log-smooth.
Furthermore, our guarantee holds for the last iterate of the Proximal Sampler, while the guarantee in~\citep{balasubramanian22a} holds for the average iterate of ULA.

\section{Discussion}
\label{Sec:Discussion}

We have shown that the Proximal Sampler has a high-accuracy iteration complexity guarantee for sampling in relative Fisher information when the target distribution is strongly log-concave and log-smooth.
This complements the result of~\citep{balasubramanian22a}, who show a low-accuracy iteration complexity guarantee of ULA for non-log-concave, log-smooth target distribution.

Our analysis technique is to establish SDPI in relative Fisher information along several channels, including the Gaussian channel under strong log-concavity, and along the reverse Gaussian channel, which compose to form the Proximal Sampler algorithm.
We show that DPI in relative Fisher information may not hold initially along the Gaussian channel and the OU process.
We also show that SDPI in relative Fisher information will eventually hold along the Gaussian channel when the second input distribution is a log-Lipschitz perturbation of a strongly log-concave distribution, and along the OU process when the second input distribution is strongly log-concave.

Going forward, it would be interesting to study if we can weaken the SLC assumption to isoperimetry such as LSI.
It would be interesting to show mixing time in relative Fisher information for other sampling algorithms beyond what we study in this paper, including for discretization of the underdamped Langevin dynamics~\citep{ma2021there} and for the Hamiltonian Monte Carlo algorithm~\citep{monmarche2024entropic}.

\appendix

\section{Additional Related Works}
\label{Sec:RelatedWorks}

Mixing time guarantees in relative Fisher information have been studied in~\citep{balasubramanian22a,chewi2023fisher}, both focusing on non-log-concave sampling as we discussed in Section~\ref{Sec:Introduction}.
The work of~\citep{cheng2023fast} uses convergence in relative Fisher information to characterize conditional mixing for non-log-concave target distributions.

For other sampling algorithms, mixing time analyses based on hypocoercivity~\citep{villani2009hypocoercivity} involve the construction of a Lyapunov function which combines the KL divergence and a rescaled relative Fisher information; see~\citep{ma2021there} for the analysis of the underdamped Langevin algorithm, and~\citep{monmarche2024entropic} for the analysis of the ideal Hamiltonian Monte Carlo algorithm, where both work only assume the target distribution satisfy LSI and smoothness.
From such Lyapunov analyses, one can in principle extract convergence guarantees in relative Fisher information for the corresponding algorithms.
However, the result will be a biased convergence guarantee for the underdamped Langevin algorithm in~\citep{ma2021there} since the discretization introduces bias similar to ULA.
On the other hand, the result will be an exponential convergence guarantee for the ideal Hamiltonian Monte Carlo algorithm in~\citep{monmarche2024entropic} since the algorithm is unbiased, but the algorithm is still idealized and requires a further discretization step to implement in practice, and it is unclear whether the convergence guarantee in relative Fisher information will be preserved.

Convergence guarantees in relative Fisher information have been useful as an intermediate step in showing the uniqueness of the equilibrium distribution and analyzing the convergence behaviors of the mean-field Langevin dynamics for zero-sum games in the space of probability distributions under strong convexity, see e.g.~\cite[Lemma~7.2]{conger2024coupled} and~\cite[Theorem~4]{cai2024convergence}.

The work of~\citep{feng2024fisher} study the decay of Fisher information along a general class of stochastic processes which allow a time and position-dependent covariance term multiplying the (possibly degenerate) Brownian motion; this includes Fokker-Planck channels as a special case.
\citep{feng2024fisher} derive a formula for the time derivative of a generalized relative Fisher information along such processes,
using the machinery of information gamma calculus, and extract convergence guarantees in relative Fisher information under certain assumptions, see~\cite[Theorem~2]{feng2024fisher}; they provide various applications including analysis of annealed Langevin dynamics.
In this work, we provide a self-contained proof of the time derivative formula along the Fokker-Planck channel (see Lemma~\ref{Lem:TimeDerivativeFI}); 
furthermore, our SDPI perspective and our algorithmic application for deriving the mixing time of the Proximal Sampler in relative Fisher information seem novel.

\subsection{Parametric vs.\ Non-Parametric Fisher Information}
\label{Sec:RelatedWorksFI}

In this paper, we are working with the \textit{non-parametric} relative Fisher information $\FI(\rho\,\|\,\nu)$, where the derivative of the log-density is taken with respect to the space variable; this has a natural geometric interpretation as the squared gradient norm of the KL divergence in the Wasserstein metric.
The relative Fisher information is a relative form of the non-parametric Fisher information $J(\rho) = \E_\rho[\|\nabla \log \rho\|^2]$. 
More precisely, the relative Fisher information $\FI(\rho\,\|\,\nu)$ is the Bregman divergence (i.e., excess in the linear approximation) of the absolute Fisher information; see e.g.~\cite[Lemma~2]{wibisono2017information}.

In the literature, another popular notion is the {\em parametric} Fisher information, where 
the derivative of the log-density is taken with respect to a parameter that indexes the distribution; for example, this is the notion of Fisher information that appears in the Cramer-Rao bound.
When the parameter is a location parameter, i.e., the distribution is $\rho_\theta(x) = \rho_0(x-\theta)$ for some base distribution $\rho_0$ and parameter $\theta$, then the parametric Fisher information coincides with the non-parametric Fisher information; however, in general they are different.
In particular, the non-parametric Fisher information requires working with a continuous state space such as $\R^d$, while the parametric Fisher information is well-defined even for a discrete state space as long as the parameter is a continuous variable.

The data processing inequality for the parametric Fisher information is known under post-processing, see e.g.~\cite[Lemma~3]{zamir1998proof}.
For the non-parametric Fisher information $J(\rho)$, we have the \textit{Fisher information inequality} or the Blachman-Stam inequality, which states that for any probability distributions $\rho, \nu \in \P(\R^d)$: 
$$J(\rho \ast \nu)^{-1} \ge J(\rho)^{-1} + J(\nu)^{-1}$$
see e.g.~\citep{blachman1965convolution} and \cite[Theorem~1]{zamir1998proof}.
In particular, for any distribution $\rho$, we have $J(\rho \ast \nu) \le J(\rho)$, which can be seen as a data processing inequality in non-parametric Fisher information for a noisy channel which performs convolution with $\nu$ (i.e., adding an independent noise drawn from $\nu$).
However, a version of the inequality above for relative Fisher information is not known, and does not appear to be true in general; in Example~\ref{Ex:CounterDecrHeat}, we show an example where convolution with a Gaussian (i.e., along the Gaussian channel) increases relative Fisher information.

\section{Review of Sampling as Optimization in the Space of Distributions}
\label{Sec:SamplingAsOpt}

In the perspective of sampling as optimization in the space of distributions~\citep{JKO98,W18}, the problem of sampling from a target probability distribution $\nu \in \P(\R^d)$ can be formulated as minimizing some objective functions $\F \colon \P(\R^d) \to \R$ on the space of probability distributions which is minimized at the target distribution $\rho^* = \nu$; for example, a good objective function is the KL divergence $\F(\rho) = \KL(\rho \,\|\, \nu)$.

We recall that the Langevin dynamics~\eqref{Eq:Langevin} to the target distribution $\nu \propto e^{-g}$, or more precisely its associated Fokker-Planck equation~\eqref{Eq:FPLangevin}, has an optimization interpretation as the gradient flow dynamics: 
$$\dot \rho_t = -\grad \, \KL(\rho_t \,\|\, \nu)$$ 
for minimizing the KL divergence objective function $\F(\rho) = \KL(\rho \,\|\, \nu)$ on the space of probability distributions $\P(\R^d)$ endowed with the Wasserstein $W_2$ metric~\citep{JKO98}. 
Here $\grad$ is the gradient operator with respect to the $W_2$ metric, following the computation rule of Otto calculus, see e.g.~\cite[Ch.~15]{villani2009optimal}.

We also recall the optimization interpretation of the definitions above~\citep{OV00}: if $\nu$ is $\alpha$-SLC, then the KL divergence $\rho \mapsto \KL(\rho\,\|\,\nu)$ is an $\alpha$-geodesically strongly convex function on the space of probability distributions on $\R^d$ with respect to the $W_2$ metric.
Furthermore, recall the relative Fisher information is the squared gradient norm of the KL divergence in the $W_2$ metric: 
$$\FI(\rho\,\|\,\nu) = \big\|\grad \, \KL(\rho\,\|\,\nu) \big\|^2_\rho.$$
The property that $\nu$ satisfies LSI is equivalent to the Polyak-\L{}ojaciewicz (PL) or gradient domination property of the KL divergence objective function, which means for all probability distributions $\rho$:
$$\big \|\grad\,\KL(\rho\,\|\,\nu) \big\|^2_\rho \ge 2\alpha \, \KL(\rho\,\|\,\nu).$$
We recall that $\alpha$-strong convexity implies $\alpha$-gradient domination, and gradient domination is a sufficient condition for the exponential convergence rate in function value along gradient flow.

Via the gradient flow interpretation of the Langevin dynamics, de Bruijn's identity~\eqref{Eq:deBruijn} can be seen as the energy identity along gradient flow: $$\frac{d}{dt} \KL(\rho_t\,\|\,\nu) = - \big\|\grad \, \KL(\rho_t\,\|\,\nu) \big\|^2_{\rho_t}.$$
Thus, the exponential convergence rate of KL divergence along the Langevin dynamics under LSI~\eqref{Eq:LangevinKLRate} is a manifestation of the fact that along the gradient flow for an objective function which is gradient dominated, the function value converges exponentially fast to its minimum.

Via the gradient flow interpretation of the Langevin dynamics ($\dot \rho_t = -\grad \, \KL(\rho_t \,\|\, \nu)$), 
the identity~\eqref{Eq:LangevinFI} also follows from a general time derivative computation along the gradient flow:
$$\frac{d}{dt} \left\|\grad \, \KL(\rho_t\,\|\,\nu) \right\|^2_{\rho_t} = -2(\Hess \, \KL(\rho_t\,\|\,\nu))\left[\left(\grad \, \KL(\rho_t \,\|\, \nu) \right)^{\otimes 2} \right]$$ where $\Hess$ is the Hessian operator in the $W_2$ metric, which when evaluated yields the right-hand side of the identity~\eqref{Eq:LangevinFI} above; see~\cite[Ch.~15]{villani2009optimal}.
Thus, the exponential convergence rate~\eqref{Eq:LangevinFIRate} of the relative Fisher information along the Langevin dynamics under SLC is a manifestation of the fact that along the gradient flow for a strongly convex objective function, the squared gradient norm converges exponentially fast; see Lemma~\ref{Lem:OptReview} in Section~\ref{Sec:ReviewOpt} for a review.

\section{Review of the Gradient Norm Convergence in Optimization}
\label{Sec:ReviewOpt}

We review the behavior of the squared gradient norm in finite-dimensional optimization.

Suppose we want to minimize an objective function $f \colon \R^d \to \R$.
We assume $f$ is twice differentiable and $\alpha$-strongly convex,
so the Hessian matrix satisfies $\nabla^2 f(x) \succeq \alpha I$ for all $x \in \R^d$.

In continuous time, we consider the \textit{gradient flow} dynamics for $(X_t)_{t \ge 0}$ in $\R^d$:
\begin{align}\label{Eq:GF}
    \dot X_t = -\nabla f(X_t).
\end{align}

In discrete time, we consider the \textit{proximal gradient algorithm} for the iterates $x_k \in \R^d$:
\begin{align}\label{Eq:PGM}
    x_{k+1} = \arg\min_{x \in \R^d} \left\{ f(x) + \frac{1}{2\eta} \|x-x_k\|^2 \right\}
\end{align}
where $\eta > 0$ is step size.
Recall that the proximal gradient method algorithm can be equivalently written as the following implicit update:
\begin{align*}
    x_{k+1} = x_k - \eta \nabla f(x_{k+1}).
\end{align*}

We have the following guarantees for the convergence of the squared gradient norm along the gradient flow and the proximal gradient algorithm above.
Note the rate for the proximal gradient algorithm below matches the convergence rate for the gradient flow: $(1+\alpha \eta)^{2k} \to \exp(-2\alpha t)$, in the limit of vanishing step size $\eta \to 0$ with $\eta k \to t$.

\begin{lemma}\label{Lem:OptReview}
  Assume $f$ is $\alpha$-strongly convex and twice continuously differentiable.
  \begin{enumerate}
  \item Along the gradient flow dynamics~\eqref{Eq:GF}, we have for all $t \ge 0$:
  \begin{align*}
	\|\nabla f(X_t)\|^2 \le e^{-2\alpha t} \|\nabla f(X_0)\|^2.
  \end{align*}
  \item Along the proximal gradient algorithm~\eqref{Eq:PGM} with any step size $\eta > 0$, we have for all $k \ge 0$:
  \begin{align*}
	\|\nabla f(x_k)\|^2 \le \frac{\|\nabla f(x_0)\|^2}{(1+\alpha \eta)^{2k}}.
  \end{align*}
  \end{enumerate}
\end{lemma}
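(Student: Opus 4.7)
\textbf{Proof proposal for Lemma~\ref{Lem:OptReview}.}

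For part (1), the plan is a direct time-differentiation argument combined with Gr\"onwall's inequality. Along the gradient flow $\dot X_t = -\nabla f(X_t)$, by the chain rule,
\begin{align*}
    \frac{d}{dt} \|\nabla f(X_t)\|^2
    = 2 \, \nabla f(X_t)^\top \nabla^2 f(X_t) \, \dot X_t
    = -2 \, \nabla f(X_t)^\top \nabla^2 f(X_t) \, \nabla f(X_t).
\end{align*}
Since $f$ is $\alpha$-strongly convex, we have $\nabla^2 f(X_t) \succeq \alpha I$, so the right-hand side is bounded above by $-2\alpha \|\nabla f(X_t)\|^2$. Applying Gr\"onwall's inequality to the differential inequality $\frac{d}{dt} \|\nabla f(X_t)\|^2 \le -2\alpha \|\nabla f(X_t)\|^2$ yields the claimed exponential decay.

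For part (2), I would exploit strong monotonicity of $\nabla f$ together with the implicit form of the proximal update $x_{k+1} = x_k - \eta \nabla f(x_{k+1})$. Writing $g_k := \nabla f(x_k)$ so that $x_{k+1} - x_k = -\eta g_{k+1}$, the $\alpha$-strong monotonicity of $\nabla f$ (which follows from $\alpha$-strong convexity of $f$) gives
\begin{align*}
    (g_{k+1} - g_k)^\top (x_{k+1} - x_k) \;\ge\; \alpha \, \|x_{k+1} - x_k\|^2.
\end{align*}
Substituting $x_{k+1} - x_k = -\eta g_{k+1}$ and dividing by $\eta > 0$, this rearranges to $g_k^\top g_{k+1} \ge (1+\alpha\eta)\|g_{k+1}\|^2$. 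Applying the Cauchy--Schwarz inequality $g_k^\top g_{k+1} \le \|g_k\| \cdot \|g_{k+1}\|$ on the left and cancelling a factor of $\|g_{k+1}\|$ (trivial if it vanishes) yields the per-step contraction
\begin{align*}
    \|\nabla f(x_{k+1})\| \;\le\; \frac{\|\nabla f(x_k)\|}{1+\alpha\eta}.
\end{align*}
Squaring and iterating this bound over $k$ steps gives the desired estimate.

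The routine calculus and Gr\"onwall step in (1) is entirely standard, and the main technical content is in (2), where the subtlety lies in handling the implicit nature of the proximal update --- one cannot directly differentiate, so instead one must leverage the variational characterization (strong monotonicity of $\nabla f$) to extract a per-step contraction factor. Once the correct use of strong monotonicity against the update direction $x_{k+1} - x_k = -\eta \nabla f(x_{k+1})$ is identified, the remaining inequality chain via Cauchy--Schwarz is immediate. No auxiliary results beyond the twice-differentiability and $\alpha$-strong convexity hypotheses are needed.
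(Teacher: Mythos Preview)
Your proof is correct. Part~(1) is identical to the paper's argument. Part~(2) differs: the paper interpolates each proximal step by the continuous curve $Y_t = Y_0 - t\nabla f(Y_t)$ (so $Y_0 = x_k$, $Y_\eta = x_{k+1}$), computes $\dot Y_t = -(I + t\nabla^2 f(Y_t))^{-1}\nabla f(Y_t)$, and then differentiates $\|\nabla f(Y_t)\|^2$ to obtain $\frac{d}{dt}\|\nabla f(Y_t)\|^2 \le -\frac{2\alpha}{1+\alpha t}\|\nabla f(Y_t)\|^2$, which integrates to the same $(1+\alpha\eta)^{-2}$ contraction. Your route via strong monotonicity of $\nabla f$ combined with Cauchy--Schwarz is shorter and more elementary---it does not actually use the $C^2$ hypothesis, only $\alpha$-strong convexity. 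The paper's interpolation argument, on the other hand, mirrors the continuous-time Fokker--Planck channel analysis used throughout (cf.\ the proof of Theorem~\ref{Thm:FIProximalRate} and Lemma~\ref{Lem:TimeDerivativeFI}), so it illustrates the unifying template of ``differentiate along an interpolating flow and integrate a rate'' that drives the main results.
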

\begin{proof}
\textbf{(1) Analysis for gradient flow:}
Using the gradient flow dynamics~\eqref{Eq:GF}, we can compute:
\begin{align*}
    \frac{d}{dt} \|\nabla f(X_t)\|^2 &= 2\langle \nabla f(X_t), \nabla^2 f(X_t) \, \dot X_t \rangle 
    = -2\|\nabla f(X_t)\|^2_{\nabla^2 f(X_t)} 
    \le -2\alpha \|\nabla f(X_t)\|^2,
\end{align*}
where in the last step above we have used the property that $\nabla^2 f(X_t) \succeq \alpha I$.
We can write the differential inequality above equivalently as: $\frac{d}{dt} \left(e^{2\alpha t}  \|\nabla f(X_t)\|^2\right) \le 0$.
Integrating from $0$ to $t$ and rearranging, we get the desired convergence bound along gradient flow:
\begin{align*}
    \|\nabla f(X_t)\|^2 \le e^{-2\alpha t} \|\nabla f(X_0)\|^2.
\end{align*}

\medskip
\noindent
\textbf{(2) Analysis for proximal gradient algorithm:}
We use a continuous-time interpolation of each step of the proximal gradient algorithm as follows.
From any $Y_0 \in \R^d$, consider the continuous-time curve $(Y_t)_{t \ge 0}$ defined by:
\begin{align*}
	Y_t = Y_0 - t \nabla f(Y_t).
\end{align*}
Observe that if $Y_0 = x_k$, then the next iterate $x_{k+1}$ of the proximal gradient algorithm is equal to the value $Y_\eta$ at time $t = \eta$, since indeed at $t = \eta$, we have $Y_\eta = Y_0 - \eta \nabla f(Y_\eta) = x_k - \eta \nabla f(Y_\eta)$, so $Y_\eta = (I + \eta \nabla f)^{-1}(x_k) = x_{k+1}$.

Along the curve $Y_t$ above, we can compute:
$\dot Y_t = -\nabla f(Y_t) - t \nabla^2 f(Y_t) \dot Y_t$,
and therefore,
\begin{align*}
	\dot Y_t = -(I + t \nabla^2 f(Y_t))^{-1} \nabla f(Y_t).
\end{align*}
Therefore, we can compute along the curve $Y_t$ above:
\begin{align*}
	\frac{d}{dt} \|\nabla f(Y_t)\|^2 &= 2\langle \nabla f(Y_t), \, \nabla^2 f(Y_t) \, \dot Y_t \rangle 
	= -2\|\nabla f(Y_t)\|^2_{\nabla^2 f(Y_t) (I + t \nabla^2 f(Y_t))^{-1}}.
\end{align*}
Note since $\nabla^2 f(Y_t) \succeq \alpha I$, we have $\nabla^2 f(Y_t) (I + t \nabla^2 f(Y_t))^{-1} \succeq \frac{\alpha}{1+\alpha t} I$ for all $t \ge 0$.
Therefore, continuing the calculation above,
\begin{align*}
	\frac{d}{dt} \|\nabla f(Y_t)\|^2 &\le -\frac{2\alpha}{1+\alpha t}  \|\nabla f(Y_t)\|^2 = -\dot A_t  \|\nabla f(Y_t)\|^2
\end{align*}
where in the last step we have defined $A_t = 2 \log (1+\alpha t)$, so $\dot A_t = \frac{2\alpha}{1+\alpha t}$.
The above differential inequality is equivalent to: $\frac{d}{dt} \left(e^{A_t}  \|\nabla f(Y_t)\|^2  \right) \le 0$.
Integrating from $0$ to $t$ and rearranging yields:
\begin{align*}
	 \|\nabla f(Y_t)\|^2 \le e^{-A_t}  \|\nabla f(Y_0)\|^2 = \frac{ \|\nabla f(Y_0)\|^2 }{(1+\alpha t)^2}. 
\end{align*}
Plugging in $t = \eta$, and substituting $Y_0 = x_k$, $Y_\eta = x_{k+1}$, we obtain the one-step contraction bound for the proximal gradient algorithm:
\begin{align*}
	 \|\nabla f(x_{k+1})\|^2 \le \frac{ \|\nabla f(x_k)\|^2}{(1+\alpha \eta)^2}.
\end{align*}
Unrolling this recursion gives the desired convergence bound for the proximal gradient algorithm.
\end{proof}

\section{Large Gap between Relative Fisher Information and KL Divergence}
\label{Sec:LargeFIEx}

We exhibit an example of two distributions where their relative Fisher information is large, but the KL divergence is small. 
In fact, we can make the R\'enyi divergence of order $\infty$ be small.

Recall the \textit{R\'enyi divergence} of order $1 < q < \infty$ between probability distributions $\rho$ and $\nu$ is:
$$\sR_q(\rho\,\|\,\nu) = \frac{1}{q-1} \log \E_\nu\left[\left(\frac{\rho}{\nu}\right)^q\right].$$
Recall that $q \mapsto \sR_q(\rho\,\|\,\nu)$ is increasing, $\lim_{q \to 1} \sR_q(\rho\,\|\,\nu) = \KL(\rho\,\|\,\nu)$, and the R\'enyi divergence of order $\infty$ is the logarithm of the \textit{warmness} parameter:
$$\sR_\infty(\rho\,\|\,\nu) = \lim_{q \to \infty} \sR_q(\rho\,\|\,\nu) = \ess\sup_\nu \log \frac{\rho}{\nu}$$
where $\ess\sup_\nu$ denotes the supremum over the support of $\nu$ except on sets of measure zero with respect to $\nu$.
Therefore, if we can show $\sR_\infty(\rho\,\|\,\nu)$ is small, then it implies $\sR_q(\rho\,\|\,\nu)$ is small for all $1 < q < \infty$, and in particular, $\KL(\rho\,\|\,\nu)$ is also small.

We have the following construction in $d=1$ dimension, where the second distribution $\nu$ is a standard Gaussian (which is SLC and thus satisfies LSI), and the first distribution $\rho$ is a small perturbation of $\nu$, such that the $\infty$-R\'enyi divergence can be made arbitrarily small, but the relative Fisher information can be made arbitrarily large.

\begin{lemma}\label{Lem:LargeFI}
    Let $\nu = \N(0,1)$ on $\R$.
    Let $0 < \epsilon < 1 < L$ be arbitrary.
    There exists a probability distribution $\rho$ on $\R$ (constructed below) such that $\sR_\infty(\rho \,\|\, \nu) \le \epsilon$ and $\FI(\rho \,\|\, \nu) \ge L$.
\end{lemma}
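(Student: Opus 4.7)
The plan is to construct $\rho$ as a high-frequency, bounded multiplicative perturbation of $\nu$, exploiting the basic fact that the $\infty$-R\'enyi divergence measures the sup-norm of $\log(\rho/\nu)$, whereas the relative Fisher information measures (an $L^2$ norm of) its \emph{gradient}, and these two quantities are essentially independent --- a small-amplitude, high-frequency oscillation can keep the first tiny while making the second arbitrarily large.

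Concretely, I would define $\rho(x) = Z^{-1}\,\nu(x)\,\exp\!\bigl(\tfrac{\epsilon}{2}\sin(\omega x)\bigr)$, where $Z = \E_\nu[\exp(\tfrac{\epsilon}{2}\sin(\omega x))]$ is the normalizing constant and the frequency $\omega > 0$ will be chosen last, depending on $L$. Since $\nu$ is a standard Gaussian and the perturbation is bounded, $Z$ is finite and $\rho$ is a valid probability density. The log-density ratio is
\[
\log\frac{\rho(x)}{\nu(x)} = \tfrac{\epsilon}{2}\sin(\omega x) - \log Z.
\]

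For the $\infty$-R\'enyi bound, I would first observe that $|\tfrac{\epsilon}{2}\sin(\omega x)| \le \epsilon/2$, so $e^{-\epsilon/2} \le Z \le e^{\epsilon/2}$, hence $|\log Z| \le \epsilon/2$. Therefore the essential supremum of $\log(\rho/\nu)$ is at most $\epsilon/2 + \epsilon/2 = \epsilon$, proving $\sR_\infty(\rho\,\|\,\nu) \le \epsilon$ as required. The same chain of inequalities also yields the pointwise lower bound $\rho(x) \ge e^{-\epsilon}\,\nu(x)$, which I will use in the next step.

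For the Fisher information, differentiating gives $\nabla\log(\rho/\nu)(x) = \tfrac{\epsilon\omega}{2}\cos(\omega x)$, so
\[
\FI(\rho\,\|\,\nu) = \tfrac{\epsilon^2\omega^2}{4}\,\E_\rho[\cos^2(\omega x)] \ge \tfrac{\epsilon^2\omega^2}{4}\,e^{-\epsilon}\,\E_\nu[\cos^2(\omega x)].
\]
Using $\cos^2(\omega x) = \tfrac{1}{2}(1+\cos(2\omega x))$ and the characteristic function of $\nu = \N(0,1)$, I would compute $\E_\nu[\cos^2(\omega x)] = \tfrac{1}{2}(1+e^{-2\omega^2}) \ge \tfrac{1}{2}$. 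Combining, $\FI(\rho\,\|\,\nu) \ge \tfrac{\epsilon^2\omega^2}{8e}$ (using $\epsilon < 1$). To finish, choose any $\omega \ge \sqrt{8eL}/\epsilon$, which yields $\FI(\rho\,\|\,\nu) \ge L$, as desired.

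There is no real obstacle to overcome here; the only mildly delicate point is ensuring that controlling the normalizing constant $Z$ does not eat into the $\epsilon$-budget for the $\infty$-R\'enyi divergence, which is why I take the amplitude of the sinusoid to be $\epsilon/2$ rather than $\epsilon$. The conceptual takeaway, which I would emphasize briefly, is that this example shows why one cannot hope to bound $\FI$ by $\KL$ (or even $\sR_\infty$) via any direct functional inequality: the relative Fisher information is sensitive to the smoothness of $\rho/\nu$ in a way that no $\Phi$-divergence is, and high-frequency perturbations decouple the two notions.
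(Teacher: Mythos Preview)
Your proof is correct and follows the same core idea as the paper --- a small-amplitude, high-frequency multiplicative perturbation of $\nu$ decouples the sup-norm of $\log(\rho/\nu)$ from the $L^2$ norm of its derivative --- but the concrete construction differs. The paper takes $\rho \propto \nu\, e^{-g}$ where $g$ is a piecewise-linear triangular wave of height $1$ supported on an interval $[-a,a]$ chosen so that $\nu([-a,a]) = \epsilon$; the $\sR_\infty$ bound then comes from Jensen's inequality applied to $\E_\nu[e^{-g}]$, and the Fisher information comes from $|g'| = 1/\eta$ almost everywhere on the support. Your sinusoidal choice $\exp(\tfrac{\epsilon}{2}\sin(\omega x))$ is cleaner: the density is smooth, the bound on $Z$ is immediate from the pointwise bound on $\sin$, and the Fisher information estimate reduces to the Gaussian characteristic function. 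What you lose is only the feature that the paper's perturbation is compactly supported (which is not needed for the lemma). In short, both constructions instantiate the same principle, but yours avoids the discrete bookkeeping of spike widths and integer counts in the paper's argument.
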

\begin{proof}
    Define $a \equiv a(\epsilon) > 0$ such that $\nu([-a,a]) = \epsilon$.
    Define $M = \max\{1/a, \sqrt{(e L)/\epsilon}\}$.
    Let $K \ge 0$ be the smallest integer such that $(2K+1)/M \ge a$.
    Define $\eta := a/(2K+1)$, so $\eta \le \frac{1}{M} \le a$.
        
    We define $g \colon \R \to \R$ to be a periodic spike function with height $1$ and width $2\eta$ supported on the interval $[-a,a]$, that is:
    \begin{align*}
        g(x) &= \begin{cases}
            1-|x - 2k \eta|/\eta ~ & \text{ if } |x| \le a \text{ and } x \in [(2k-1)\eta, \, (2k+1)\eta] \text{ for } k \in \{-K,\dots,K\}, \\
            0 & \text{ if } |x| > a.
        \end{cases}
    \end{align*}
    
    Define the probability distribution $\rho \propto \nu \cdot e^{-g}$, or explicitly:
    $$\rho(x) = \frac{\nu(x) \, e^{-g(x)}}{\E_\nu[e^{-g}]}.$$

    Since $0 \le g(x) \le 1$ for $|x| \le a$ and $g(x) = 0$ for $|x| > a$, we have 
    $\E_\nu[g] \le \nu([-a,a]) = \epsilon.$
    Then by Jensen's inequality,
    $\log \E_\nu[e^{-g}] \ge -\E_\nu[g] \ge -\epsilon.$
    Then for all $x \in \R^d$, we have:
    $$\log \frac{\rho(x)}{\nu(x)} = -g(x) - \log \E_\nu[e^{-g}] \le 0 + \epsilon = \epsilon.$$
    Therefore, $\sR_\infty(\rho\,\|\,\nu) \le \epsilon$, as claimed.

    Next, since $g(x) \ge 0$, we have $\E_\nu[e^{-g}] \le 1$.
    Since $|g'(x)| = 1/\eta$ almost everywhere for $|x| \le a$ (except for $x$ which are integer multiples of $\eta$) and $g'(x) = 0$ for $|x| > a$, and since $g(x) \le 1$, we have
    $$\E_\nu[(g')^2 e^{-g}] = \frac{1}{\eta^2} \int_{-a}^a e^{-g(x)} \nu(x) \, dx
    \ge \frac{e^{-1}}{\eta^2} \nu([-a,a]) = \frac{e^{-1} \epsilon}{\eta^2} \ge e^{-1} \epsilon M^2 \ge L.$$
    Therefore, we can bound the relative Fisher information as
    \begin{align*}
        \FI(\rho\,\|\,\nu) &= \E_\rho\left[\left(\log \frac{\rho}{\nu}\right)^{'2} \right] 
        = \E_\rho[(g')^2]
        = \frac{\E_\nu[(g')^2 e^{-g}]}{\E_\nu[e^{-g}]}
        \ge L
    \end{align*}
    as claimed.
\end{proof}

\section{Proof of Lemma~\ref{Lem:TimeDerivativeKL} (SDPI for KL Divergence along the FP Channel)}
\label{Sec:TimeDerivativeKLProof}

We provide the proof below assuming the distributions have sufficient regularity properties for the various operations to hold.
See also~\cite{klartag25heat} for a rigorous derivation of the identity in the case of the heat flow.

\begin{proof}[Proof of Lemma~\ref{Lem:TimeDerivativeKL}]
    We can compute by differentiating under the integral sign and using chain rule:
    \begin{align}
        \frac{d}{dt} \KL(\rho_t \,\|\, \nu_t)
        &= \frac{d}{dt} \int_{\R^d} \rho_t \log \frac{\rho_t}{\nu_t} \, dx \notag \\
        &= \int_{\R^d} (\partial_t \rho_t) \log \frac{\rho_t}{\nu_t} \, dx + \int_{\R^d} \rho_t \frac{1}{\rho_t} \partial_t \rho_t \, dx - \int_{\R^d} \rho_t \frac{1}{\nu_t} \partial_t \nu_t \, dx. \label{Eq:ddtCalc1}
    \end{align}
    We compute each term above.
    For the first term in~\eqref{Eq:ddtCalc1}, 
    by the Fokker-Planck equation and  using integration by parts, where all the boundary terms vanish:
    \begin{align*}
        \int_{\R^d} (\partial_t \rho_t) \log \frac{\rho_t}{\nu_t} \, dx
        &= \int_{\R^d} \left(-\nabla \cdot (\rho_t b_t) + \frac{c}{2} \Delta \rho_t\right) \log \frac{\rho_t}{\nu_t} \, dx \\
        &= \int_{\R^d} \rho_t \left\langle b_t, \nabla \log \frac{\rho_t}{\nu_t} \right\rangle \, dx
        - \frac{c}{2} \int_{\R^d} \rho_t \left\langle \nabla \log \rho_t, \nabla \log \frac{\rho_t}{\nu_t} \right\rangle \, dx
    \end{align*}
    where for the second term above we have used the identity that $\Delta \rho_t = \nabla \cdot \nabla \rho_t = \nabla \cdot (\rho_t \nabla \log \rho_t)$.

    For the second term in~\eqref{Eq:ddtCalc1}, we can show it is equal to $0$:
    \begin{align*}
        \int_{\R^d} \rho_t \frac{1}{\rho_t} \partial_t \rho_t \, dx
        &= \int_{\R^d} \partial_t \rho_t \, dx
        \,=\, \partial_t \left(\int_{\R^d} \rho_t \, dx \right)
        \,=\, \partial_t (1)
        \,=\, 0.
    \end{align*}

    For the third term in~\eqref{Eq:ddtCalc1}, 
    by the Fokker-Planck equation and  using integration by parts:
    \begin{align*}
        - \int_{\R^d} \rho_t \frac{1}{\nu_t} \partial_t \nu_t \, dx
        &= -\int_{\R^d} \left(-\nabla \cdot (\nu_t b_t) + \frac{c}{2}  \Delta \nu_t\right) \frac{\rho_t}{\nu_t} \, dx \\
        &= - \int_{\R^d} \nu_t \left\langle b_t, \nabla \frac{\rho_t}{\nu_t} \right\rangle \, dx
        + \frac{c}{2}  \int_{\R^d} \nu_t \left\langle \nabla \log \nu_t, \nabla \frac{\rho_t}{\nu_t} \right\rangle \, dx \\
        &= - \int_{\R^d} \rho_t \left\langle b_t, \nabla \log \frac{\rho_t}{\nu_t} \right\rangle \, dx
        + \frac{c}{2}  \int_{\R^d} \rho_t \left\langle \nabla \log \nu_t, \nabla \log \frac{\rho_t}{\nu_t} \right\rangle \, dx
    \end{align*}
    where in the above we have used the identity that $\Delta \nu_t = \nabla \cdot (\nu_t \nabla \log \nu_t)$,
    and $\nu_t \nabla \frac{\rho_t}{\nu_t} = \rho_t \nabla \log \frac{\rho_t}{\nu_t}$.

    Combining the calculations above in~\eqref{Eq:ddtCalc1}, we see that the terms involving $\left\langle b_t, \nabla \log \frac{\rho_t}{\nu_t} \right\rangle$ cancel, and we obtain:
    \begin{align*}
        \frac{d}{dt} \KL(\rho_t \,\|\, \nu_t)
        &= 
        - \frac{c}{2} \int_{\R^d} \rho_t \left\langle \nabla \log \rho_t, \nabla \log \frac{\rho_t}{\nu_t} \right\rangle \, dx 
        + \frac{c}{2}  \int_{\R^d} \rho_t \left\langle \nabla \log \nu_t, \nabla \log \frac{\rho_t}{\nu_t} \right\rangle \, dx \\
        &= 
        - \frac{c}{2} \int_{\R^d} \rho_t \left\| \nabla \log \frac{\rho_t}{\nu_t} \right\|^2 dx  \\
        &= 
        - \frac{c}{2}  \, \FI(\rho_t \,\|\, \nu_t)
    \end{align*}
    as claimed in~\eqref{Eq:ddtKLGeneral}.

    Now suppose for all $t \ge 0$ we know that $\nu_t$ satisfies $\alpha_t$-LSI.
    Then from the identity above, we can bound:
    $$\frac{d}{dt} \KL(\rho_t \,\|\, \nu_t) = - \frac{c}{2}  \, \FI(\rho_t \,\|\, \nu_t) \le -c \alpha_t \, \KL(\rho_t \,\|\, \nu_t).$$
    Integrating this differential inequality yields
    \begin{align*}
        \KL(\rho_t \,\|\, \nu_t) \le \exp\left(-c \int_0^t \alpha_s \, ds \right) \KL(\rho_0 \,\|\, \nu_0)
    \end{align*}
    as claimed in~\eqref{Eq:SDPI_KL}.
\end{proof}

\section{Proofs for Section~\ref{Sec:ContractionFI} (Contraction of Relative Fisher Information)}

\subsection{Proof of Lemma~\ref{Lem:TimeDerivativeFI} (Time Derivative of Fisher Information along FP Channel)}
\label{Sec:TimeDerivativeFIProof}

\begin{replemma}{Lem:TimeDerivativeFI}
    Suppose $(\rho_t)_{t \ge 0}$ and $(\nu_t)_{t \ge 0}$ evolve following the Fokker-Planck equation~\eqref{Eq:FP}:
    \begin{subequations}\label{Eq:FPGeneral}
        \begin{align}
            \partial_t \rho_t &= -\nabla \cdot (\rho_t b_t) + \frac{c}{2} \Delta \rho_t \enspace, \label{Eq:FPrhot} \\
            \partial_t \nu_t &= -\nabla \cdot (\nu_t b_t) + \frac{c}{2} \Delta \nu_t \enspace.  \label{Eq:FPnut}
        \end{align}
    \end{subequations}
    Then for any $t \ge 0$:
    \begin{align*}
        \frac{d}{dt} \FI(\rho_t \,\|\, \nu_t) 
        = -c \, \E_{\rho_t} \left[\left\| \nabla^2 \log \frac{\rho_t}{\nu_t} \right\|^2_{\HS} \right] 
        - 2\E_{\rho_t}\left[\left\|\nabla \log \frac{\rho_t}{\nu_t} \right\|^2_{(-c \, \nabla^2 \log \nu_t + (\nabla b_t)_{\sym})}\right].
    \end{align*}
\end{replemma}
\begin{proof}
    Let $f_t = -\log \rho_t$ be the negative log-density of $\rho_t$, and note the relation:
    \begin{align*}
        \Delta \rho_t &= - \rho_t \Delta f_t + \rho_t \|\nabla f_t\|^2.
    \end{align*}
    Since $\rho_t$ evolves following the Fokker-Planck equation~\eqref{Eq:FPrhot}, we can compute the evolution for $f_t$:
    \begin{align}
        \partial_t f_t = -\frac{1}{\rho_t} \partial_t \rho_t 
        &= -\frac{1}{\rho_t} \left( -\langle \nabla \rho_t, b_t \rangle - \rho_t \nabla \cdot b_t + \frac{c}{2} \Delta \rho_t \right) \notag \\
        &= -\langle \nabla f_t, b_t \rangle + \nabla \cdot b_t + \frac{c}{2} \Delta f_t - 
        \frac{c}{2} \|\nabla f_t\|^2. \label{Eq:ft}
    \end{align}
    Similarly, let $g_t = -\log \nu_t$. 
    Since $\nu_t$ evolves following~\eqref{Eq:FPnut}, we can compute the evolution for $g_t$:
    \begin{align}\label{Eq:gt}
        \partial_t g_t = -\langle \nabla g_t, b_t \rangle + \nabla \cdot b_t + \frac{c}{2} \Delta g_t - 
        \frac{c}{2} \|\nabla g_t\|^2.
    \end{align}

    In the computation below, we will use the following relations:
    \begin{subequations}
        \begin{align}
            \rho_t \nabla f_t &= -\nabla \rho_t \label{Eq:Rel1} \\
            \rho_t \nabla^2 f_t &= -\nabla^2 \rho_t + \rho_t (\nabla f_t) (\nabla f_t)^\top. \label{Eq:Rel2}
        \end{align}
    \end{subequations}
    We also use the \textit{Bochner's formula}, which states that for smooth $u \colon \R^d \to \R$,
    \begin{align}\label{Eq:Bochner}
        \langle \nabla u, \nabla \Delta u \rangle = \frac{1}{2} \Delta \|\nabla u\|^2 - \|\nabla^2 u\|^2_{\HS}.
    \end{align}

    Using relation~\eqref{Eq:Rel1} and integration by parts, we can write the relative Fisher information as:
    \begin{align}
        \FI(\rho_t \,\|\, \nu_t)
        \,&=\, \E_{\rho_t}\left[\|\nabla f_t - \nabla g_t\|^2 \right] \notag \\
        &= \E_{\rho_t}\left[\|\nabla f_t\|^2 + \|\nabla g_t\|^2 - 2 \langle \nabla f_t, \nabla g_t \rangle \right] \notag \\
        &= \blue{\E_{\rho_t}\left[\|\nabla f_t\|^2 \right]} +  
        \red{\E_{\rho_t}\left[\|\nabla g_t\|^2\right]} 
        \purple{- 2 \E_{\rho_t}\left[\Delta g_t \right]}. \label{Eq:FIt}
    \end{align}
    We compute the time derivative of each term above, spelling out the computations below explicitly, and color coding them for clarity.
    In the computations below, we assume the density functions $\rho_t, \nu_t$ are smooth and decay sufficiently fast at infinity.
    Hence, we can differentiate under the integral sign; furthermore, when we perform integration by parts, all the boundary terms vanish.
    
    \paragraph{(I) First term:} 
    The first term in~\eqref{Eq:FIt} is the Fisher information of $\rho_t$.    
    We compute its time derivative using the Fokker-Planck equation~\eqref{Eq:FPrhot} and the formula~\eqref{Eq:ft} to get:
    \begin{subequations}
    \begin{align}
        \frac{d}{dt} \E_{\rho_t}\left[\|\nabla f_t\|^2 \right] 
        &= \int_{\R^d} (\partial_t \rho_t) \,  \|\nabla f_t\|^2 \, dx + 2\int_{\R^d} \rho_t \, \langle \nabla f_t, \nabla \partial_t f_t \rangle \, dx \notag \\
        &= -\int_{\R^d} \nabla \cdot (\rho_t b_t) \,  \|\nabla f_t\|^2 \, dx + \frac{c}{2} \int_{\R^d} (\Delta \rho_t) \,  \|\nabla f_t\|^2 \, dx  \label{Eq:Calc1a}  \\
        &\qquad
        + 2 \int_{\R^d} \rho_t \, \langle \nabla f_t, \nabla ( -\langle \nabla f_t, b_t \rangle + \nabla \cdot b_t) \rangle \, dx \label{Eq:Calc1b} \\
        &\qquad 
        + c \int_{\R^d} \rho_t \, \langle \nabla f_t, \nabla (\Delta f_t) \rangle \, dx 
        - c \int_{\R^d} \rho_t \, \langle \nabla f_t, \nabla \|\nabla f_t\|^2 \rangle \, dx. \label{Eq:Calc1c} 
    \end{align}
    \end{subequations}
    We calculate the terms above one by one.
    \begin{enumerate}
        \item The first term in~\eqref{Eq:Calc1a} is, by integration by parts:
        \begin{align*}
            -\int_{\R^d} \nabla \cdot (\rho_t b_t) \,  \|\nabla f_t\|^2 \, dx
            &= \int_{\R^d} \rho_t \langle b_t, \, \nabla (\|\nabla f_t\|^2) \rangle \, dx 
            = 2\int_{\R^d} \rho_t \langle b_t, \, (\nabla^2 f_t) \, \nabla f_t \rangle \, dx.            
        \end{align*}
        \item The second term in~\eqref{Eq:Calc1a} is, by integration by parts:
        \begin{align*}
            \frac{c}{2} \int_{\R^d} (\Delta \rho_t) \,  \|\nabla f_t\|^2 \, dx 
            = \frac{c}{2} \int_{\R^d} \rho_t \, \Delta  \|\nabla f_t\|^2 \, dx.
        \end{align*}
        \item The term in~\eqref{Eq:Calc1b} is, by distributing the gradient,
        \begin{align*}
            &2 \int_{\R^d} \rho_t \, \langle \nabla f_t, \nabla ( -\langle \nabla f_t, b_t \rangle + \nabla \cdot b_t) \rangle \, dx \\
            &= -2 \int_{\R^d} \rho_t \, \langle \nabla f_t, (\nabla^2 f_t) \, b_t \rangle \, dx 
            - 2 \int_{\R^d} \rho_t \, \langle \nabla f_t, (\nabla b_t) \, \nabla f_t \rangle \, dx 
            +2 \int_{\R^d} \rho_t \langle \nabla f_t, \, \Tr(\nabla^2 b_t) \rangle \, dx.
        \end{align*}
        \item The first term in~\eqref{Eq:Calc1c} is, by Bochner's formula~\eqref{Eq:Bochner}:
        \begin{align*}
            c \int_{\R^d} \rho_t \, \langle \nabla f_t, \nabla (\Delta f_t) \rangle \, dx
            &= \frac{c}{2} \int_{\R^d} \rho_t \, \Delta \|\nabla f_t\|^2 \, dx
            - c \int_{\R^d} \rho_t \, \|\nabla^2 f_t\|^2_{\HS} \, dx.
        \end{align*}
        \item The second term in~\eqref{Eq:Calc1c} is, using relation~\eqref{Eq:Rel2} and integration by parts:
        \begin{align*}
            - c \int_{\R^d} \rho_t \, \langle \nabla f_t, \nabla \|\nabla f_t\|^2 \rangle \, dx
            &= c \int_{\R^d} \langle \nabla \rho_t, \nabla \|\nabla f_t\|^2 \rangle \, dx 
            = - c \int_{\R^d} \rho_t \, \Delta \|\nabla f_t\|^2 dx.
        \end{align*}
    \end{enumerate}
    Combining the above, we see that the terms involving $\langle b_t, \, (\nabla^2 f_t) \, \nabla f_t \rangle$ and $\Delta \|\nabla f_t\|^2$ vanish, and we are left with:
    \begin{align}
        \blue{\frac{d}{dt} \E_{\rho_t}\left[\|\nabla f_t\|^2 \right] 
        = -c \, \E_{\rho_t}\left[\left\|\nabla^2 f_t \right\|^2_{\HS} \right] 
        -2 \E_{\rho_t}\left[\langle \nabla f_t, (\nabla b_t) \, \nabla f_t \rangle  \right] 
        + 2 \E_{\rho_t}\left[\langle \nabla f_t, \, \Tr(\nabla^2 b_t) \rangle  \right].} \label{Eq:FirstTerm-1}
    \end{align}

    \paragraph{(II) Second term:} We compute the time derivative on the second term in~\eqref{Eq:FIt}, using the Fokker-Planck equation~\eqref{Eq:FPrhot} and the formula~\eqref{Eq:gt}:
    \begin{subequations}
    \begin{align}
        \frac{d}{dt} \E_{\rho_t}\left[\|\nabla g_t\|^2 \right]
        &= \int_{\R^d} (\partial_t \rho_t) \|\nabla g_t\|^2 \, dx \, + \,  2 \int_{\R^d} \rho_t \, \langle \nabla g_t, \nabla (\partial_t g_t) \rangle \, dx \notag \\
        &= -\int_{\R^d} \nabla \cdot (\rho_t b_t) \, \|\nabla g_t\|^2 \, dx  + \frac{c}{2} \int_{\R^d} \Delta \rho_t \, \|\nabla g_t\|^2 \, dx 
        \label{Eq:Calc2a}  \\
        &\qquad
        + 2 \int_{\R^d} \rho_t \, \langle \nabla g_t, \nabla ( -\langle \nabla g_t, b_t \rangle) \rangle \, dx
        + 2 \int_{\R^d} \rho_t \, \langle \nabla g_t, \nabla (\nabla \cdot b_t) \rangle \, dx
        \label{Eq:Calc2b} \\
        &\qquad + c \int_{\R^d} \rho_t \, \langle \nabla g_t, \nabla \Delta g_t \rangle \, dx 
         - c \int_{\R^d} \rho_t \, \langle \nabla g_t, \nabla \|\nabla g_t\|^2 \rangle \, dx. \label{Eq:Calc2c} 
    \end{align}        
    \end{subequations}
    We calculate the terms above one by one.
    \begin{enumerate}
        \item The first term in~\eqref{Eq:Calc2a} is, by integration by parts:
        \begin{align*}
            -\int_{\R^d} \nabla \cdot (\rho_t b_t) \, \|\nabla g_t\|^2 \, dx
            &= \int_{\R^d} \rho_t \langle b_t, \nabla (\|\nabla g_t\|^2) \rangle \, dx 
            = 2\int_{\R^d} \rho_t \langle b_t, (\nabla^2 g_t) \, \nabla g_t \rangle \, dx.
        \end{align*}
        \item The second term in~\eqref{Eq:Calc2a} is, by integration by parts and relation~\eqref{Eq:Rel1}:
        \begin{align}
            \frac{c}{2} \int_{\R^d} (\Delta \rho_t) \,  \|\nabla g_t\|^2 \, dx 
            &= \frac{c}{2} \int_{\R^d} \rho_t \langle \nabla f_t, \, \nabla \|\nabla g_t\|^2 \rangle \, dx \notag \\
            &=\, c \int_{\R^d} \rho_t \, \langle \nabla f_t, \, (\nabla^2 g_t) \, \nabla g_t \rangle \, dx. \label{Eq:Calc2a-a}
        \end{align}
        \item The first term in~\eqref{Eq:Calc2b} is, by chain rule,
        \begin{align*}
            -2 \int_{\R^d} \rho_t \, \langle \nabla g_t, \nabla ( \langle \nabla g_t, b_t \rangle) \rangle \, dx 
            &= -2 \int_{\R^d} \rho_t \, \langle \nabla g_t, (\nabla^2 g_t) \, b_t \rangle \, dx
            \,-\, 2 \int_{\R^d} \rho_t \, \langle \nabla g_t, (\nabla b_t) \, \nabla g_t \rangle \, dx.
        \end{align*}
        \item The second term in~\eqref{Eq:Calc2b} is:
        \begin{align*}
            2 \int_{\R^d} \rho_t \, \langle \nabla g_t, \nabla (\nabla \cdot b_t) \rangle \, dx 
            &= 2 \int_{\R^d} \rho_t \langle \nabla g_t, \, \Tr(\nabla^2 b_t) \rangle \, dx \\
            &= 2 \int_{\R^d} \rho_t \langle \nabla f_t, \, (\nabla b_t) \nabla g_t \rangle \, dx
            \,-\, 2 \int_{\R^d} \rho_t \langle \nabla b_t, \, \nabla^2 g_t \rangle_{\HS} \, dx
        \end{align*}
        where the last equality above follows from the relation~\eqref{Eq:Rel2} and integration by parts:
        \begin{align*}
            \int_{\R^d} \rho_t \langle \nabla f_t, \, (\nabla b_t) \nabla g_t \rangle \, dx
            &= -\int_{\R^d} \langle \nabla \rho_t, \, (\nabla b_t) \nabla g_t \rangle \, dx \\
            &= \int_{\R^d} \rho_t \, \nabla \cdot ((\nabla b_t) \nabla g_t) \, dx \\
            &= \int_{\R^d} \rho_t \, \left( \langle \Tr(\nabla^2 b_t), \, \nabla g_t \rangle + \langle \nabla b_t, \, \nabla^2 g_t \rangle_{\HS} \right) dx.
        \end{align*}
        \item The first term in~\eqref{Eq:Calc2c} is, by Bochner's formula~\eqref{Eq:Bochner} and the same calculation as in term~\eqref{Eq:Calc2a-a}:
        \begin{align*}
            c \int_{\R^d} \rho_t \, \langle \nabla g_t, \nabla \Delta g_t \rangle \, dx 
            &= \frac{c}{2} \int_{\R^d} \rho_t \, \Delta \|\nabla g_t\|^2 \, dx
            - c\int_{\R^d} \rho_t \, \|\nabla^2 g_t\|^2_{\HS} \, dx \\
            &= c \int_{\R^d} \rho_t \left(\left\langle \nabla f_t, \, (\nabla^2 g_t) \, \nabla g_t \right\rangle - \,\|\nabla^2 g_t\|^2_{\HS} \right) \, dx.
        \end{align*}
        \item The second term in~\eqref{Eq:Calc2c} is:
        \begin{align*}
            - c\int_{\R^d} \rho_t \, \langle \nabla g_t, \nabla \|\nabla g_t\|^2 \rangle \, dx
            &= -2c \int_{\R^d} \rho_t \, \left\langle \nabla g_t, \, (\nabla^2 g_t) \, \nabla g_t \right\rangle \, dx.
        \end{align*}
    \end{enumerate}
    Combining, we see the terms involving $\langle b_t, (\nabla^2 g_t) \, \nabla g_t \rangle$ vanish, and we obtain:
    \begin{subequations}\label{Eq:SecondTerm-1}
        \begin{align}
            &\red{\frac{d}{dt} \E_{\rho_t}\left[\|\nabla g_t\|^2 \right]} \notag \\
            &\red{= 2c \, \E_{\rho_t} \left[ \langle \nabla f_t, \, (\nabla^2 g_t) \, \nabla g_t \rangle \right] 
            - 2c \, \E_{\rho_t} \left[ \left\langle \nabla g_t, \, (\nabla^2 g_t) \, \nabla g_t \right\rangle \right] 
            - c \, \E_{\rho_t} \left[\|\nabla^2 g_t\|^2_{\HS} \right]} \\
            &\quad \red{- 2 \E_{\rho_t} \left[ \langle \nabla g_t, (\nabla b_t) \, \nabla g_t \rangle \right] + 2 \E_{\rho_t} \left[ \langle \nabla f_t, \, (\nabla b_t) \nabla g_t \rangle \right] - 2 \E_{\rho_t} \left[\langle \nabla b_t, \, \nabla^2 g_t \rangle_{\HS} \right].}
        \end{align}        
    \end{subequations}

    \paragraph{(III) Third term:} We compute the time derivative on the third term in~\eqref{Eq:FIt}, using the Fokker-Planck equation~\eqref{Eq:FPrhot} and the formula~\eqref{Eq:gt}, and integration by parts, to get:
    \begin{subequations}
    \begin{align}
        \frac{d}{dt} \left(-2 \E_{\rho_t}\left[\Delta g_t \right] \right)
        &= -2 \int_{\R^d} (\partial_t \rho_t) \, \Delta g_t \, dx \, - 2 \int_{\R^d} \rho_t \, \Delta (\partial_t g_t)  \, dx \notag \\
        &=  2 \int_{\R^d} \nabla \cdot (\rho_t b_t) \, (\Delta g_t) \, dx  
        -c \int_{\R^d} (\Delta \rho_t) \, (\Delta g_t) \, dx  \label{Eq:Calc3a} \\
        &\qquad +2 \int_{\R^d} \rho_t \, \Delta (\langle \nabla g_t, b_t \rangle - \nabla \cdot b_t)  \, dx   \label{Eq:Calc3b} \\
        &\qquad - c \int_{\R^d} \rho_t \, \Delta \Delta g_t \, dx 
        + c \int_{\R^d} \rho_t \, \Delta \|\nabla g_t\|^2 \, dx.   \label{Eq:Calc3c}
    \end{align}
    \end{subequations}
    We calculate the terms above one by one.
    \begin{enumerate}
        \item The first term in~\eqref{Eq:Calc3a} is, by integration by parts:
        \begin{align*}
            2 \int_{\R^d} \nabla \cdot (\rho_t b_t) \, (\Delta g_t) \, dx  
            &= -2 \int_{\R^d} \rho_t \, \langle b_t, \, \nabla \Delta g_t \rangle \, dx.  
        \end{align*}
        \item For the second term in~\eqref{Eq:Calc3a}, we can write it using integration by parts as follows:
        \allowdisplaybreaks
        \begin{align}
            -c \int_{\R^d} (\Delta \rho_t) \, (\Delta g_t) \, dx
            &= -c \int_{\R^d} \left(\sum_{i=1}^d \frac{\partial^2}{\partial x_i^2} \rho_t(x) \right) \, \left(\sum_{j=1}^d \frac{\partial^2}{\partial x_j^2} g_t(x) \right) \, dx \notag \\
            &= -c \int_{\R^d} \sum_{i,j=1}^d  \left(\frac{\partial^2}{\partial x_i^2} \rho_t(x) \right) \, \left(\frac{\partial^2}{\partial x_j^2} g_t(x) \right) \, dx \notag \\
            &= -c \int_{\R^d} \sum_{i,j=1}^d  \rho_t(x) \, \left(\frac{\partial^4}{\partial x_i^2 \, \partial x_j^2} g_t(x) \right) \, dx \notag \\
            &= -c \int_{\R^d} \sum_{i,j=1}^d \left(\frac{\partial^2}{\partial x_i \, \partial x_j} \rho_t(x) \right) \left(\frac{\partial^2}{\partial x_i \, \partial x_j} g_t(x) \right) \, dx \notag \\
            &= -c \int_{\R^d} \langle \nabla^2 \rho_t, \nabla^2 g_t \rangle_{\HS} \, dx \notag \\
            &= c \, \int_{\R^d} \rho_t \, \left( \langle \nabla^2 f_t, \nabla^2 g_t \rangle_{\HS} - \left\langle \nabla f_t, \, (\nabla^2 g_t) \, \nabla f_t \right\rangle \right) \, dx \label{Eq:Calc3a-2}
        \end{align}
        where in the last step we have used relation~\eqref{Eq:Rel2}.
        \item The term in~\eqref{Eq:Calc3b} is, by the relation~\eqref{Eq:Rel1}, integration by parts, and distributing the gradient:
        \begin{align*}
            &2 \int_{\R^d} \rho_t \, \Delta ( \langle \nabla g_t, b_t \rangle - \nabla \cdot b_t)  \, dx \\
            &= 2 \int_{\R^d} \rho_t \, \langle \nabla f_t, \, \nabla ( \langle \nabla g_t, b_t \rangle - \nabla \cdot b_t) \rangle  \, dx \\
            &= 2 \int_{\R^d} \rho_t \left( \langle \nabla f_t, \, (\nabla^2 g_t) \, b_t \rangle + \langle \nabla f_t, \, (\nabla b_t) \, \nabla g_t \rangle - \langle \nabla f_t, \, \Tr(\nabla^2 b_t) \rangle \right) dx \\
            &= \int_{\R^d} \rho_t \, \left( 2\langle \nabla \Delta g_t, b_t \rangle +2 \langle \nabla^2 g_t, \nabla b_t \rangle_{\HS} 
            +2 \langle \nabla f_t, \, (\nabla b_t) \, \nabla g_t \rangle
            - 2 \langle \nabla f_t, \, \Tr(\nabla^2 b_t) \rangle\right) \, dx
        \end{align*}
        where in the last equality we have used the integration by parts to write:
        \begin{align*}
            \int_{\R^d} \rho_t \, \langle \nabla f_t, \, (\nabla^2 g_t) \, b_t \rangle \, dx 
            &= -\int_{\R^d} \langle \nabla \rho_t, \, (\nabla^2 g_t) \, b_t \rangle \, dx  \\
            &= \int_{\R^d} \rho_t \, \nabla \cdot \left((\nabla^2 g_t) \, b_t \right) \, dx \\
            &= \int_{\R^d} \rho_t \, \left(\langle \nabla \Delta g_t, b_t \rangle + \langle \nabla^2 g_t, \nabla b_t \rangle_{\HS} \right) dx            
        \end{align*}
        and we have also used the identity $\nabla \cdot (\nabla^2 g_t) = \nabla \Delta g_t$, which can be verified from the definition.
        \item The first term in~\eqref{Eq:Calc3c} is, by integration by parts, equal to the second term in~\eqref{Eq:Calc3a}, which is also equal to~\eqref{Eq:Calc3a-2}:
        \begin{align*}
             -c \int_{\R^d} \rho_t \, \Delta \Delta g_t \, dx
             \,&=\, - c \int_{\R^d} (\Delta \rho_t) \, (\Delta g_t) \, dx \\
             \,&=\, c \, \int_{\R^d}\rho_t \, \left( \langle \nabla^2 f_t, \nabla^2 g_t \rangle_{\HS} - \left\langle \nabla f_t, \, (\nabla^2 g_t) \, \nabla f_t \right\rangle \right) dx.
        \end{align*}
        \item The second term in~\eqref{Eq:Calc3c} is, by integration by parts, equal to twice the second term in~\eqref{Eq:Calc2a}, which is equal to twice the term in~\eqref{Eq:Calc2a-a}:
        \begin{align*}
            c \, \int_{\R^d} \rho_t \, \Delta \|\nabla g_t\|^2 \, dx
            = c \, \int_{\R^d} \Delta \rho_t \, \|\nabla g_t\|^2 \, dx
            &= 2c \, \int_{\R^d} \rho_t \, \langle \nabla f_t, \, (\nabla^2 g_t) \, \nabla g_t \rangle \, dx.
        \end{align*}
    \end{enumerate}
    Combining, we see the terms involving $\langle b_t, \, \nabla \Delta g_t \rangle$ vanish, so we get:
    \begin{subequations}\label{Eq:ThirdTerm-1}
    \begin{align}
        \purple{\frac{d}{dt} \left(-2 \E_{\rho_t}\left[\Delta g_t \right] \right)}
        &\purple{= 2c \, \E_{\rho_t}\left[ \langle \nabla^2 f_t, \nabla^2 g_t \rangle_{\HS} \right] 
        - 2c \, \E_{\rho_t}\left[ \left\langle \nabla f_t, \, (\nabla^2 g_t) \, \nabla f_t \right\rangle \right]} \\
        &\quad \purple{\, +2\E_{\rho_t}\left[ \langle \nabla^2 g_t, \nabla b_t \rangle_{\HS} \right] 
        +2\E_{\rho_t}\left[\langle \nabla f_t, \, (\nabla b_t) \, \nabla g_t \rangle \right]} \\
        &\quad \purple{\,- 2\E_{\rho_t}\left[\langle \nabla f_t, \, \Tr(\nabla^2 b_t) \rangle\right]
        + 2c \, \E_{\rho_t} \left[\langle \nabla f_t, \, (\nabla^2 g_t) \, \nabla g_t \rangle \right].}
    \end{align}
    \end{subequations}

    \paragraph{Combining the terms.}
    Combining the calculations in~\eqref{Eq:FirstTerm-1},~\eqref{Eq:SecondTerm-1}, and~\eqref{Eq:ThirdTerm-1}, we see the terms involving $\E_{\rho_t}\left[\langle \nabla f_t, \, \Tr(\nabla^2 b_t) \rangle  \right]$ and  $\E_{\rho_t} \left[\langle \nabla b_t, \, \nabla^2 g_t \rangle_{\HS} \right]$ vanish.
    After rearranging the terms, we find the time derivative of the relative Fisher information~\eqref{Eq:FIt} is equal to:
    \begin{align*}
        &\frac{d}{dt} \FI(\rho_t \,\|\, \nu_t)
        \,=\, \blue{\frac{d}{dt} \E_{\rho_t}\left[\|\nabla f_t\|^2 \right]} 
        + \red{\frac{d}{dt} \E_{\rho_t}\left[\|\nabla g_t\|^2\right]} 
        + \purple{\frac{d}{dt} \left(- 2 \E_{\rho_t}\left[\Delta g_t \right]\right)} \\
        &= \blue{-c \, \E_{\rho_t}\left[\left\|\nabla^2 f_t \right\|^2_{\HS} \right]} 
        \red{\,-\,c \, \E_{\rho_t} \left[\|\nabla^2 g_t\|^2_{\HS} \right]}  
        \purple{\,+\,  2c \, \E_{\rho_t}\left[ \langle \nabla^2 f_t, \nabla^2 g_t \rangle_{\HS} \right]} \\
        &\quad \purple{\,-\, 2c \, \E_{\rho_t}\left[ \left\langle \nabla f_t, \, (\nabla^2 g_t) \, \nabla f_t \right\rangle \right]} 
        \red{\,-\, 2c \, \E_{\rho_t} \left[ \left\langle \nabla g_t, \, (\nabla^2 g_t) \, \nabla g_t \right\rangle \right]} 
        \red{\,+\, 2c \, \E_{\rho_t} \left[ \langle \nabla f_t, \, (\nabla^2 g_t) \, \nabla g_t \rangle \right]} \\
        &\qquad\qquad \purple{\,+\, 2c \, \E_{\rho_t} \left[ \langle \nabla f_t, \, (\nabla^2 g_t) \, \nabla g_t \rangle \right]} \\
        &\quad 
        \blue{\, - \,  2 \E_{\rho_t}\left[\langle \nabla f_t, (\nabla b_t) \, \nabla f_t \rangle  \right]}
        \red{- 2 \E_{\rho_t} \left[ \langle \nabla g_t, (\nabla b_t) \, \nabla g_t \rangle \right]}
        \red{\,+\, 2 \E_{\rho_t} \left[ \langle \nabla f_t, \, (\nabla b_t) \nabla g_t \rangle \right]} \\
        &\qquad\qquad 
         \purple{\,+\, 2 \E_{\rho_t} \left[ \langle \nabla f_t, \, (\nabla b_t) \nabla g_t \rangle \right]} \\
        &= -c \, \E_{\rho_t} \left[\left\| \nabla^2 f_t - \nabla^2 g_t \right\|^2_{\HS} \right] 
        - 2c \, \E_{\rho_t}\left[\left\|\nabla f_t - \nabla g_t \right\|^2_{\nabla^2 g_t} \right]
        - 2\E_{\rho_t}\left[\left\|\nabla f_t - \nabla g_t \right\|^2_{(\nabla b_t)_{\sym}} \right] \\
        &= -c \, \E_{\rho_t} \left[\left\| \nabla^2 f_t - \nabla^2 g_t \right\|^2_{\HS} \right] 
        - 2\E_{\rho_t}\left[\left\|\nabla f_t - \nabla g_t \right\|^2_{(c \, \nabla^2 g_t + (\nabla b_t)_{\sym})} \right]
    \end{align*}
    as desired.
    In the third equality above, we use the property that for all $u,v \in \R^d$ and $A \in \R^{d \times d}$:
    $$\langle u,Av \rangle + \langle v,Au \rangle = \langle u,(A+A^\top)v \rangle = 2\langle u,A_{\sym} v \rangle.$$
\end{proof}

\subsection{Proof of Theorem~\ref{Thm:FisherInfoHeatFlow} (SDPI along Gaussian Channel)}
\label{Sec:FisherInfoHeatFlowProof}

\begin{proof}[Proof of Theorem~\ref{Thm:FisherInfoHeatFlow}]
    For the Gaussian channel~\eqref{Eq:GaussianChannel}, which is the case $b_t = 0$ and $c=1$, the time derivative formula from Lemma~\ref{Lem:TimeDerivativeFI} simplifies to:
    \begin{align}\label{Eq:ddtFIGaussianChannel}
        \frac{d}{dt} \FI(\rho_t \,\|\, \nu_t) = -\E_{\rho_t} \left[\left\| \nabla^2 \log \frac{\rho_t}{\nu_t} \right\|^2_{\HS} \right] - 2\E_{\rho_t}\left[\left\|\nabla \log \frac{\rho_t}{\nu_t} \right\|^2_{(-\nabla^2 \log \nu_t)}\right].
    \end{align}
    The first term in the right-hand side above is the second-order Fisher information, which is non-negative: 
    $\E_{\rho_t} \big[\big\| \nabla^2 \log \frac{\rho_t}{\nu_t} \big\|^2_{\HS} \big] \ge 0$.
    In parts (i), (ii), and (iv) below we drop this first term, and only consider the second term which is a weighted relative Fisher information.
    In part (iii), we use the Poincar\'e inequality to exploit this second-order Fisher information to improve the convergence rate.

    Recall along the Gaussian channel~\eqref{Eq:GaussianChannel}, we have $\rho_t = \rho_0 \ast \N(0, tI)$ and $\nu_t = \nu_0 \ast \N(0, tI)$.

    \paragraph{Part (i):} Note the Gaussian distribution $\N(0, tI)$ is log-concave, and recall convolution with log-concave distributions preserves log-concavity~\cite[Proposition~3.5]{SW14}.
    Since $\nu_0$ is log-concave, $\nu_t = \nu_0 \ast \N(0, tI)$ is also log-concave, so $-\nabla^2 \log \nu_t \succeq 0$.
    Then from the formula~\eqref{Eq:ddtFIGaussianChannel} by dropping the first term:
    $$\frac{d}{dt} \FI(\rho_t \,\|\, \nu_t) \le -2\E_{\rho_t}\left[\left\|\nabla \log \frac{\rho_t}{\nu_t} \right\|^2_{(-\nabla^2 \log \nu_t)}\right] \le 0.$$
    This shows that relative Fisher information is non-increasing over time:
    $\FI(\rho_t \,\|\, \nu_t) \le \FI(\rho_0 \,\|\, \nu_0).$

    \paragraph{Part (ii):} Now assume $\nu_0$ is $\alpha$-SLC. 
    Since $\N(0, tI)$ is $(1/t)$-SLC, we recall $\nu_t = \nu_0 \ast \N(0, tI)$ is $\alpha_t = \frac{\alpha}{1+\alpha t}$-SLC~\cite[Theorem~3.7]{SW14}.
    Then from the formula~\eqref{Eq:ddtFIGaussianChannel} by dropping the first term:
    \begin{align}\label{Eq:ddtFIGaussianChannel2}
        \frac{d}{dt} \FI(\rho_t \,\|\, \nu_t) \le -2\alpha_t \, \E_{\rho_t}\left[\left\|\nabla \log \frac{\rho_t}{\nu_t} \right\|^2\right] = -2\alpha_t \, \FI(\rho_t \,\|\, \nu_t).
    \end{align}
    Integrating and plugging in the definition of $\alpha_t = \frac{\alpha}{1 + \alpha t} = \frac{d}{dt} \log(1 + \alpha t)$ give the desired result:
    \begin{align*}
    \FI(\rho_t \,\|\, \nu_t) 
        \,\le\, \exp\left(-2\int_0^t \alpha_s \, ds\right) \FI(\rho_0 \,\|\, \nu_0) \,=\, \frac{\FI(\rho_0 \,\|\, \nu_0)}{(1+\alpha t)^2}.
    \end{align*}

    \paragraph{Part (iii):}
    As in the proof of part (ii) above, $\nu_t = \nu_0 \ast \N(0, tI)$ is $\alpha_t = \frac{\alpha}{1+t\alpha}$-SLC.
    Furthermore, since $\N(0,tI)$ is $(1/t)$-SLC and thus satisfies $(1/t)$-Poincar\'e inequality, and $\rho_0$ satisfies $\beta$-Poincar\'e inequality by assumption, we know $\rho_t = \rho_0 \ast \N(0, tI)$ satisfies $\beta_t = \frac{\beta}{1+\beta t}$-Poincar\'e inequality~\cite[Corollary~3.1]{chafai2004entropies}.
    
    Since $\rho_0$ and $\nu_0$ are symmetric, $\rho_t = \rho_0 \ast \N(0, tI)$ and $\nu_t = \nu_0 \ast \N(0, tI)$ are also symmetric.
    In particular, $\log \nu_t(-x) = \log \nu_t(x)$, so $\nabla \log \nu_t$ is an odd function: $\nabla \log \nu_t(-x) = -\nabla \log \nu_t(x)$.
    Furthermore, since $\rho_t$ is symmetric ($\rho_t(-x) = \rho_t(x)$), this implies $\E_{\rho_t}[\nabla \log \nu_t] = 0$, since indeed:
    \begin{align*}
        \E_{\rho_t}[\nabla \log \nu_t]
        &= \int_{\R^d} \rho_t(x) \nabla \log \nu_t(x) \, dx \\
        &= \frac{1}{2} \int_{\R^d} \rho_t(x) \nabla \log \nu_t(x) \, dx + \frac{1}{2} \int_{\R^d} \rho_t(-x) \nabla \log \nu_t(-x) \, dx \\
        &= \frac{1}{2} \int_{\R^d} \rho_t(x) \left(\nabla \log \nu_t(x) + \nabla \log \nu_t(-x)\right) \, dx \\
        &= 0.
    \end{align*}
    Since $\rho_t = \rho_0 \ast \N(0,tI)$ has full support on $\R^d$,
    by integration by parts, $\E_{\rho_t}[\nabla \log \rho_t] = 0$.
    Combining, this shows that $\E_{\rho_t}[\nabla \log \frac{\rho_t}{\nu_t}] = 0$.
    Therefore, by the Poincar\'e inequality for $\rho_t$ applied to the function $\phi = \nabla \log \frac{\rho_t}{\nu_t}$,
    \begin{align*}
        \E_{\rho_t} \left[\left\| \nabla^2 \log \frac{\rho_t}{\nu_t} \right\|^2_{\HS} \right]
        &\ge \beta_t \, \Var_{\rho_t}\left(\nabla \log \frac{\rho_t}{\nu_t}\right) 
        = \beta_t \, \E_{\rho_t}\left[\left\|\nabla \log \frac{\rho_t}{\nu_t}\right\|^2\right] 
        = \beta_t \, \FI(\rho_t \,\|\, \nu_t).
    \end{align*}
    Therefore, by the formula~\eqref{Eq:ddtFIGaussianChannel}:
    \begin{align*}
        \frac{d}{dt} \FI(\rho_t \,\|\, \nu_t) 
        &= -\E_{\rho_t} \left[\left\| \nabla^2 \log \frac{\rho_t}{\nu_t} \right\|^2_{\HS} \right] - 2\E_{\rho_t}\left[\left\|\nabla \log \frac{\rho_t}{\nu_t} \right\|^2_{(-\nabla^2 \log \nu_t)}\right] \\
        &\le -\left(\beta_t + 2\alpha_t \right) \, \FI(\rho_t \,\|\, \nu_t).
    \end{align*}
    Integrating and plugging in the definitions of $\alpha_t = \frac{\alpha}{1+\alpha t}$ and $\beta_t = \frac{\beta}{1+\beta t}$ give the desired result:
    \begin{align*}
    \FI(\rho_t \,\|\, \nu_t) 
        \,\le\, \exp\left(-\int_0^t (\beta_s + 2\alpha_s) \, ds\right) \FI(\rho_0 \,\|\, \nu_0) \,=\, \frac{\FI(\rho_0 \,\|\, \nu_0)}{(1+\beta t)(1+\alpha t)^2}.
    \end{align*}

    \paragraph{Part (iv):}
    By the result of~\cite[Theorem~1.3]{brigati2024heat}, under the assumption on $\nu_0$, we have that $\nu_t = \nu_0 \ast \N(0,tI)$ satisfies
    $-\nabla^2 \log \nu_t(x) \succeq \alpha_t I$ for all $x \in \R^d$, where
    $$\alpha_t := \frac{1}{t}\left(1-\frac{1}{t} \left(\frac{L}{\alpha+1/t} + \frac{1}{\sqrt{\alpha + 1/t}}\right)^2\right).$$
    Note $\alpha_t$ may be negative for small $t > 0$, but we have $\alpha_t > 0$ for instance for all $t \ge \left(\frac{L}{\alpha} + \frac{1}{\sqrt{\alpha}}\right)^2$.
    By direct computation, we may observe that:
    $$\alpha_t = \frac{d}{dt} \left(\log (\alpha t+1) - \frac{L^2}{\alpha +1/t} - \frac{4L}{\sqrt{\alpha+1/t}}\right).$$
    By the same computation~\eqref{Eq:ddtFIGaussianChannel2} as in part (ii) above, which holds for $\alpha_t \in \R$, we have for all $t \ge 0$:
    \begin{align*}
    \FI(\rho_t \,\|\, \nu_t) 
        \,\le\, \exp\left(-2\int_0^t \alpha_s \, ds\right) \FI(\rho_0 \,\|\, \nu_0) \,=\, \frac{\FI(\rho_0 \,\|\, \nu_0)}{(1+\alpha t)^2} \, \exp\left(\frac{2tL^2}{\alpha t+1} + \frac{8L \sqrt{t}}{\sqrt{\alpha t+1}}\right).
    \end{align*}
    
\end{proof}

\subsection{Detail for Counterexample of DPI along the Gaussian Channel}
\label{Sec:CounterexampleDecayGaussianChannelProof}

The following is a detailed computation for Example~\ref{Ex:CounterDecrHeat}.
See Figure~\ref{Fig:CounterexampleDecreaseHeat} for an illustration.

\begin{proposition}
\label{Prop:CounterexampleDecayGaussianChannel}
    Let $\rho_0 = \N(0,1)$ on $\R$.
    Let $M \ge 2$ and $L \ge 2$ be arbitrary, and define $\nu_0 \propto e^{-g}$ where $g \colon \R \to \R$ is the function:
    \begin{align}\label{Eq:BadInit}
        g(x) = 
        \begin{cases}
            -\frac{M}{2} x^2 ~~~ & \text{ if } |x| \le L, \\
           \frac{1}{2} (x-L)^2 - M L(x-L) - \frac{M L^2}{2}  & \text{ if } x > L, \\
            \frac{1}{2} (x+L)^2 + M L(x+L) - \frac{M L^2}{2}  & \text{ if } x < -L.
        \end{cases}
    \end{align}
    For $t \ge 0$, define $\rho_t = \rho_0 \ast \N(0,t)$ and $\nu_t = \nu_0 \ast \N(0,t)$, which are the output of the Gaussian channel~\eqref{Eq:GaussianChannel} at time $t$.
    Then for small $t > 0$, the relative Fisher information $t \mapsto \FI(\rho_t\,\|\,\nu_t)$ is increasing, and in particular,
    $$\FI(\rho_t\,\|\,\nu_t) > \FI(\rho_0\,\|\,\nu_0).$$
\end{proposition}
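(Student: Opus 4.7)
The plan is to apply the time-derivative formula of Lemma~\ref{Lem:TimeDerivativeFI} and show that the right derivative of $t \mapsto \FI(\rho_t\,\|\,\nu_t)$ at $t = 0$ is strictly positive; the claim then follows by continuity. Specializing Lemma~\ref{Lem:TimeDerivativeFI} to the Gaussian channel ($b_t \equiv 0$, $c = 1$) in dimension $d = 1$ gives
$$\frac{d}{dt}\FI(\rho_t\,\|\,\nu_t) = -\E_{\rho_t}\!\left[\left((\log(\rho_t/\nu_t))''\right)^2\right] + 2\,\E_{\rho_t}\!\left[\left((\log(\rho_t/\nu_t))'\right)^2 \,(\log \nu_t)''\right].$$
Since $\nu_0 \propto e^{-g}$ is only $C^1$ (its second derivative jumps at $\pm L$), this identity cannot be applied at $t = 0$ directly; the fix is to apply it for $t > 0$, where $\nu_t$ is $C^\infty$ by the regularizing effect of the Gaussian channel, and to pass to the limit $t \to 0^+$.

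From the explicit forms of $\rho_0 = \N(0,1)$ and $g$ in~\eqref{Eq:BadInit}, a direct calculation gives $(\log \nu_0)''(x) = M$ on $(-L,L)$ and $(\log \nu_0)''(x) = -1$ on $\{|x|>L\}$; furthermore, on $(-L,L)$ one has $(\log(\rho_0/\nu_0))'(x) = -(M+1)x$ and $(\log(\rho_0/\nu_0))''(x) = -(M+1)$, while on $\{|x|>L\}$ one has $(\log(\rho_0/\nu_0))'(x) = -\operatorname{sgn}(x)\,L(M+1)$ and $(\log(\rho_0/\nu_0))''(x) = 0$. Substituting and integrating against $\rho_0$ yields
$$\lim_{t\to 0^+}\frac{d}{dt}\FI(\rho_t\,\|\,\nu_t) = (M+1)^2\,\bigl(-\mathbb{P}(|X|<L) + 2M\,\E[X^2\mathbf{1}_{\{|X|<L\}}] - 2L^2\,\mathbb{P}(|X|>L)\bigr),$$
with $X \sim \N(0,1)$; denote the bracketed quantity by $I(M,L)$.

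It remains to verify $I(M,L) > 0$ whenever $M, L \ge 2$. Since $I(\cdot,L)$ is linear in $M$ with positive slope $2\,\E[X^2\mathbf{1}_{\{|X|<L\}}]$, it suffices to treat $M = 2$. Using monotonicity of $L \mapsto \E[X^2\mathbf{1}_{\{|X|<L\}}]$ and the Mills-type fact that $L \mapsto L^2\mathbb{P}(|X|>L)$ is decreasing for $L \ge \sqrt{2}$, one reduces further to $L = 2$, where a direct numerical evaluation ($\mathbb{P}(|X|<2) \approx 0.954$, $\E[X^2\mathbf{1}_{\{|X|<2\}}] \approx 0.739$, $\mathbb{P}(|X|>2) \approx 0.046$) gives $I(2,2) > 1.5 > 0$.

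The main obstacle is the regularity issue: because $g''$ has jump discontinuities at $\pm L$, the passage to the limit $t \to 0^+$ is not immediate from Lemma~\ref{Lem:TimeDerivativeFI}. This can be justified by dominated convergence, using that $\|g''\|_\infty < \infty$ (so $(\log \nu_t)''$ stays uniformly bounded as $t \to 0^+$) together with the moment bounds on $\rho_0$. A cleaner alternative is to mollify $g$ into a $C^2$ function $g_\varepsilon$, apply the formula and the above computation for the mollified problem at $t = 0$, and then send $\varepsilon \to 0$ using that the mollification can be chosen to perturb the explicit integrals only slightly.
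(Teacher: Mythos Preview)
Your approach is the same as the paper's: apply the time-derivative formula from Lemma~\ref{Lem:TimeDerivativeFI} at $t=0$, compute the resulting expression explicitly using the piecewise form of $g'$ and $g''$, and verify the sign. The paper arrives at exactly your expression $(M+1)^2 I(M,L)$ and then bounds it analytically via $\mathbb{P}(|X|\le L)\le 1$, $\E[X^2\mathbf{1}_{\{|X|\le L\}}]>\tfrac12$, and $L^2\,\rho_0((L,\infty))<\tfrac14$ for $L\ge 2$, obtaining the clean lower bound $(M-2)(M+1)^2\ge 0$ (strict by the strict inequality in the second estimate). Two remarks on your version: first, your reduction ``to $L=2$'' has a small slip, since $-\mathbb{P}(|X|<L)$ is \emph{decreasing} in $L$, not increasing; the fix is trivial (replace $-\mathbb{P}(|X|<L)$ by $-1$ before plugging in $L=2$ for the other two terms), and your numerical margin easily absorbs this. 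Second, you are more careful than the paper about regularity: the paper applies the formula directly at $t=0$ without commenting on the jump in $g''$, whereas your mollification/dominated-convergence remark is the honest justification.
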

\begin{proof}
    We show that the time derivative of $t \mapsto \FI(\rho_t\,\|\,\nu_t)$ is positive at $t = 0$; then by continuity, the time derivative will still be positive for sufficiently small $t > 0$, which means $t \mapsto \FI(\rho_t\,\|\,\nu_t)$ is increasing.
    Since we are in $1$-dimension, we denote derivative by $'$ and second derivative by $''$.

    Recall by the time derivative formula~\eqref{Eq:ddtFIGaussianChannel} along the Gaussian channel, we have:
    \begin{align}
        \frac{d}{dt} \FI(\rho_t\,\|\,\nu_t) \Big|_{t = 0} 
        &= -\E_{\rho_0}\left[\left(\left(\log \frac{\rho_0}{\nu_0}\right)^{''}\right)^2\right] - 2\E_{\rho_0}\left[(-\log \nu_0)'' \, \left(\left(\log \frac{\rho_0}{\nu_0}\right)^{'}\right)^2 \right] \notag \\
        &= -\E_{\rho_0}\left[\left(-1 + g''(X) \right)^2 \right] - 2\E_{\rho_0}\left[g''(X) \, (-X+g'(X))^2 \right]. \label{Eq:ExCounterCalc1}
    \end{align}
    By construction, we have:
    \begin{align}\label{Eq:Formulag1}
        g'(x) &= \begin{cases}
            -M x ~~~ & \text{ if } |x| \le L, \\
            x-(M+1) L & \text{ if } x > L, \\
            x+(M+1) L & \text{ if } x < -L
        \end{cases}
    \end{align}
    and
    \begin{align}\label{Eq:Formulag2}
        g''(x) &= \begin{cases}
            -M ~~~ & \text{ if } |x| \le L, \\
            1 & \text{ if } |x| > L.
        \end{cases}
    \end{align}
    Then we can bound the first term in~\eqref{Eq:ExCounterCalc1} by:
    \begin{align*}
        -\E_{\rho_0}\left[\left(-1 + g''(X) \right)^2 \right]
        &= -(M+1)^2 \, \rho_0([-L,L]) 
        \,\ge\, -(M+1)^2.
    \end{align*}
    Furthermore, with $\rho_0 = \N(0,1)$, we can compute the second term in~\eqref{Eq:ExCounterCalc1}:
    \begin{align*}
        - 2\E_{\rho_0}\left[g''(X) \, (-X+g'(X))^2 \right]
        = 2 M (M+1)^2 \E_{\rho_0}[X^2 \, \mathbf{1}\{|X| \le L\}] - 4(M+1)^2 L^2 \rho_0((L,\infty)).
    \end{align*}
    We choose $L > 0$ such that $\E_{\rho_0}[X^2 \, \mathbf{1}\{|X| \le L\}] > \frac{1}{2} \E_{\rho_0}[X^2] = \frac{1}{2}$ and $L^2 \rho_0((L,\infty)) < \frac{1}{4}$; note that any $L \ge 2$ works.
    Then combining the bounds above in~\eqref{Eq:ExCounterCalc1}, we get:
    \begin{align*}
        \frac{d}{dt} \FI(\rho_t\,\|\,\nu_t) \Big|_{t = 0} 
        &= -\E_{\rho_0}\left[\left(-1 + g''(X) \right)^2 \right] - 2\E_{\rho_0}\left[g''(X) \, (-X+g'(X))^2 \right] \\
        &> -(M+1)^2 + M(M+1)^2 - (M+1)^2 \\
        &= (M-2)(M+1)^2 \\
        &\ge 0
    \end{align*}
    where the last inequality holds since $M \ge 2$.  
\end{proof}

\subsection{Proof of Theorem~\ref{Thm:FisherInfoOUFlow} (Eventual SDPI along the OU Channel)}
\label{Sec:SDPI-OU-Proof}

\begin{proof}[Proof of Theorem~\ref{Thm:FisherInfoOUFlow}]
    For the OU channel~\eqref{Eq:OU}, we have $b_t(x) = \gamma x$ and $c=2$, so the time derivative formula from Lemma~\ref{Lem:TimeDerivativeFI} becomes:
    \begin{align}\label{Eq:ddtFIOUChannel}
        \frac{d}{dt} \FI(\rho_t \,\|\, \nu_t) = -2\E_{\rho_t} \left[\left\| \nabla^2 \log \frac{\rho_t}{\nu_t} \right\|^2_{\HS} \right] - 2\E_{\rho_t}\left[\left\|\nabla \log \frac{\rho_t}{\nu_t} \right\|^2_{(-2\nabla^2 \log \nu_t - \gamma I)}\right].
    \end{align}
    The first term in the right-hand side above is the second-order Fisher information, which is non-negative: 
    $\E_{\rho_t} \big[\big\| \nabla^2 \log \frac{\rho_t}{\nu_t} \big\|^2_{\HS} \big] \ge 0$.
    In parts (i) we drop this first term, and only consider the second term which is a weighted relative Fisher information.
    In part (ii), we apply the Poincar\'e inequality to exploit this second-order Fisher information to improve the convergence rate.

    \medskip
    \noindent
    \textbf{Part (i):} 
    We recall the explicit solution~\eqref{Eq:OU} of OU at time $t \ge 0$ is $X_t \stackrel{d}{=} e^{-\gamma t} X_0 + \sqrt{\frac{1-e^{-2\gamma t}}{\gamma}} \, Z$ where $X_0 \sim \nu_0$, $Z \sim \N(0,I)$ is independent, and $X_t \sim \nu_t$.
    Since $\nu_0$ is $\alpha$-SLC by assumption, the law of $e^{-\gamma t} X_0$ is $(e^{2\gamma t} \, \alpha)$-SLC.
    Note the law of $\sqrt{\frac{1-e^{-2\gamma t}}{\gamma}} \, Z$ is $(\frac{\gamma}{1-e^{-2\gamma t}})$-SLC.
    Then from the preservation of strong log-concavity under convolution~\cite[Theorem~3.7]{SW14}, we know $\nu_t$ is $\alpha_t$-SLC, where
    $$\alpha_t = \left(\frac{e^{-2\gamma t}}{\alpha} + \frac{1-e^{-2\gamma t}}{\gamma}\right)^{-1} = \frac{\gamma \alpha}{\alpha + e^{-2\gamma t}(\gamma-\alpha)}.$$
    Note that
    \begin{align}\label{Eq:DefLambdat}
        \gamma_t := 2\alpha_t - \gamma 
        \,=\, \gamma - \frac{2\gamma (\gamma-\alpha) e^{-2\gamma t}}{\alpha + e^{-2\gamma t}(\gamma-\alpha)}
        = \frac{d}{dt} \left(\gamma t + \log\left(\alpha + e^{-2\gamma t}(\gamma-\alpha)\right)\right).
    \end{align}
    Then in this case, 
    $-2\nabla^2 \nu_t - \gamma I \succeq (2\alpha_t-\gamma) I = \lambda_t I$, and thus from the formula~\eqref{Eq:ddtFIOUChannel}, we obtain
    \begin{align*}
        \frac{d}{dt} \FI(\rho_t \,\|\, \nu_t) 
        &\le - 2\E_{\rho_t}\left[\left\|\nabla \log \frac{\rho_t}{\nu_t} \right\|^2_{(-2\nabla^2 \log \nu_t - \gamma I)}\right] 
        \le -2\lambda_t \, \FI(\rho_t \,\|\, \nu_t).
    \end{align*}
    Integrating the differential inequality above 
    yields the desired bound:
    \begin{align*}
        \FI(\rho_t \,\|\, \nu_t)
        \le \exp\left(-2\int_0^t \lambda_s \, ds\right) \FI(\rho_0 \,\|\, \nu_0)
        = \frac{\gamma^2 \, e^{-2\gamma t}}{(\alpha + e^{-2\gamma t}(\gamma-\alpha))^2} \, \FI(\rho_0 \,\|\, \nu_0).
    \end{align*}

    \medskip
    \noindent
    \textbf{Part (ii):} Since $\rho_0$ and $\nu_0$ are symmetric, along the OU channel~\eqref{Eq:OU} to $\N(0,\gamma^{-1}I)$, the solutions $\rho_t$ and $\nu_t$ are also symmetric, as can be seen from the explicit solution~\eqref{Eq:OUSol}.    
    Then as in the proof of Theorem~\ref{Thm:FisherInfoHeatFlow}(iii), we have $\E_{\rho_t}[\nabla \log \frac{\rho_t}{\nu_t}] = 0$.
    Furthermore, similar to the calculation in part (i) and using the composition rule for Poincar\'e inequality under convolution~\citep{chafai2004entropies}, since $\rho_0$ satisfies $\beta$-Poincar\'e inequality, along the OU channel~\eqref{Eq:OU}, $\rho_t$ satisfies $\beta_t$-Poincar\'e inequality where:
    $$\beta_t = \frac{\gamma \beta}{\beta + e^{-2\gamma t}(\gamma-\beta)} = \frac{d}{dt} \left( \gamma t + \frac{1}{2} \log\left( \beta + e^{-2\gamma t}(\gamma-\beta)\right)\right).$$
    Then by the Poincar\'e inequality for $\rho_t$, for the test function $\nabla \log \frac{\rho_t}{\nu_t}$, we have:
    \begin{align*}
        \E_{\rho_t} \left[\left\| \nabla^2 \log \frac{\rho_t}{\nu_t} \right\|^2_{\HS} \right]
        &\ge \beta_t \, \Var_{\rho_t}\left(\nabla \log \frac{\rho_t}{\nu_t}\right) 
        = \beta_t \, \E_{\rho_t}\left[\left\|\nabla \log \frac{\rho_t}{\nu_t}\right\|^2\right] 
        = \beta_t \, \FI(\rho_t \,\|\, \nu_t).
    \end{align*}
    We also recall our estimate from part (i) that $-2\nabla^2 \nu_t - \gamma I \succeq \lambda_t I$ where $\lambda_t$ is given in~\eqref{Eq:DefLambdat}.
    Then by the identity~\eqref{Eq:ddtFIOUChannel} along the OU channel~\eqref{Eq:OU}, we obtain:
    \begin{align*}
        \frac{d}{dt} \FI(\rho_t \,\|\, \nu_t) 
        &\le -2\left(\beta_t + \lambda_t \right) \, \FI(\rho_t \,\|\, \nu_t).
    \end{align*}
    Integrating the differential inequality above gives the desired result:
    \begin{align*}
    \FI(\rho_t \,\|\, \nu_t) 
        &\le\, \exp\left(-2\int_0^t (\beta_s + \lambda_s) \, ds\right) \FI(\rho_0 \,\|\, \nu_0) \\
        &=\, \frac{\gamma^3 \, e^{-4\gamma t}}{(\beta + e^{-2\gamma t} (\gamma-\beta))(\alpha+e^{-2 \gamma t}(\gamma-\alpha))^2} \, \FI(\rho_0 \,\|\, \nu_0).
    \end{align*}
\end{proof}

\section{Proofs for Section~\ref{Sec:ProximalSamplerAnalysis} (Analysis of the Proximal Sampler)}

\subsection{Review of the Rejection Sampling Implementation of the RGO}
\label{Sec:RejectionReview}

We review the algorithmic implementation of the restricted Gaussian oracle (RGO), which is required in each step of the Proximal Sampler (see Section~\ref{Sec:ProximalSamplerReview}).
Here we follow the simple rejection sampling implementation of the RGO from~\cite[Section~4.2]{CCSW22} for the setting when the original target distribution $\nu^X \propto e^{-g}$ is log-smooth.
We note that one can also implement the RGO using inexact rejection sampling to obtain an improved dimension dependence~\citep{fan2023improved}, and other works have shown how to implement the RGO for broader classes of distributions, see e.g.~\citep{gopi2022private,LC22,liang2023a,liang2024proximal}

Recall that the RGO~\eqref{Eq:RGO} needs to sample from the conditional distribution $\nu^{X \mid Y}$ with density, for each fixed $y \in \R^d$:
$$\nu^{X \mid Y}(x \mid y) \propto_x \exp \left( -g(x) - \frac{1}{2\eta} \|x-y\|^2 \right).$$
Define $f_y \colon \R^d \to \R$ by $f_y(x) :=  g(x) + \frac{1}{2\eta} \|x-y\|^2$, so that $\nu^{X \mid Y}(x \mid y) \propto \exp(-f_y(x))$.
Suppose that the potential function $g$ for the original target distribution $\nu^X \propto e^{-g}$ is $L$-smooth, which means $-LI \preceq \nabla^2 g(x) \preceq LI$ for all $x \in \R^d$.
If $0 < \eta < \frac{1}{L}$, then we observe that $f_y$ is an $(\frac{1}{\eta}-L)$-strongly convex and $(\frac{1}{\eta}+L)$-smooth function, with condition number $\kappa := \frac{1+\eta L}{1-\eta L}$.

We can perform rejection sampling on $\nu^{X \mid Y}(\cdot \mid y) \propto \exp(-f_y)$ via the following procedure:
\begin{enumerate}
    \item Compute the minimizer $x^*_y = \arg\min_{x \in \R^d} f_y(x)$; we can do this, e.g., via gradient descent.
    \item Repeat until acceptance: draw a sample $Z \sim \N\left(x^*_y, \frac{\eta}{1-\eta L} I\right)$ and accept it with probability $\exp\left(-f_y(Z) + f_y(x^*_y) + \frac{\eta}{2(1-\eta L)} \|Z-x^*_y\|^2\right)$.
\end{enumerate}
By standard properties of rejection sampling, the output of the procedure above is an exact sample $Z \sim \nu^{X \mid Y}(\cdot \mid y)$, and the expected number of iterations (draws of $Z$) until acceptance is at most $\kappa^{d/2} = \left(\frac{1+\eta L}{1-\eta L}\right)^{d/2}$; see~\cite[Theorem~7]{chewi2022query}.
Note that if we choose $\eta = \frac{1}{dL}$, then the expected number of iterations until acceptance is at most $\kappa^{d/2} = \left(\frac{1+ 1/d}{1- 1/d}\right)^{d/2} = O(1)$.
This shows that we can implement the RGO via rejection sampling with $O(1)$ expected iteration complexity.

\subsection{Proof of Theorem~\ref{Thm:FIProximalRate} (Convergence of Proximal Sampler in Relative Fisher Information)}
\label{Sec:FIProximalRateProof}

\begin{proof}[Proof of Theorem~\ref{Thm:FIProximalRate}]
    As we reviewed in Section~\ref{Sec:ProximalSamplerReviewKL}, we use stochastic process interpretations of the forward and backward steps of the Proximal Sampler.

    \medskip
    \noindent
    \textbf{(1) Forward step:}
    The first step of the Proximal sampling algorithm is a convolution with the Gaussian distribution: $\rho_k^Y = \rho_k^X \ast \N(0, \eta I)$, 
    and $\nu^Y = \nu^X \ast \N(0, \eta I)$.
    These are the outputs of the Gaussian channel~\eqref{Eq:GaussianChannel} for time $\eta$.
    Since $\nu^X$ is $\alpha$-SLC by assumption, by the SDPI along the Gaussian channel from Theorem~\ref{Thm:FisherInfoHeatFlow}(ii), we have:
    $$\FI(\rho_k^Y \,\|\, \nu^Y) \,\le\, \frac{\FI(\rho_k^X \,\|\, \nu^X)}{(1+\alpha \eta)^2}.$$

    \medskip
    \noindent
    \textbf{(2) Backward step:}
    As explained in~\cite[Section~A.1.4]{CCSW22}, the second step of the Proximal Sampler can be seen as the output of the reverse Gaussian channel.
    Concretely, define the distributions $\mu_0 = \nu^X$ and $\mu_t = \mu_0 \ast \N(0, tI)$ for $0 \le t \le \eta$, so $\mu_\eta = \nu^Y$, and $\mu_t$ evolves following the heat equation: $\partial_t \mu_t = \frac{1}{2} \Delta \mu_t$.    
    Let $\nu_t = \mu_{\eta - t}$, so $\nu_0 = \nu^Y$, $\nu_\eta = \nu^X$, and $\nu_t$ evolves following the backward heat equation~\eqref{Eq:BackwardHeatEq}:
    $$\partial_t \nu_t = -\frac{1}{2} \Delta \nu_t = -\nabla \cdot (\nu_t \nabla \log \nu_t) + \frac{1}{2} \Delta \nu_t.$$
    Now define the reverse Gaussian channel~\eqref{Eq:BackwardBM}:
    $$dX_t = \nabla \log \nu_t(X_t) \, dt + dW_t$$
    so that if $X_t$ evolves along this channel, then its distribution $X_t \sim \rho_t$ evolves following the Fokker-Planck equation:
    $$\partial_t \rho_t = -\nabla \cdot (\rho_t \nabla \log \nu_t) + \frac{1}{2} \Delta \rho_t.$$
    This channel has the property that if we initialize it at a point mass $\rho_0 = \delta_y$ for any $y \in \R^d$, the output at time $t = \eta$ is exactly the conditional distribution $\rho_\eta = \nu^{X \mid Y}(\cdot \mid y)$ that the RGO is supposed to sample from; see the explanation in~\cite[Section~A.1.2]{CCSW22}, see also the exposition in~\cite[Chapter~8.3]{C24}.
    In particular, if we initialize the reverse Gaussian channel from $\rho_0 = \rho_k^Y$, then by construction, the output at time $t = \eta$ is the distribution of the next iterate:
    $$\rho_\eta = \int_{\R^d} \rho_0(y) \, \nu^{X \mid Y}(\cdot \mid y) \, dy = \rho_{k+1}^X.$$    
    Thus, we have interpolated from $\nu_0 = \nu^Y$ to $\nu_\eta = \nu^X$, and from $\rho_0 = \rho_k^Y$ to $\rho_\eta = \rho_{k+1}^X$, as the evolution along the backward heat equation~\eqref{Eq:BackwardHeatEq} or the reverse Gaussian channel~\eqref{Eq:BackwardBM}, which is an instance of the Fokker-Planck channel~\eqref{Eq:FP} with drift function $b_t = \nabla \log \nu_t$ and $c=1$.
    In this case $-c\nabla^2 \log \nu_t + (\nabla b_t)_{\sym} = 0$, so the second term in the time derivative formula~\eqref{Eq:TimeDerivativeFI} vanishes.
    Therefore, along the backward heat equation, using the formula~\eqref{Eq:TimeDerivativeFI} from Lemma~\ref{Lem:TimeDerivativeFI}:
    $$\frac{d}{dt} \FI(\rho_t \,\|\, \nu_t) = -\E_{\rho_t} \left[\left\| \nabla^2 \log \frac{\rho_t}{\nu_t} \right\|^2_{\HS} \right] \le 0.$$
    Integrating from $t=0$ to $t=\eta$ yields:
    \begin{align*}
        \FI(\rho_{k+1}^X \,\|\, \nu^X) = \FI(\rho_\eta \,\|\, \nu_\eta) \,\le\, \FI(\rho_0 \,\|\, \nu_0) = \FI(\rho_k^Y \,\|\, \nu^Y).
    \end{align*}
    Iterating the recursions above yields the desired convergence rate in~\eqref{Eq:FIProximalRateTotal}.    
\end{proof}

\subsection{Proof of Corollary~\ref{Cor:IterationComplexity} (Iteration Complexity of Proximal Sampler in Relative Fisher Information)}
\label{Sec:IterationComplexityProof}

\begin{proof}[Proof of Corollary~\ref{Cor:IterationComplexity}]
Since we assume $\nu^X \propto e^{-g}$ is $\alpha$-SLC and is $L$-log-smooth, we have $\alpha I \preceq \nabla^2 g(x) \preceq LI$ for all $x \in \R^d$.
Since $x^*$ is the minimizer of $g$, we have $\nabla g(x^*) = 0$.
Then from $\rho_0^X = \N(x^*, \frac{1}{L} I)$, we can bound the initial relative Fisher information:
\begin{align*}
    \FI(\rho_0^X \,\|\, \nu) 
    &= \E_{\rho_0^X}\left[\left\| (Lx-\nabla g(x)) - (Lx^*-\nabla g(x^*))\right\|^2\right] \\
    &\le (L-\alpha)^2 \E_{\rho_0^X}\left[\left\| x-x^* \right\|^2\right] \\
    &= \frac{d (L-\alpha)^2}{L} \\
    &\le dL,
\end{align*}
where the first inequality above follows from the fact that $x \mapsto Lx-\nabla g(x)$ is $(L-\alpha)$-Lipschitz, because its Jacobian is $0 \preceq LI - \nabla^2 g(x) \preceq (L-\alpha) I$.

With step size $\eta = \frac{1}{dL}$, we can implement the RGO via rejection sampling with an $O(1)$ expected number of queries, as explained in Section~\ref{Sec:RejectionReview}. 
Then from the exponential convergence rate~\eqref{Eq:FIProximalRateTotal} of the Proximal Sampler from Theorem~\ref{Thm:FIProximalRate}, we have
\begin{align*}
    \FI(\rho_k^X \,\|\, \nu)
    \le \frac{\FI(\rho_0^X \,\|\, \nu^X)}{(1+\alpha \eta)^{2k}}
    \le \frac{dL}{(1+\frac{\alpha}{dL})^{2k}}
    \le dL \exp\left(-\frac{\alpha k}{dL}\right)
\end{align*}
where in the last inequality we use the bound $1+c \ge e^{c/2}$ which holds for $0 \le c = \frac{\alpha}{dL} \le 1$.

Then to reach $\FI(\rho_k^X \,\|\, \nu) \le \varepsilon$, it suffices to set $dL \exp\left(-\frac{\alpha k}{dL}\right) \le \varepsilon$, or equivalently, we run the Proximal Sampler for $k \ge \frac{dL}{\alpha} \log \frac{dL}{\varepsilon}$ iterations, as claimed.
\end{proof}

\paragraph{Acknowledgments.} The author is supported by NSF awards CCF-2403391 and CCF- 2443097.
The author thanks Yihong Wu, Santosh Vempala, Sinho Chewi, and Varun Jog for valuable discussions;
and Tinsel W.\ for guidance.

\addcontentsline{toc}{section}{References}
\bibliography{arxiv-fisher-v2.bbl}

\end{document}